\documentclass[aps,prd,showpacs,twocolumn,superscriptaddress,floatfix]{revtex4-2}  
\usepackage[T1]{fontenc}
\usepackage{lmodern} 
\usepackage{graphicx}  
\usepackage{dcolumn}   
\usepackage{bm}        
\usepackage{amssymb}   
\usepackage{amsmath}
\usepackage{bbold}
\usepackage{array}
\usepackage{makecell}
\usepackage{braket}
\usepackage{supertabular,booktabs}
\usepackage{listings}
\usepackage{graphicx}
\usepackage{comment}
\usepackage{needspace}

\usepackage{amsthm}
\usepackage{ragged2e}
\usepackage{etoolbox}

\AtBeginEnvironment{thebibliography}{\justifying}
\makeatletter
\renewenvironment{proof}[1][Proof]{%
  \par\pushQED{\qed}%
  \normalfont\topsep6\p@\@plus6\p@\relax
  \trivlist
  \item[\hskip\labelsep\bfseries #1.]%
}{%
  \popQED\endtrivlist\@endpefalse
}
\makeatother

\usepackage{titlesec} 
\usepackage[colorlinks,citecolor=blue,urlcolor=blue,hypertexnames=true]{hyperref}
\usepackage[caption=false]{subfig}

\usepackage[usenames,dvipsnames,svgnames,table]{xcolor} 
 
\begin{document}

\title{Quantum algorithm for solving differential equations using SLAC derivatives}

\author{Rakshit M. Gharat}
\email{rakshit.gharat@students.mq.edu.au}
\affiliation{School of Mathematical and Physical Sciences, Macquarie University, 2109 NSW, Australia}
\author{Gopikrishnan Muraleedharan}
\affiliation{BTQ Technologies, 16-104 555 Burrard Street, Vancouver BC, V7X 1M8 Canada}
\author{Dominic W. Berry}
\affiliation{School of Mathematical and Physical Sciences, Macquarie University, 2109 NSW, Australia}
\author{Gavin K. Brennen}
\affiliation{School of Mathematical and Physical Sciences, Macquarie University, 2109 NSW, Australia}
\affiliation{BTQ Technologies, 16-104 555 Burrard Street, Vancouver BC, V7X 1M8 Canada}


\begin{abstract}
In numerical approaches to solving differential equations on a lattice, a representation of the derivative operator that correctly matches the continuum behaviour of functions of momentum up to the band limit must be non-local. We present the construction of efficient linear-combination-of-unitaries ($\mathrm{LCU}$)-based block-encodings for the first-order derivative and Laplacian operators in the non-local \(N=2^n\)-dimensional SLAC representation. We use state-preparation techniques designed for smoothly decaying functions to prepare the dense $\mathrm{LCU}$ amplitudes with high success probability and low gate cost. Furthermore, we demonstrate how Shannon wavelet transforms can be applied to these block-encodings to obtain multiscale representations of the SLAC derivative operators. We then show how to apply a diagonal preconditioner that reduces the condition number of these matrices in the multiscale wavelet basis to a small constant. This enables the solution of partial differential equations (PDEs) with SLAC-discretised derivative operators on a finite lattice using the quantum linear solving algorithm ($\mathrm{QLSA}$). For a $d$-dimensional PDE, after projection away from the nullspace, the resulting quantum linear-system algorithm has overall gate complexity $\mathcal{O}(dn^3\alpha^{(k)}\log(1/\varepsilon))$, where $\alpha^{(k)}$ is the subnormalisation factor of the order-$k$ SLAC block-encoding and $\varepsilon$ denotes the algorithmic approximation error.
\end{abstract}
\maketitle



\section{Introduction}\label{sec:introduction}
Lattice discretisation is a fundamental ingredient in the simulation of many-body quantum systems and continuum field theories. A good discretisation should preserve the relevant continuum structure as faithfully as possible, since violations of these structures can introduce unphysical lattice artefacts. In non-relativistic quantum mechanics, the spatial momentum operator is given by $k \equiv -\,i\,d/dx$. For $k$ to define a physical observable, it must be Hermitian, which requires the derivative operator $d/dx$ to be anti-Hermitian. The standard one-sided finite-difference derivative does not satisfy this property, as its momentum-space symbol carries an unwanted phase. The symmetric, or central, finite difference restores anti-Hermiticity and hence unitarity under evolution, but its Fourier symbol vanishes at the edge of the first Brillouin zone. In fermionic theories, this gives rise to spurious zero modes, or doublers. More broadly, the Nielsen--Ninomiya theorem shows there is no translationally invariant lattice Dirac operator that simulataneously satisfies the following: spatial locality, i.e. superpolynomial or faster falloff of matrix elements with lattice site separation; the correct continuum limit of the momentum operator; absence of fermionic doubling; and chiral invariance \cite{NIELSEN198120}. Indeed a local lattice derivative operator implies the Fourier transform is periodic in momentum $k$ with period $2\pi/a$ where $a$ is the lattice spacing. This effects not only simulations of fermions on a lattice but also of bosons where local lattice Laplacians give an incorrect dispersion relation which impacts calculations of quantities like the specific heat even at intermediate temperatures \cite{finiteswt}. This motivates the study of a non-local discretisation that preserves the correct continuum physics for momenta within the first Brillouin zone: $k\in(-\pi/a,\pi/a]$.

To address this issue, Drell, Weinstein, and Yankielowicz proposed the SLAC derivative~\cite{slac1}, whose discrete Fourier symbol is chosen to be exactly the continuum momentum within the first Brillouin zone. The resulting momentum operator is Hermitian and avoids mirror modes. The price paid for this exact momentum-space fidelity is that the derivative becomes highly non-local in position space, leading to dense matrices and potentially large computational costs on classical computers~\cite{Quinn:1986mzs,slac2}.

The SLAC derivative is defined in momentum space and then transformed back to position space using the inverse Fourier transform. This makes sampling theory a natural language for describing it. In particular, the Shannon wavelet transform is built from the sinc function, which is the standard interpolation kernel for band-limited functions~\cite{shannon1,finiteswt,shannon4}. Since the Shannon transform separates momentum modes exactly~\cite{finiteswt}, it can split the lattice modes into infrared (IR) and ultraviolet (UV) sectors. This makes it a useful tool for constructing multi-scale representations of SLAC derivative operators.

Multi-scaled operator representations are powerful tools for studying continuum field theories~\cite{qft1,qft2,qft3,qft4,qft5}. They introduce an explicit scale degree of freedom, allowing coarse-grained features to be separated from fine-scale structure, and they can also support compressed representations of operators. For derivative operators, wavelet-based multi-scale representations have an additional advantage: they can admit diagonal wavelet preconditioners that substantially reduce the condition number of the discretised operator. Such preconditioners are important for solving differential equations in a wavelet basis. Recently, the authors in Ref.~\cite{fspde} showed that elliptic partial differential equations can be solved efficiently, and exponentially faster than known classical methods, by combining the HHL algorithm~\cite{hh1,hhl2,hhl3} with a diagonal wavelet preconditioner. That work used discrete and compactly supported wavelet bases which yield sparse and local derivative operators. Improved QLSAs (see \cite{Childshhl} and \cite{Ambainis:2010wfy}) achieve nearly linear dependence on the condition number and the sparsity. In the compact-wavelet setting of~\cite{fspde}, an appropriate wavelet basis and diagonal preconditioner ensure that both the sparsity and condition number remain effectively constant. However, as discussed above, simulations using that basis would experience artifacts like fermion doubling as observed in  \cite{alves2024}, as would simulations based on continuous but local wavelet bases like those studied in \cite{PhysRevD.109.016018}.

In contrast, SLAC  derivatives are dense, non-local operators. Matrix density is a basis dependent quantity and indeed on a periodic lattice the SLAC derivatives are diagonal in the Fourier basis. Yet, in the presence of terms which break translational invariance (see Sec.\ref{brokentransinv}) there are compelling reasons to work in the position basis as we do for this work. We show that the SLAC derivative operators can be block-encoded efficiently using the Linear Combination of Unitaries, ($\mathrm{LCU}$) \cite{Childs:2012gwh} framework, provided the required dense ancillary states can be prepared efficiently. To achieve this, we adapt nested-box inequality-test state-preparation techniques originally developed in~\cite{nestedboxes}. These routines allow the required $\mathrm{LCU}$ coefficient states to be prepared with controlled error and only constant success-probability overhead.

Building on these block-encodings, we then show how recursive applications of the Quantum Shannon Wavelet Transform yield efficient multi-scale realisations of SLAC derivative operators. We further show how diagonal wavelet preconditioning can be incorporated to reduce the condition number of the resulting multi-scale matrices while preserving efficient quantum implementability. Together, these ingredients provide a framework for implementing dense, non-local derivative operators and their multi-scale preconditioned forms in quantum circuits, enabling their use in quantum algorithms for the simulation and solution of differential equations.

We expect the constructions developed in this work to extend naturally to physical systems where preserving continuum properties after discretisation is important. More broadly, the $\mathrm{LCU}$ state-preparation protocols used here apply to dense coefficient functions that vary smoothly and analytically with the computational basis index $j$, provided they can be implemented with low $T$-cost arithmetic circuits. The corresponding success probabilities, however, depend on the particular functional form being prepared.

One of the main goals of this work is to construct efficient block-encodings of \(N=2^n\)-dimensional, or equivalently \(n\)-qubit, SLAC derivative operators. Throughout the paper, we use the notation \((\alpha,a,\epsilon)\)-block-encoding, where \(\alpha\) is the subnormalisation factor, \(a\) is the number of ancilla qubits used by the block-encoding, and \(\epsilon\) is the allowed implementation error. The main contributions of this work are summarised as follows:
\begin{itemize}
 \item Proposition~\ref{corl1}
shows that the $n$-qubit SLAC Laplacian admits an optimal block-encoding.
More precisely, we construct an $\varepsilon$-approximate $((\pi^2+24)/3,\,\mathcal{O}(n),\,\varepsilon)$-- block-encoding of the SLAC Laplacian using the $\mathrm{LCU}$ framework, with overall gate complexity $\mathcal{O}(n^2)$ and ancilla cost $\mathcal{O}(n)$. 

\item Proposition~\ref{corl2} provides
an efficient block-encoding of the $n$-qubit first-order SLAC derivative. We show that this operator admits an $\varepsilon$-approximate $\bigl(2(n-1),\,\mathcal{O}(n),\,\varepsilon\bigr)$ block-encoding within the $\mathrm{LCU}$ framework, with overall gate complexity $\mathcal{O}(n^2)$
and using $\mathcal{O}(n)$ ancilla qubits.

\item Result~\ref{re:msslac} shows that the multiscale representation of an \(N\)-dimensional SLAC derivative operator, for derivative orders \(k=1,2\), can be constructed from a single application of its block-encoding together with \(2(n-1)\) applications of the quantum Shannon wavelet transform (\(\mathrm{QSWT}\)). Since both the SLAC block-encoding and each \(\mathrm{QSWT}\) application have gate complexity \(\mathcal{O}(n^2)\), the full multiscale construction requires \(\mathcal{O}(n)\) \(\mathrm{QSWT}\) calls, one SLAC block-encoding query, and has overall gate complexity \(\mathcal{O}(n^3)\).

\item Result~\ref{re:SLACPDE} addresses the solution of $d$-dimensional PDEs discretised using the SLAC formalism. For the corresponding linear system $Au=b$, given access to the multiscale preconditioned block-encoding of $A$ together with a preparation routine $\mathcal{P}_b$ for the state $\ket{b}$, one can generate an $\varepsilon$-approximate solution state $\ket{\tilde u}\propto A^{-1}\ket{b}$. The protocol uses $\mathcal{O}(1)$ calls to $\mathcal{P}_b$ and $\mathcal{O}(\alpha^{(k)}\log(1/\varepsilon))$ queries to the multiscale preconditioned block-encoding. Since each such query contains one call to the SLAC block-encoding and $\mathcal{O}(n)$ applications of the $d$-dimensional $\mathrm{QSWT}$ protocol, the overall gate complexity is $\mathcal{O}(dn^3\alpha^{(k)}\log(1/\varepsilon))$. Here, $\alpha^{(k)}$ denotes the subnormalisation factor associated with the block-encoding of the order-$k$ SLAC operator, with $k=1,2$.
\end{itemize}

The remainder of this work is organised as follows. 
Section~\ref{prelim} reviews the background material needed for the subsequent sections. Section~\ref{sec:swt} reviews the wavelet-based tools and analysis used to expose the multi-scale structure of the SLAC operator. Section~\ref{sec: slacder} provides a brief overview of the SLAC formalism for discretising derivative operators.
Building on this, Section~\ref{sec:qswt} presents an efficient circuit implementation of the quantum Shannon wavelet transform using elementary quantum gates.  Section~\ref{sec: slaclap} then describes the block-encoding of the SLAC Laplacian based on our state-preparation techniques. In Section~\ref{sec: slac1be}, these techniques are extended to block-encode the first-order SLAC derivative, and we analyse the associated truncation error and gate complexity. Section~\ref{sec:multisc} combines the quantum Shannon wavelet transform with these block-encodings to obtain an efficient implementation of the multi-scale SLAC operator. 
Section~\ref{sec: slaclc} briefly explains how existing $\mathrm{LCU}$ methods allow us to form a block-encoding of a linear combination of SLAC derivatives from the individual block-encodings. In Section~\ref{sec:precon}, we show how to apply a diagonal wavelet preconditioner efficiently by first block-encoding the preconditioner itself. Finally, Section~\ref{sec: HHL} demonstrates how the block-encoding of the preconditioned SLAC derivative operator can be used within a QLSA to solve differential equations discretised via the SLAC formalism.
\section{Background}\label{prelim}
In this section we briefly review the preliminaries pertinent to subsequent sections. First, we describe the theory of shannon wavelet transform and its application for achieving perfect momentum separation. Then we provide a brief background on the analytical form of SLAC derivative operators discretised on a finite lattice of size $N$ having periodic boundary conditions.
\subsection{Shannon wavelet transform and perfect momentum separation}\label{sec:swt}
A wavelet transformation typically involves splitting the input data into two identical copies, which are then filtered using a high-pass (UltraViolet) and a low-pass (InfraRed) filter. After the input data passes through the high-pass (UV) and low-pass (IR) filters, it is divided into two separate signals: one containing high-frequency components (details) and the other containing low-frequency components (approximations). The resulting signals are subsequently processed by downsampling, effectively reducing the degrees of freedom by half per branch. This procedure is repeated multiple times, ensuring that the transformation remains unitary throughout.

There exists a wide range of wavelet transformations, each offering a different balance between filter quality and locality. For instance, consider Daubechies wavelets of order $N$ whose filter functions operate on $N$ neighboring lattice sites. The quality of these wavelets improves with increasing $N$ but this comes at the cost of reduced locality. Such local filters, by definition, are constrained to a small region in the time domain, which limits their ability to precisely separate frequencies. This results in some overlap or ``smearing'' between low and high frequencies in the filtered output. These local filters are often used in signal processing and data compression. The partial derivative operator when discretised in Daubechies wavelet basis admits a sparse representation~\cite{finiteswt}. These sparse (and local) multi-scale derivative operators were employed to simulate a free bosonic quantum field theory in~\cite{qft1,qft3}. In contrast, (non-local) multi-scale SLAC derivative operators were applied to renormalize field theory in~\cite{finiteswt} using the Shannon wavelet transform. In this section, we review the key protocols developed in that work, which will serve as the foundation for designing a quantum protocol that implements the Shannon wavelet transform via efficient quantum circuits.

\subsubsection{Shannon Sampling Theorem}
In sampling theory, an analog signal is typically reconstructed by first taking samples at regular intervals, and then interpolating these samples using suitable kernels. More formally, one first samples the signal $f$ at intervals of length $s$, and then filters it by convolution with an interpolation kernel $\phi_s$ to recover the original function.

The Shannon--Whittaker sampling theorem states that if the Fourier transform of a continuous analog signal $f$, denoted by $\hat{f}(\omega)$, is supported within the frequency band $\omega \in [-\pi/s, \pi/s]$, then the signal can be exactly reconstructed using the sinc interpolation kernel defined as:
\begin{equation*}
\phi_s(t) = \frac{\sin(\pi\,t/s)}{\pi\,t/s}.
\end{equation*}

Specifically, the signal reconstruction is given by:
\begin{equation}
f(t) = \sum_{n=-\infty}^{\infty} f(n s)\,\phi_s(t - n s).
\end{equation}

This sampling process can equivalently be interpreted in the frequency domain as making the Fourier transform $\hat{f}$ periodic with period $2\pi/s$. Explicitly, the sampled signal's Fourier transform becomes:
\begin{equation}
\hat{f}_d(\omega) = \frac{1}{s}\sum_{k=-\infty}^{\infty}\hat{f}\left(\omega - \frac{2\pi k}{s}\right).
\end{equation}
Thus, the sampled signal's spectrum consists of the original spectrum together with infinitely many translated copies shifted by multiples of $2\pi/s$. If the signal is band-limited to frequencies below the Nyquist frequency $\pi/s$, these shifted copies remain disjoint and do not overlap. As a result, applying the ideal low-pass sinc filter cleanly extracts the central copy of the spectrum, thereby reconstructing the original signal exactly.

For a more formal and intricate discussion on the sampling theorem, the reader is referred to Chapter 3 of~\cite{shannon1}.

\subsubsection{Multiresolution analysis and Shannon approximations}
A central concept in wavelet theory is the \emph{Multiresolution Analysis} (MRA), which organizes the Hilbert space, $L^2(\mathbb{R})$ into a hierarchy of subspaces of increasing resolution. Each subspace captures information at a particular scale, and together they provide a complete and systematic framework for analyzing functions. Formally, the wavelet basis defines a Multiresolution Analysis (MRA) of the Hilbert space \( L^2(\mathbb{R}) \), characterized by a nested sequence of closed subspaces:
\begin{equation}
    \dots \subset V_2 \subset V_1 \subset V_0 \subset V_{-1} \subset V_{-2} \subset \dots
\end{equation}
satisfying
\begin{equation}
    \bigcap_{r \in \mathbb{Z}} V_r = \{0\}, \quad \bigcup_{r \in \mathbb{Z}} V_r = L^2(\mathbb{R}).
\end{equation}
By introducing the subspace \( W_r \) as the orthogonal complement of \( V_r \) in \( V_{r-1} \):
\begin{equation}
    V_{r-1} = V_r \oplus W_r,
\end{equation}
one obtains a multi-scale decomposition of the space \( L^2(\mathbb{R}) \):
\begin{equation}
    L^2(\mathbb{R}) = \bigoplus_{r \in \mathbb{Z}} W_r.
\end{equation}
The MRA provides a framework to extract the scale-dependent features of a function as it is resolved at progressively finer scales. In particular, the band-limited functions described by Shannon’s sampling theorem naturally give rise to multiresolution approximations. The closed subspaces $V_j$ can be defined as the set of band-limited functions whose Fourier support lies within $[-2^{-j}\pi,\,2^{-j}\pi]$. According to the sampling theorem, the sinc kernel forms an orthonormal basis for the subspace $V_0$. This implies that at resolution $2^{-j}$, the best approximation of $f \in L^2(\mathbb{R})$ is given by its orthogonal projection $P_{V_j} f$ onto the band-limited subspace $V_j$. In the Fourier domain, this projection takes the form,
\begin{equation}
    \widehat{P_{V_j} f}(\omega) \;=\; \widehat f(\omega)\,\mathbf{1}_{[-2^{-j}\pi,\;2^{-j}\pi]}(\omega),
\end{equation}
which corresponds to an ideal low-pass filter with cutoff frequency $\pm 2^{-j}\pi$. However, because this filter introduces a sharp cutoff in the frequency domain, the projection exhibits an algebraic decay of order $1/|t|$ in the time domain, even when the original function is compactly supported. This follows from the Fourier uncertainty principle: sharply restricting frequency localization inevitably produces long-range tails in time.

For more details on MRA and Shannon approximation, readers are referred to Chapter 7 of~\cite{shannon1}.

\subsubsection{Shannon wavelets}
Shannon wavelets provide the simplest example of a wavelet system, defined directly in the Fourier domain. They are constructed from the scaling function $\phi$ and associated filter functions, which relate the scaling and wavelet spaces in the MRA framework. In signal processing, important classes of wavelets are studied via Fourier transforms derived from the general formula:
\begin{align}
\widehat\psi(\omega)
&=
\frac{1}{\sqrt{2}}\;\widehat g\left(\frac{\omega}{2}\right)\widehat\phi\left(\frac{\omega}{2}\right)
\nonumber \\
&= \frac{1}{\sqrt{2}}\;e^{-i\omega/2}\,\widehat h^{*}\left(\frac{\omega}{2} + \pi\right)\widehat\phi\left(\frac{\omega}{2}\right).
\end{align}
Here, \(\psi\) is the mother wavelet, \(\phi\) is the scaling (father) function, and \(h, g\) represent corresponding filter functions.

In the Shannon MRA, functions are approximated as band-limited signals. Consequently, the scaling and filter functions are explicitly defined in the Fourier domain by
\[
\widehat{\phi}(\omega)=\mathbf{1}_{[-\pi,\pi]}(\omega),\quad
\widehat{h}(\omega)=\sqrt{2}\,\mathbf{1}_{[-\pi/2,\pi/2]}(\omega).
\]

With this choice, the corresponding mother wavelet takes the form
\begin{equation}
\widehat\psi(\omega)=
\begin{cases}
e^{-i\omega/2}, & \omega\in[-2\pi,-\pi]\cup[\pi,2\pi],\\[6pt]
0, & \text{otherwise},
\end{cases}
\end{equation}
leading in the time domain to
\[
\psi(t)=
\frac{\sin\bigl(2\pi t\bigr)-\cos\bigl(\pi t)}{\pi/2 - \pi t}.
\]

Since the Fourier transform of the Shannon wavelet has compact frequency support, its inverse Fourier transform—i.e., the wavelet in the time domain—is infinitely differentiable (\(C^\infty\)) but not compactly supported, exhibiting slow asymptotic decay. Moreover, because \(\widehat\psi(\omega)\) vanishes in a neighborhood around \(\omega=0\), all its frequency-domain derivatives vanish at the origin. Consequently, the Shannon wavelet possesses infinitely many vanishing moments. Finally, the discontinuities of \(\widehat\psi(\omega)\) at the points \(\omega=\pm\pi, \pm 2\pi\) imply that \(|\psi(t)|\) decays only algebraically, specifically as \(1/|t|\), at infinity. These properties highlight the trade-off inherent in the Shannon wavelet: it achieves perfect frequency localization and infinitely many vanishing moments, but only at the cost of non-locality in time. This balance will be important in the subsequent discussion.

\subsubsection{Wavelet-based renormalisation in the Shannon limit}
As discussed earlier, the basis functions of the Shannon wavelet transform are band-limited, meaning that their momentum modes are perfectly separated. This property was exploited in~\cite{finiteswt} to construct a renormalisation scheme for quantum field theory. Below, we briefly summarise the key features of this wavelet-based renormalisation scheme in the Shannon limit of perfect momentum separation. This discussion serves to motivate the efficient construction of the Quantum Shannon Wavelet Transform in the following section.\\
In the Shannon limit, the wavelet functions possess infinite support, and the corresponding filters achieve exact separation between low- and high-momentum modes. However, this comes at the cost of complete non-locality in the filter functions. These non-local filters are nevertheless optimal in achieving perfect momentum separation. When the Shannon wavelet transform acts on field modes, the momentum modes $\{k\}$ are split cleanly into two equal halves:
\[
|k| \leq \frac{\pi}{2} \quad (\text{IR}) \quad \text{and} \quad \frac{\pi}{2} < |k| \leq \pi \quad (\text{UV}).
\]

In wavelet-based renormalisation schemes for field theories, it is often desirable to decouple the low-momentum (IR) and high-momentum (UV) modes. For wavelets with finite support (local filters), the filter functions overlap, which induces residual couplings between the IR and UV sectors. This overlap complicates the clean separation of scales, thereby making renormalisation more difficult. In contrast, in the Shannon limit the IR and UV sectors decouple completely. Such decoupling is essential for effective renormalisation, as it enables a sharp separation and independent treatment of different scales. One might attempt to achieve this using finite-range wavelets together with additional downstream disentanglers to cancel the residual IR--UV couplings. However, the combined action of a wavelet and its disentanglers always reproduces the Shannon limit, regardless of the initial wavelet choice. In other words, any attempt to enforce perfect scale separation using local filters ultimately converges to the non-local Shannon limit, where ideal frequency separation is naturally realised.

Applying perfect momentum separation to free bosonic or fermionic models effectively splits the dispersion relation of the excitation modes into an IR sector and a UV sector. By iteratively applying this procedure to the IR sector, one obtains an RG scheme that progressively ``zooms in'' on the low-$k$ region of the dispersion relation. This repeated zooming defines the Renormalisation Group (RG) fixed point. For example, if the initial dispersion relation is parabolic near $k=0$, successive applications of perfect momentum separation make the IR sector increasingly resemble an exact parabola, driving the flow toward an RG fixed point with a perfectly quadratic dispersion. At each stage of the RG flow, the IR and UV sectors decouple and evolve independently, while the couplings within each sector become increasingly long-ranged. This behaviour naturally links the RG fixed point--characterised by a perfectly quadratic dispersion---to the modified Laplacian obtained via SLAC discretisation, as will be discussed in the following sections. Since these derivative operators are diagonal in the Fourier basis, they can equivalently be represented in the scale-field basis provided by the Shannon wavelet transform. For a more detailed account of bosonic field renormalisation in the Shannon limit of perfect momentum separation, we refer the reader to~\cite{finiteswt}.

The Shannon limit of perfect momentum separation can be achieved by employing a straightforward sequence of Fourier transformations and permutations: 
\begin{enumerate}
    \item Start with a dataset containing \( N \) discrete values.
    \item Transform this dataset into \( k \)-space using a discrete Fourier transform (DFT) of size \( N \), which acts on permuted indices of the lattice.
    \item Manually divide the output into two equal halves: IR and UV. This can be implemented using unitary transformations.
    \item Transform each half back to real space using an inverse DFT of size \( N/2 \).
\end{enumerate}
This process effectively implements a perfect Shannon filter, achieving perfect momentum separation without the need for explicit wavelets or downsamplers.

\subsection{SLAC derivative operators}\label{sec: slacder}
Discretising derivatives on a lattice is a fundamental step in formulating field theories and quantum simulations. However, naive discretisations often introduce unphysical artefacts, such as fermion doubling or the loss of Hermiticity, which obscure the connection to the continuum theory. To address these issues, Drell, Weinstein, and Yankielowicz introduced the ideal SLAC derivative operator for a one-dimensional infinite lattice in the context of lattice gauge theory~\cite{slac1}. 

In contrast to local finite-difference schemes, the SLAC discretisation connects each lattice site to every other site along the line, thereby avoiding both fermion doubling and chiral-symmetry breaking. Below, we briefly review the motivation, following~\cite{finiteswt}, for employing the SLAC formalism to discretise derivatives.

A quantum field $\phi(x)$ in $d$ dimensions can always be expanded in plane waves, which form the natural eigenbasis of the Laplacian:
\begin{align}
\phi(x) &= \int \frac{d^d k}{(2\pi)^d}\,\tilde\phi(k)\,e^{i k\cdot x}, \\[6pt]
\tilde\phi(k) &= \int d^d x\,\phi(x)\,e^{-i k\cdot x}.
\end{align}
In the continuum, applying the Laplacian to a plane wave $\exp(ik\cdot x)$ simply yields $-k^2$ as the eigenvalue:
\begin{align}
\Delta &\;\equiv\; \sum_{j=1}^{d} \frac{\partial^2}{\partial x_j^2}, \\[6pt]
\Delta\,e^{i k\cdot x}
&= \Bigl(\sum_{j=1}^d \partial_{x_j}^2\Bigr)e^{i k\cdot x}
= \Bigl(\sum_{j=1}^d (i k_j)^2\Bigr)e^{i k\cdot x}
= -|k|^2\,e^{i k\cdot x}.
\end{align}
This corresponds to the continuum dispersion \(\omega\propto k^2\) and provides the benchmark against which lattice discretisations of the Laplacian should be judged. 

In practice, however, one often approximates \(\Delta\) on the lattice by the nearest-neighbour finite difference,
\begin{equation}
(\Delta_{\mathrm{FD}}\phi)_n 
= \frac{1}{a^2}\bigl[\phi_{n+1}-2\phi_n+\phi_{n-1}\bigr],
\end{equation}
which produces the dispersion
\[
\omega = 1 - \cos(a k),
\qquad
k \in \Bigl(-\tfrac{\pi}{a},\,\tfrac{\pi}{a}\Bigr],
\]
and only matches the ideal \(k^2\) form for \(\lvert k\rvert\ll \tfrac\pi a\). At larger momenta near the Brillouin-zone boundary, the dispersion deviates strongly from the continuum form, leading to unphysical artefacts such as fermion doubling. This motivates the search for alternative discretisations, such as the SLAC derivative, that reproduce the exact continuum spectrum within the Brillouin zone.

To restore the exact continuum behaviour at all lattice momenta, we demand a discretisation that remains \emph{translationally invariant} yet reproduces \(-k^2\) on the Brillouin zone. Any such linear, shift-invariant operator must act by convolution:
\[
(\Delta^{(2)}\psi)_n
=\sum_{n'}\Delta^{(2)}_{\,n-n'}\,\psi_{n'}
=\sum_{r}\Delta^{(2)}_r\,\psi_{n-r}
=(\Delta^{(2)}*\psi)_n.
\]
The kernel \(\Delta^{(2)}_r\) is then fixed by the inverse Fourier transform of the desired dispersion $(k^2)$:
\begin{align} \label{slaclco}
\Delta^{(2)}_r
&= -\frac{a}{2\pi}
   \int_{-\frac{\pi}{a}}^{\frac{\pi}{a}}
     k^2\,e^{i a k r}\,\mathrm{d}k \notag \\[6pt]
&=
\begin{cases}
-\dfrac{\pi^2}{3\,a^2}, & r = 0, \\[8pt]
-\dfrac{2\,(-1)^r}{r^2\,a^2}, & r \neq 0.
\end{cases}
\end{align}
These derivative operators are commonly studied within the framework of pseudo-differential operators (PDOs), and a general procedure for constructing block-encodings of such PDOs is presented in~\cite{PDOpaper}. In contrast, here we develop a new, explicit, and efficient block-encoding scheme tailored specifically to SLAC derivatives. A detailed complexity comparison between our construction and the PDO-based method is given in Appendix~\ref{app:PDO}.

The spectral accuracy of the SLAC representation can be assessed by comparing its Fourier symbols with those of the standard central finite-difference discretization. 
As shown in Fig.~\ref{fig:slac_symbol_comparison}, the upper-left panel compares the symbols for the first-order derivative, while the lower-left panel compares the corresponding symbols for the Laplacian.

Because SLAC derivative operators are exactly band-limited within the first Brillouin zone, they are naturally suited to a scale-field representation induced by the Shannon wavelet transform; see Section~\ref{sec:swt}. In the following sections, we construct an explicit and efficient implementation of the Quantum Shannon Wavelet Transform ($\mathrm{QSWT}$), together with $\mathrm{LCU}$ block-encodings of the SLAC operators based on nested-box inequality-test state preparation~\cite{nestedboxes}; see Sections~\ref{sec: slaclap} and~\ref{sec: slac1be}. Furthermore, in Section~\ref{sec:multisc}, we show how multi-scale SLAC derivative operators can be obtained by recursively applying the $\mathrm{QSWT}$ protocol developed in Section~\ref{sec:qswt}.

\section{Explicit construction of Quantum Shannon wavelet transform}\label{sec:qswt}
In this section, we present the construction of the $N=2^n$ dimensional Quantum Shannon wavelet transform (QSWT) using a quantum circuit, following the approach of~\cite{finiteswt}. As noted earlier, it is not necessary to explicitly invoke wavelet terminology here, since the Shannon wavelet transform can be realised through a combination of Fourier transforms and permutation matrices. Consider a discrete Fourier transform (DFT) acting on a square-normalisable finite complex vector $f=\{f_0,\ldots,f_{N-1}\}\in\mathbb{C}^N$ with $N \geq 4$ components, assuming $N=2^n$. This can be expressed as:
\begin{align*}
\tilde{f}_q &:= \frac{1}{\sqrt{N}}
  \sum_{n=0}^{N-1} e^{\frac{2\pi i n q}{N}} f_n,
  &
q &\in \left\{ -\frac{N}{2}, \dots, \frac{N}{2}-1 \right\}.
\end{align*}
Notice that the Fourier transform acts on shifted indices $q$. This operation can be realised in a quantum circuit by combining the Quantum Fourier Transform (QFT) with a permutation matrix $U_{\mathrm{shift}}$, whose leading $1$ lies in column $N/2$. We denote this combined operation as $\mathrm{SHIFTQFT}=U_{\mathrm{shift}}\cdot \mathrm{QFT}$. Since the permutation matrix can be implemented using adder circuits~\cite{addergidney}, the overall cost of implementing an $n-$ qubit $\mathrm{SHIFTQFT}$ is $\mathcal{O}\bigl(n^2\bigr)$ gates. \\

Before splitting the momentum modes into two equal halves (IR/UV), as argued in~\cite{finiteswt}, the modes at the lattice edges $\pm N/4$ must first be transformed by assigning their symmetric combination to the IR sector and their antisymmetric combination to the UV sector:
\begin{align}
\tilde{f}^{\text{IR}}_{-N/4}
  &= \frac{1}{\sqrt{2}} \left( \tilde{f}_{N/4} + \tilde{f}_{-N/4} \right),
  \notag\\
\tilde{f}^{\text{UV}}_{-N/4}
  &= \frac{1}{\sqrt{2}}\, i \left( \tilde{f}_{N/4} - \tilde{f}_{-N/4} \right).
\end{align}
\begin{figure*}[htb!]
	\centering
    \includegraphics[width=\textwidth]{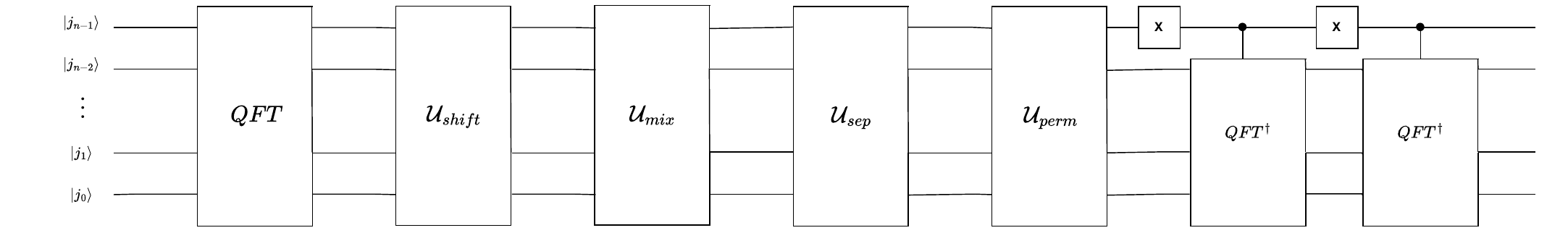}
	\caption{Implementation of QSWT in a quantum circuit on $n$ qubits using QFT, modular adders and 2-sparse unitaries}
	\label{fig:qswtqc}
\end{figure*}
The mixing of the edge modes can be realised in a quantum circuit through a $2$-sparse unitary matrix:
\begin{align}
\mathcal U_{mix} &=
\begin{bmatrix}
1 & 0 & 0 & \cdots & 0 & 0 & \cdots & 0 \\
0 & 1 & 0 & \cdots & 0 & 0 & \cdots & 0 \\
\vdots & \vdots & \ddots &  & \vdots & \vdots &  & \vdots \\
0 & 0 & \cdots & \tfrac{1}{\sqrt{2}}_{i,i} & \cdots & \tfrac{1}{\sqrt{2}}_{i,j} & \cdots & 0 \\
\vdots & \vdots &  & \vdots & \vdots & \vdots & & \vdots \\
0 & 0 & \cdots & \tfrac{i}{\sqrt{2}}_{j,i} & \cdots & -\tfrac{i}{\sqrt{2}}_{j,j} & \cdots & 0 \\
\vdots & \vdots &  & \vdots & \vdots & \vdots & \ddots & \vdots \\
0 & 0 & \cdots & 0 & 0 & \cdots & 1 & 0 \\
0 & 0 & \cdots & 0 & 0 & \cdots & 0 & 1 \\
\end{bmatrix}_{N\times N},
\end{align}
where $i = N/4$ and $j = 3N/4$. When this unitary matrix acts on the shifted momentum modes (obtained via $\mathrm{SHIFTQFT}$), it produces new momentum modes in which the edge modes of the IR/UV sectors are ``mixed''. We denote this unitary operation as $\mathcal U_{\mathrm{mix}}$.

Since the Shannon wavelet transform (SWT) implements perfect momentum separation, the next step is to divide the remaining momentum modes into two equal halves, with the modes indexed as:
\begin{align}
&\tilde{f}^{\mathrm{IR}}_q = \tilde{f}_q,\qquad\qquad\qquad
\tilde{f}^{\mathrm{UV}}_q = \tilde{f}_{q - \operatorname{sgn}(q)\,N/2}, \nonumber \\
q &\in \left\{ -\frac{N}{4} + 1, \ldots, \frac{N}{4} - 1 \right\}.
\end{align}
The unitary that divides the momentum modes into two equal halves can be realised in the quantum circuit through a cyclic permutation matrix. This matrix places its leading $1$ at column index $N/4$, with each subsequent row obtained by shifting one position to the right. We denote this operation by $\mathcal U_{\mathrm{split}}$.

The final step is to map the two halves back to real space using an inverse Fourier transform. Prior to this, the momentum modes must be reindexed by applying a block-diagonal unitary, $\mathcal U_{\mathrm{block}}$. With respect to the IR/UV partition, $U_{\mathrm{perm}}$ has two diagonal blocks of size $N/2 \times N/2$: the top block is a permutation matrix with its leading $1$ at column index $N/4$, while the bottom block contains the same permutation matrix, except that the entry at column index $3N/4$ is assigned a phase of $-1$. The implementation cost of $\mathcal U_{\mathrm{block}}$ is dominated by that of the underlying adder circuits (from \cite{addergidney}). 

Finally, controlled on the most-significant qubit (on both $\ket{1}$ and $\ket{0}$), we apply the $(N/2)$-dimensional inverse $\mathrm{QFT}$ to the remaining qubits, thereby transforming the IR and UV blocks separately back into the real-space basis. 

Therefore, implementing $\mathrm{QSWT}$ can be achieved by decomposing the operation into unitary transformations:
\begin{align}
\mathrm{QSWT}
  = \mathrm{QFT}^{\dagger}\, \mathcal U_{\mathrm{block}}\, \mathcal U_{\mathrm{split}}\, \mathcal U_{\mathrm{mix}}\, \mathcal U_{\mathrm{shift}}\, \mathrm{QFT}. \nonumber
\end{align}
If we apply this operation to an equal superposition of $N$ elements ($N$ lattice points), the basis states are transformed as:
\begin{align}
\mathrm{QSWT}\sum_{j=0}^{N-1} \ket{j}
  &= \sum_{j=0}^{N-1} S^{\mathrm{IR/UV}}_{2i-j} \ket{j},
\end{align}
where $S^{\mathrm{IR/UV}}_{2i-j}$ denotes the element in row $i$ and column $j$ of a rectangular block matrix, with the first $N/2$ rows corresponding to the IR transformation and the remaining rows corresponding to the UV transformation. The coefficients $S^{\mathrm{IR/UV}}$ are given in Eq.~(19) of~\cite{finiteswt}. 

Since $\mathrm{QSWT}$ is built from $\mathrm{QFT}$s together with at most $2$-sparse unitary transformations and modular adder circuits, the overall cost of implementing $\mathrm{QSWT}$ is dominated by the cost of $\mathrm{QFT}$, scaling as $\mathcal{O}\bigl(n^2\bigr)$. The circuit diagram for implementing $\mathrm{QSWT}$ on $n$ qubits, $\ket{j_0 j_1 \ldots j_{n-1}}$, is shown in Fig.~\ref{fig:qswtqc}.
\section{Constructing Efficient block-encoding of SLAC Laplacian}\label{sec: slaclap}
The SLAC derivative operator is a non-unitary, circulant, and dense matrix, which we implement in a quantum circuit using the block-encoding framework. Its circulant structure allows it to be written as a linear combination of right-shift permutation matrices, enabling a block-encoding through the $\mathrm{LCU}$ method, as widely studied for structured matrices~\cite{be5,be6}. The right-shift permutations serve as the unitaries in this decomposition and can be implemented efficiently with low $T$-gate cost using modular adder (or subtractor) circuits~\cite{addergidney}. The primary challenge in block-encoding SLAC operators lies in preparing the ancillary state with $\mathcal{O}(2^n)$ distinct amplitudes for an $n$-qubit system. 

To overcome this, we employ a hierarchical construction based on the nested-boxes inequality-test framework~\cite{nestedboxes,qft3}, which avoids the exponentially decreasing success probability of naive amplitude preparation schemes. In this approach, the ancillary register is partitioned coherently into dyadic intervals in $\log N$ steps, and comparator-based inequality tests are applied within each box to imprint the required amplitude scalings. This yields $\log N$-depth state preparation even in the presence of $N$ distinct amplitudes, providing an efficient alternative to Grover--Rudolph state preparation~\cite{domyuv}. The resulting construction achieves polylogarithmic circuit depth, and we can simultaneously bound the dominant truncation error $\varepsilon_{\mathrm{trunc}}$ that arises from approximating the infinite-lattice coefficients on a finite lattice. A detailed analysis of these $\mathrm{LCU}$-based block-encodings is given in the following subsections.

A natural alternative is to implement the SLAC operator via momentum-space diagonalisation and an inverse QFT. Although QFT is highly efficient ($O(n^2)$ gate complexity), we prioritize a direct position-basis block-encoding for two reasons. First, the SLAC operator admits a natural Linear Combination of Unitaries (LCU) decomposition using cyclic shifts, which are implementable with minimal gate complexity in the position basis. Second, practical applications—such as solving differential equations with local potentials—often break translational invariance. Keeping all operators native to the position basis, including the diagonal wavelet preconditioners, avoids the significant overhead of interleaving controlled-basis transformations within larger quantum algorithms.

We first consider the block-encoding of the SLAC Laplacian and subsequently show that the same methods extend to the first-order SLAC derivative. 
Henceforth, we set the lattice spacing to $a=1$.
Following~\cite{costella2002newproposalfermiondoubling}, the SLAC Laplacian in Eq.~\eqref{slaclco} can be optimally truncated by enforcing periodic boundary conditions on a finite lattice.
\begin{equation}\label{slacc}
\Delta_{j}^{(2)} = 
\begin{cases}
-\dfrac{\pi^2}{3} - \dfrac{2\pi^2}{3N^2}, & j = 0, \\[10pt]
\dfrac{2\pi^2\,(-1)^{1+j}}{N^2\,\sin^2\!\bigl(\tfrac{\pi j}{N}\bigr)}, & j \neq 0.
\end{cases}
\end{equation}
The coefficient at row $r$, column $c$ is given by 
$\Delta^{(2)}_{(r-c)\bmod N}$, depending only on their difference and thus 
manifesting translational invariance. Because the matrix is circulant, 
it can be expressed as an $\mathrm{LCU}$ over cyclic-shift permutation matrices. 
However, preparing amplitudes proportional to 
$\sqrt{\Delta^{(2)}_{(j)}}$ is hindered by the $\sin^{-2}(\pi j/N)$ factor.
To avoid this, we work in the infinite-lattice limit of Eq.~\eqref{slacc} 
and encode the Laplacian with coefficients:
\begin{equation}\label{slaccf}
\widetilde{\Delta}_{j}^{(2)} =
\begin{cases}
-\dfrac{\pi^2}{3}, & j = 0, \\[10pt]
\dfrac{2\,(-1)^{1+j}}{j^2}, & 1 \le j \le N/2 - 1, \\[10pt]
\dfrac{2\,(-1)^{1+j}}{(N-j)^2}, & N/2+1 \le j \le N-1.
\end{cases}
\end{equation}
The lower-right panel of Fig.~\ref{fig:slac_symbol_comparison} shows the truncation error for the SLAC Laplacian, obtained by comparing the Fourier symbol of the exact operator in Eq.~\eqref{slacc} with that of the truncated representation in Eq.~\eqref{slaccf}.

In the large-$N$ limit, the coefficient at the midpoint of the lattice, i.e.\ $\widetilde{\Delta}_{j=N/2}^{(2)}$ vanishes, and the coefficients of the first half 
of the sequence $(j=1,\dots,N/2-1)$ are mirrored in the second half $(j=N/2+1,\dots,N-1)$. We exploit this symmetry by preparing amplitudes only on the first half and then using a Hadamard to reproduce the second half. We denote the truncation error by $\varepsilon_{\text{trunc}}$, which we will show scales as $\mathcal{O}(1/N)$.
The SLAC Laplacian matrix for a lattice of $N$ sites can then be expressed as an $\mathrm{LCU}$:
\begin{equation}\label{lcuslacla}
\begin{aligned}
    & \widetilde\Delta^{(2)}_{\rm slac} = \sum_{j=0}^{N-1}\widetilde{\Delta}_{j}^{(2)}P^j = \sum_{j=0}^{N/2-1}\widetilde{\Delta}_{j}^{(2)}\Big(P^j+P^{N-j}\Big).
\end{aligned}
\end{equation}
Here, $P^j$ denotes the $j^{\text{th}}$ power of the right-shift cyclic (permutation) matrix of dimension $N$, with $P^0$ being the identity. These permutation matrices can be efficiently implemented using modular 
adder circuits, following the construction in~\cite{addergidney} and 
detailed in Appendix~\ref{app: sel}. However, the bottleneck of this $\mathrm{LCU}$
lies in preparing the fully dense set of $N$ amplitudes 
proportional to $\sqrt{\widetilde{\Delta}_{j}^{(2)}}$ on computational 
basis state $\ket{j}$.

In the next subsection, we outline how to prepare states with amplitudes proportional to $\sqrt{\widetilde{\Delta}_{j}^{(2)}}$ using nested-boxes inequality tests, as introduced in~\cite{nestedboxes}. We then construct 
the $\mathrm{LCU}$ from permutation matrices together with a sign oracle to assign the correct phases, analyse the complexity of the scheme, and derive error bounds for the full construction.

\subsection{State preparation of SLAC Laplacian using inequality tests}\label{subsec:preplap}
In this subsection, our goal is to prepare the unnormalised state
\begin{equation}
\sum_{j=0}^{N-1}\sqrt{\alpha_j}\ket{j},
\end{equation}
where $\alpha_j$ are determined by the unsigned coefficients in Eq.~\eqref{slaccf}.

To realise these amplitudes via inequality tests, we begin by initialising the quantum registers in the following state:
\begin{equation}\label{qreg}
\ket{[0]_{(n-1)}}_\mu\;\ket{[0]_{(n-1)}}_j\;\ket{0}_d\ket{[0]_{n_{ref}}}\ket{0}_{f}.
\end{equation}
Here, the notation $\ket{[0]_{n}}$ denotes an $n$-qubit all-zero string. Throughout this work, we use little-endian ordering for all binary encodings. The register $j$, together with the single-qubit register $d$, forms the $n$-qubit data register on which we prepare the $\mathrm{LCU}$ coefficients. The qubit $d$ is taken to be the most significant bit (MSB) of the integer $j$, so that $\mathrm{MSB}=0,1$ cleanly separates the two halves of the lattice. The register $\mu$, is an $(n-1)$-qubit unary register which will be used to prepare a conditional superposition on the data register. The ancilla $a$ distinguishes the $j=0$ branch from the $j\neq 0$ branch, while the flag qubit $f$ is used to test the inequality with respect to the $n_{ref}-$~qubit reference register.

The overall strategy is as follows. The register $\mu$, which is unary with equal weighting, acts as a placeholder that encodes $2^{\mu}$ values of the register $j$. By controlling on $\mu$, we superpose the values $j=2^{\mu},\dots,2^{\mu+1}-1$ through a specific set of nested conditions implemented using a ladder of $\mathrm{CX}$ gates. For brevity, we sometimes say that each box $B_\mu$ encodes a disjoint set of $2^{\mu}$ values of $j$. In this way, the registers $(\mu,j)$ jointly index the amplitudes in a hierarchical (logarithmic)  manner, allowing the preparation of $N$ distinct amplitudes in $\log N$ steps. These amplitudes are then mapped to the corresponding $(\mu,j)$ pairs via inequality tests. This inequality-test framework for state preparation avoids the resource-intensive Grover-based method (which uses controlled rotations), as discussed in~\cite{domyuv}. Thus, the nested-boxes structure for testing inequalities achieves low Toffoli cost and is well-suited to a fault-tolerant setting.
 
The step-by-step procedure to prepare the desired amplitudes in register $\ket{j}$ can be summarised as follows:

\begin{itemize}
\item An $R_y$ rotation is applied to the ancilla $a$ to split the $j=0$ and $j>0$ branches while assigning the appropriate relative amplitudes required by the $\mathrm{LCU}$ coefficients.

\item Using $\mathcal{O}(n)$ rotations, we then prepare a superposition over the register $\mu$ in unary encoding, where the basis state $\ket{[\mu]_{n-1}}$ contains exactly $\mu$ active ones. This efficiently assigns each state, $\ket\mu$ an amplitude scaling as $\sqrt{2^{-\mu}}$, establishing the required target probability distribution.

\item For each $\mu$, condition on the active qubits of the unary register and apply controlled Hadamard gates to the first $\mu$ qubits in register $j$, thereby creating an equal superposition over the range $j=0,\dots,2^\mu-1$.

\item Subsequently, the unary-encoded register $\mu$ is converted into a one-hot encoding by effectively incrementing its basis states by one. This is implemented using a cascade of $\mathrm{CX}$ gates.

\item Consequently, when a sequence of $\mathrm{CX}$ gates is applied from the $|\mu\rangle$ register to the index register $|j\rangle$, the resulting superposition is mapped directly onto the dyadic interval,
\begin{equation}\label{boxed}
B_{\mu} = \{\, j \mid 2^{\mu} \le j < 2^{\mu+1} \,\}, \quad \mu \ge 0.
\end{equation}

\item Finally, we impose the desired amplitudes via an inequality test, efficiently implemented with a comparator~\cite{domyuv}. 
\end{itemize}

\begin{figure*}[htb!]
	\centering
	\includegraphics[width=\textwidth]{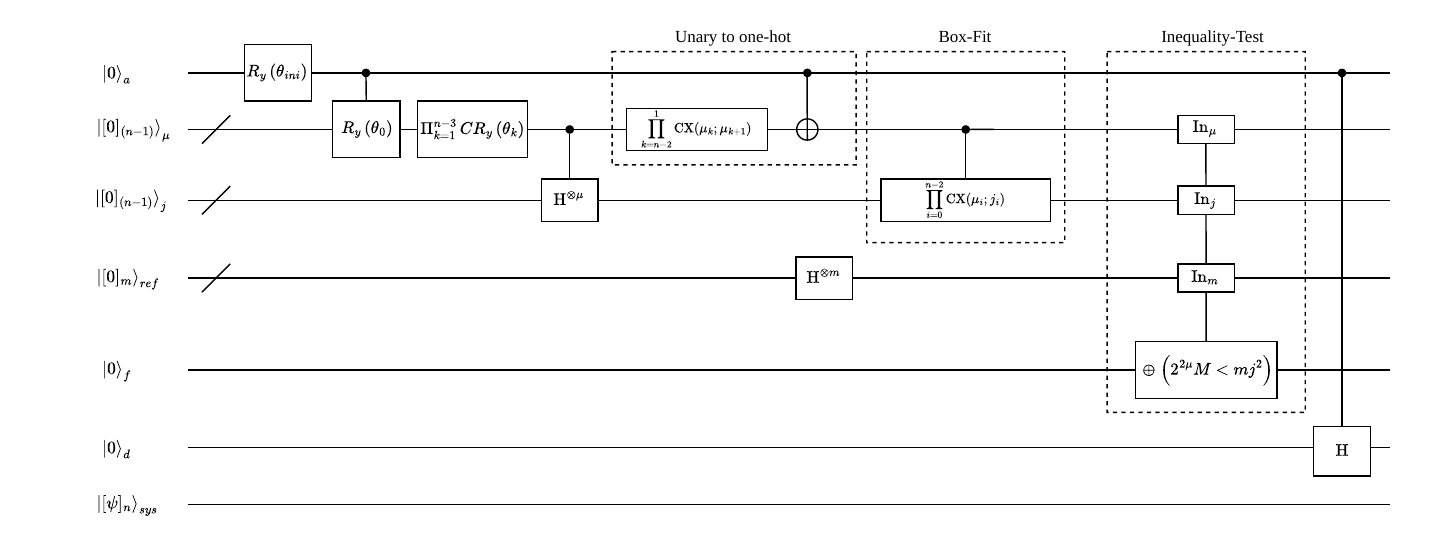}
    \caption{High-level circuit implementing the state-preparation protocol for the $\mathrm{LCU}$ of the SLAC Laplacian in~Eq.~\eqref{lcuslacla} using $O_{\mathrm{prep}}^{(2)}$. The procedure begins by applying an $R_y(\theta_{\mathrm{ini}})$ gate to the ancilla qubit $a$, where $\theta_{\mathrm{ini}}$ is chosen to ensure the amplitude ratio between the $j=0$ and $j>0$ branches matches the required $\mathrm{LCU}$ coefficients. Next, controlled rotations ($\mathrm{CROT}$) on the unary-encoded register $\mu$ are followed by controlled Hadamards to generate a conditional superposition on register $j$. The first dotted region converts $\mu$ into a one-hot encoding via a downward cascade of $\mathrm{CX}$ gates (controlled on qubit $\mu_k$, targeting qubit $\mu_{k+1}$ for $k = n-2,\dots,1$), terminating with a $\mathrm{CX}$ gate controlled on $a$ targeting the least-significant qubit of $\mu$. The second dotted region (\emph{Box-Fit}) constructs the boxes $B_\mu$ using a ladder of $\mathrm{CX}$ gates controlled on the $i^{\mathrm{th}}$ qubit of $\mu$, targeting the corresponding qubit of register $j$. Finally, the inequality test in~Eq.~\eqref{ineq1} is performed against an equal superposition in register $ref$ using a flag qubit $f$, and an $a=1$-controlled Hadamard is applied to the most significant qubit of $j$ (labelled $d$), yielding the prepared state in~Eq.~\eqref{fis}. The first-order SLAC derivative coefficients detailed in Eq.~\eqref{slac1cf} are prepared using an identical circuit framework, modified by omitting the initial stage, adjusting the rotation angles, and implementing the updated inequality criteria from Eq.~\eqref{ineqder1}.
}
\label{fig:Opreplap}
\end{figure*}

We now provide a complete description of the state-preparation protocol for the $\mathrm{LCU}$ block-encoding of the SLAC Laplacian. To partition the zero ($j=0$) and non-zero ($j>0$) branches of the index register, we initialise the ancilla qubit $a$ using an $R_y(\theta_{\mathrm{ini}})$ rotation. The angle $\theta_{\mathrm{ini}}$ is determined by the required ratio of the $\mathrm{LCU}$ coefficients defined in Eq.~\eqref{slaccf}. Since the non-zero branch ultimately splits into two equiprobable halves, its final target probability must be exactly twice that of the zero branch.
Taking into account the inequality-test step, the final success probabilities of these branches are given by $P_{j>0} = \sin^2(\theta_{\mathrm{ini}})\, p_{\mathrm{succ}}$ and $P_{j=0} = \cos^2(\theta_{\mathrm{ini}})$, where $p_{\mathrm{succ}} = \pi^2/12$ is the success probability of the inequality test (see Fig.~\ref{fig:crs}). Enforcing the required ratio $P_{j>0} / P_{j=0} = 2$ fixes the initial angle via $\tan(\theta_{\mathrm{ini}}) = \sqrt{24/\pi^2}$. Consequently, this initial rotation prepares the state:
\begin{equation}
    \left( \sqrt{\frac{\pi^2}{\pi^2+24}} |0\rangle_a + \sqrt{\frac{24}{\pi^2+24}} |1\rangle_a \right) |\bar{0}\rangle_{\mu,j}\,.
\end{equation}

We then prepare a geometric superposition over the $(n-1)$ qubits of the unary register $\mu$. To ensure that the $\ket{0}_a$ branch remains strictly in the vacuum state $\ket{\mu=0}$, the initial $R_y(\theta_0)$ rotation is conditioned explicitly on the ancilla state $\ket{1}_a$. Subsequent $R_y(\theta_k)$ rotations are applied to qubit $\mu_{k+1}$, each controlled on the preceding qubit $\mu_k = 1$. By applying this sequential cascade of $n-2$ controlled rotations, one can efficiently generate the following superposition:
\begin{align}
&\sqrt{\frac{12}{(\pi^2+24)(1-2^{-(n-1)})}} 
\sum_{\mu=0}^{n-2} \sqrt{2^{-\mu}}
\ket{[\mu]_{n-1}}\ket{[0]_{n-1}}\ket{1}_{a} \nonumber\\&+ \sqrt\frac{\pi^2}{\pi^2+24}\ket{\mu=0}\ket{j=0}\ket{0}_{a}.
\end{align}
To correctly distribute the probability amplitudes according to the finite geometric series $2^{-\mu}$, the rotation angles are chosen as:
\begin{equation}
\theta_k = 2\arccos\left( \frac{1}{\sqrt{2(1 - 2^{-(n-1-k)})}} \right)\,.
\end{equation}
Note that the registers $\mu$ and $j$ each consist of $n-1$ qubits; henceforth, we omit this subscript in the analytical expressions. Furthermore, we disregard the $\ket{0}_a$ branch, as this component of the state remains unaffected by the subsequent operations. 

Next, we generate an equal superposition over the basis states of the register $j$. This is achieved by applying controlled Hadamard gates to the $n-1$ qubits of $j$, the controls being determined by the value stored in the register $\mu$. As a result, the superposition created on $j$ is conditioned on the corresponding value of $\mu$, which yields the following state:
\begin{align}
&\sqrt{\frac{12}{(\pi^2+24)\;(1-2^{-(n-1)})}} 
   \sum_{\mu=0}^{n-2}\sqrt{2^{-2\mu}}\ket{\mu}\sum_{j=0}^{2^{\mu}-1}\ket{j}\ket1_a.
\end{align}
Next, we apply a cascade of $\mathrm{CX}$ gates, $\prod_{k=n-3}^{0} \mathrm{CX}(\mu_k;\mu_{k+1})$, to the qubits of the register $\mu$, where each gate is controlled on the $k^{\mathrm{th}}$ qubit and targets the $(k+1)^{\mathrm{th}}$ qubit. The cascade is applied in descending order, starting from $k=n-3$.
After this sequence, an additional $\mathrm{CX}$ gate is applied to the least significant qubit (LSB) of the register $\ket{\mu}$, controlled on the
ancilla qubit $a$. This ensures that the $a=0$ branch continues to correspond to $\mu=0$ in the binary representation. Following these operations, the resulting superposition over the bitstrings $[\mu]_{n-1}$ contains exactly one active qubit located at position $\mu$, thereby realising a one-hot encoding of the register $\mu$. Note that, despite this change in representation, the overall quantum state remains unchanged from the previous step.                                      

Then a sequence of $\mathrm{CX}$ gates, $\prod_{i=0}^{n-2} \mathrm{CX}(\mu_i; j_i)$ is applied to the register $j$
where each gate is controlled on the $i^{\mathrm{th}}$ qubit of the register $\mu$ and targets the corresponding qubit of the register $j$. These gates are applied sequentially for $i = 0, \dots, n-2$. Following this sequence, on the $a=1$ branch the conditional superposition in
register $j$ occupies the interval $2^{\mu} \le j < 2^{\mu+1}$. Consequently, the components with $j \neq 0$ are organised into the nested boxes $B_{\mu}$ defined in Eq.~ \eqref{boxed}. The overall quantum state can therefore be written as:
\begin{align}\label{boxfitlap}
&\sqrt{\frac{12}{(\pi^2+24)\;(1-2^{-(n-1)})}} 
   \sum_{\mu=0}^{n-2}\sqrt{2^{-2\mu}}\ket{\mu}\sum_{j \in B_{\mu}}\ket{j}\ket{1}_a \nonumber \\&  + \sqrt\frac{\pi^2}{\pi^2+24}\ket{\mu=0}\ket{j=0}\ket{0}_{a}\,.
\end{align}

For the components labelled by the nested boxes $B_\mu$, with $j \in B_\mu$, an inequality test is then applied to the $(n-1)$-qubit registers $j$ and $\mu$. This procedure introduces a fresh flag qubit $f$, allowing postselection on the joint success branch defined by $f = 0$. The inequality is evaluated with respect to an equal superposition prepared on a separate $n_{ref}$-qubit ancilla reference register,
\[
\sum_{m=0}^{M-1} \ket{[m]_{n_{ref}}}.
\]
For each value of $\mu$, an overall weighting factor proportional to $1/j$ is applied to the components of the register with $j \neq 0$ by implementing the following inequality test, conditioned on the joint state of the registers $\mu$ and $j$:
\begin{equation}\label{ineq1}
2^{2  \mu}M > {m}\;j^2.
\end{equation}
The inequality is chosen so that the ratio $m/{M}$ is strictly less than $1$. This guarantees that the prepared amplitudes are exactly proportional to the desired target coefficients, and ensures that the success probability of the target state approaches a bounded asymptotic value as $n$ becomes large.

To test this inequality, we first compute and then compare the left-hand side ($\mathrm{LHS}$) and the right-hand side ($\mathrm{RHS}$). Because the integer $M$ is chosen as a power of $2$, the $\mathrm{LHS}$ value of $2^{2\mu}M$ can be computed in place with no quantum gates by simply relabeling the qubits. As explained in Section II.C of \cite{Su:2021lut}, we can interleave zeroed qubits between the qubits of the one-hot representation of $\mu$ to yield $2\mu$. This unary representation of $2\mu$ intrinsically acts as a binary representation of $2^{2\mu}$. For example, if we have a five-qubit one-hot string for $\mu=2$, given by $00100$, interleaving zeros to the left of each bit yields $\underline{0}0\underline{0}0\underline{0}1\underline{0}0\underline{0}0$. This new string represents $2\mu=4$ in unary, but evaluates exactly to $2^{2\mu}=16$ in standard binary. Lastly, to multiply by the factor $M$, we pad the least significant bits with $n_{ref}=\log_2 M$ zeros.  In practice, we do not need to actually include these zeroed ancillae and we can simply relabel the qubits of $\mu$. Consequently, the non-Clifford cost of the inequality test depends entirely on the RHS computation.

The $\mathrm{RHS}$ of the inequality, $mj^2$, is obtained by first computing $j^2$ and then multiplying the result by $m$. Since the index register $j$ contains $n-1$ qubits, Lemma 7 of \cite{Su:2021lut} gives a Toffoli cost of $(n-1)^2-(n-1)$ for evaluating $j^2$. Moreover, Lemma 10 of \cite{Su:2021lut} shows that the bitwise multiplication $m \times j^2$ can be implemented with at most $4(n-1)n_{\mathrm{ref}}-n_{\mathrm{ref}}$ Toffoli gates. Therefore, the total Toffoli cost of evaluating the right-hand side is $(n-1)^2-(n-1)+4(n-1)n_{\mathrm{ref}}-n_{\mathrm{ref}}$. For the qubit cost, the squaring step uses $2(n-1)$ scratchpad qubits to store the value of $j^2$. These same $2(n-1)$ qubits are then reused in the multiplication step, so $j^2$ does not need to be uncomputed beforehand. The multiplication requires a further $n_{\mathrm{ref}}$ ancilla qubits. Hence, the total number of ancilla qubits required for computing the $\mathrm{RHS}$ is $2(n-1)+n_{\mathrm{ref}}$.

Finally, after testing the inequality if we apply a Hadamard gate to the most significant qubit $d$, controlled on ancilla $a=1$, the state from Eq.\eqref{boxfitlap} becomes: 
\begin{align}\label{fis}
&\sqrt\frac{\pi^2}{\pi^2+24}\ket{\mu=0}\ket{j=0}\ket{0}_{a;d}\ket{0}_f\ket{0}_{ref} \notag \\
&+ \sqrt{\frac{12}{(\pi^2+24)\;(1-2^{-(n-1)})}}\Biggl\{\sum_{\mu=0}^{n-2} \sum_{j  \in B_\mu} \sum_{m=0}^{Q-1} 
\frac{2^{-\mu}}{\sqrt{{M}}}\ket{\mu}\ket{j}\notag\\&\otimes\ket{m}_{ref}\ket{1}_a\ket{0}_f\ket{+}_d \Biggr\} + |\Psi^\perp\rangle.
\end{align}                                                     
In this expression, $\ket{\Psi^\perp}$ captures the orthogonal failure state of the inequality test. The success branch is governed by the $Q$ specific values of $m$ that satisfy the inequality:
\begin{equation}\label{Q}
\begin{aligned}
 Q = \Bigg\lceil {\frac{2^{2\mu}\; M\;}{j^2}}\Bigg\rceil\;\;.
\end{aligned}
\end{equation}                                                                                     

In the large-${M}$ limit, the ceiling function can be safely neglected. However, working with a finite ${M}$ introduces a state preparation error, denoted $\varepsilon_{\mathrm{prep}}$, which we analyse in the following subsection. The resulting state in large-$M$ limit is shown in Eq.\eqref{app: O_prep}. Finally, the total probability of successfully post-selecting the flag state $f = 0$ is evaluated by summing the squared amplitudes of the target state, yielding:
\begin{align}\label{psucclap}
p^{(2)}
&=  \frac{\pi^2}{\pi^2+24}+ {\frac{12}{(\pi^2+24)\;(1-2^{-(n-1)})}}\notag\\
&\times\Biggl\{ \sum_{\mu=0}^{n-2}
     \sum_{j \in B_\mu}\frac{2^{-2\mu}}{M}\Bigg\lceil {\frac{2^{2\mu}\; M\;}{j^2}}\Bigg\rceil\Biggr\} \notag\\
p^{(2)}_{\;n\gg1} &= \frac{\pi^2}{\pi^2+24} + \frac{12}{\pi^2+24}
     \sum_{j=1}^{2^{n-1}-1} \frac{1}{j^2}\notag\\
&\approx \frac{3\pi^2}{\pi^2+24}= 0.874.
\end{align}
In the large-$n$ limit, we approximate the summation over the index $j$ using the Basel series, $\sum_{j=1}^{\infty} 1/j^2 = \pi^2/6$. Note that the entire setup of preparing nested boxes and subsequently testing the inequality to realise the desired $1/j$ amplitude on the state $\ket{j}$ can be viewed within the framework of quantum rejection sampling \cite{Lemieux:2024pmt}, as described in Fig.~\ref{fig:crs}. The initial rotation step assigns a probability of $\frac{24}{\pi^2+24}$ to the branch $j\ne0$, while the acceptance probability from Fig.~\ref{fig:crs} is $\frac{\pi^2}{12}$. Hence, as a final check, to find the overall success probability of preparing the state $\ket{j}$, we take the product of these two probabilities, yielding $\frac{2\pi^2}{\pi^2+24}$, which is exactly the same as that in Eq.~\eqref{psucclap}. 

\begin{figure}[htb!]
           \includegraphics[width=\columnwidth]{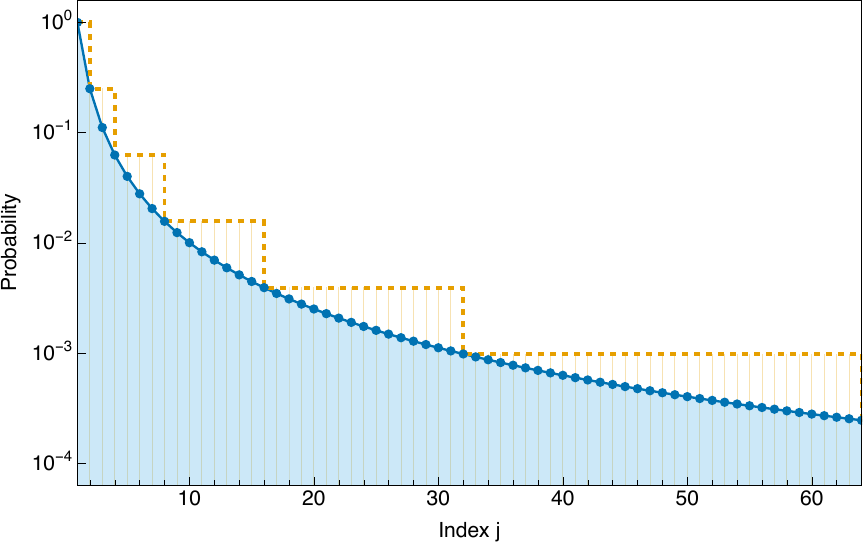}
		\caption{Probability per state $j$ within box $B_\mu$ during the quantum rejection sampling protocol. We first prepare a proposal state with uniform probabilities, $2^{-2\mu}$ within each box $B_\mu$, forming the orange dashed step-function envelope. The blue points indicate the target $1/j^2$ probability distribution. The inequality test in Eq.~\eqref{ineq1}, effectively rejects the probability mass corresponding to the unshaded region between the orange envelope and the blue curve. The overall success probability is the ratio of the target (blue) area ($\sum_{j=1}^{\infty} 1/j^2 = \pi^2/6$) to the area under the proposal envelope ($\sum_{\mu=0}^{n-2} 2^{-\mu} = 2$ for $n \gg 1$), yielding an asymptotic acceptance probability of $\pi^2/12$.}
		\label{fig:crs}
\end{figure}
We now analyse the complexity of each step in the state-preparation procedure required for the $\mathrm{LCU}$ block-encoding of the SLAC Laplacian.
\begin{itemize}
    \item The step, which prepares the superposition over the register \(\mu\) with amplitudes proportional to \(\sqrt{2^{-(\mu-2)}}\), can be implemented using $(n-2)-$ controlled  $R_y-$ rotation gates.
    \item The next step generates an equal superposition over the register $j$, conditioned on the unary-encoded register $\mu$. This is achieved by applying controlled-Hadamard gates, triggered by $\mu$, to the $n-1$ qubits of $j$. As detailed in~\cite{Su:2021lut}, utilizing a catalytic state allows each controlled-Hadamard to be synthesized with exactly one Toffoli gate. Consequently, the total Toffoli cost for this state-preparation step is strictly $n-1$.
    \item The unary-encoded register \(\mu\) is converted into a one-hot encoding using a sequential cascade of \(\mathcal{O}(n)\) \(\mathrm{CX}\) gates.
    \item The crucial step of preparing the superposition in register \(j\), conditioned in a box \(B_\mu = \{2^{\mu} \le j < 2^{\mu+1}\}\), can be achieved using  a series of \(\mathcal{O}(n)\) \(\mathrm{CX}\) gates triggered by the one-hot \(\mu\) register.
    \item The subsequent step, preparing an equal superposition over the reference register \(\ket{m}\), involves \(\mathcal{O}(\log M) \equiv \mathcal{O}(n_{ref})\) Hadamard gates, where \(n_{ref}\) is the number of qubits in the register.
    \item As established, the inequality can be tested using a $(2(n-1)+n_{\mathrm{ref}})$-bit comparator, which has cost $2(n-1)+n_{\mathrm{ref}}$. Furthermore, the Toffoli cost of computing the RHS $mj^2$ is $(n-1)^2-(n-1)+4(n-1)n_{\mathrm{ref}}-n_{\mathrm{ref}}$. Therefore, the total cost of squaring, multiplying, and then testing the inequality is $(n-1
    )^2+(n-1)+4(n-1)n_{\mathrm{ref}}$.
\end{itemize}
This concludes the state-preparation stage of the $\mathrm{LCU}$ construction for block-encoding the SLAC Laplacian. The high-level circuit diagram detailing this procedure is depicted in Fig.~\ref{fig:Opreplap}.
\subsection{Block-encoding using the LCU}\label{subsec:analyticallapbe}
We first recall the standard $\mathrm{LCU}$ construction for a matrix $M$: 
\newtheorem{df1}{Definition}
\begin{df1}\label{lm1}
Let $M = \sum_{j=0}^{N-1} \alpha_j W_j$ be a linear combination of unitaries $W_j$ with $\alpha_j \geq 0$ for all $j$ and $\sum_{j=0}^{N-1} \alpha_j = 1$.
Let $O_\alpha$ be any operator that satisfies 
$O_\alpha \ket{0^n} = \sum_{j=0}^{N-1} \sqrt{\alpha_j} \ket{j}$, 
where $n$ is the number of qubits used to represent $\ket{j}$, and define 
$\mathrm{select}(W) = \sum_{j=0}^{N-1} \ket{j} \bra{j} \otimes W_j$.
Then,
\begin{equation}
\begin{aligned}
    & \bigl(O_\alpha^\dagger \otimes I\bigr) \, \mathrm{select}(W) \, \bigl(O_\alpha \otimes I\bigr) \ket{0^n} \ket{\psi}\\
    & = \ket{0^n} M \ket{\psi} + \ket{\Psi^\perp}\quad\mathrm{where}\;\;\;\;\bigl(\ket{0^n} \bra{0^n} \otimes I\bigr) \ket{\Psi^\perp} = 0.
\end{aligned}
\end{equation}
\end{df1}
In our setting, the matrix $M$ is a non-unitary circulant matrix, and we therefore require additional ancilla registers to obtain a block-encoding of the SLAC Laplacian using the oracles $O_{\mathrm{prep}}$, $O_{\mathrm{sgn}}$, and $O_{\mathrm{select}}$. In particular, relative to Definition~\ref{lm1} we include the registers defined in~Eq.~\eqref{qreg}; projecting these ancilla registers onto the all-zero state yields the desired action of the SLAC Laplacian.
Accordingly, we define the unitary implementing the SLAC Laplacian construction as:
\begin{equation}
    U_\mathrm{slac}^{(2)} = O_\mathrm{prep}^{(2)\dagger} O_\mathrm{cpy}^{(2)}O_\mathrm{select}^{(2)}O_\mathrm{sgn}^{(2)}O_\mathrm{prep}^{(2)}.
\end{equation}

\begin{figure*}[htb!]
\centering

\begin{minipage}[t]{0.48\textwidth}
    \centering
    \vspace{0pt} 
    \subfloat[]{
        \includegraphics[width=\linewidth, height=3.0cm, keepaspectratio]{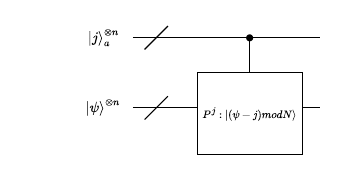}
    }
\end{minipage}
\hfill
\begin{minipage}[t]{0.48\textwidth}
    \centering
    \vspace{14.85pt} 
    \subfloat[]{
        \includegraphics[width=8.2 cm, height=3.5cm, keepaspectratio]{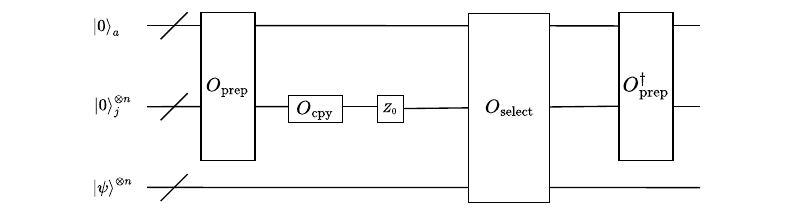}
    }
\end{minipage}

\caption{Complete $\mathrm{LCU}$ architecture showing:
(a) The SELECT operator $O_{\mathrm{select}}$, which implements the permutation $P^{j}$ via a modular subtractor, mapping the system state $\ket{j}_a\ket{\psi}$ to $\ket{j}_a P^{j}\ket{\psi}$; ancillary registers used for inequality tests are omitted; see Appendix~\ref{app: sel}.
(b) The full $\mathrm{LCU}$ block structure composed of $O_{\mathrm{prep}}$, $O_{\mathrm{sgn}}$, $O_{\mathrm{select}}$, $O_{\mathrm{cpy}}$, and $O_{\mathrm{prep}}^{\dagger}$, which implements the block-encoding of the SLAC Laplacian in Eq.~\eqref{beslac2} upon postselection of all ancillae in the state $\ket{\bar{0}}$.
For the SLAC Laplacian, the operator $O_{\mathrm{sgn}}$ is realised as a $Z$ gate acting on the least significant qubit of the register $j$, whereas for the first-order SLAC derivative, see Eq.~\eqref{beslac1}, the corresponding phase operator $O_{\mathrm{phase}}$ is used; see Appendices~\ref{app: O_sgn} and~\ref{app: phase}. The operator $O_{\mathrm{cpy}}$, implemented with $\mathrm{CX}$ gates, marks the success and failure branches; see Appendix~\ref{app: phase}.}
\label{fig:lcu-architecture}
\end{figure*}

The preparation step of~Eq.~\eqref{fis} prepares the computational basis state $j$ with the desired $1/j$ amplitudes, and its action is described in Appendix~\ref{app: O_prep}. Since the prepared state also includes undesirable branches, the copy oracle marks these by copying the flag values to a marker qubit, as shown in Appendix~\ref{app:cpy}. Because the matrix has coefficients with alternating signs, we also provide a sketch of the implementation of the $\mathrm{SIGN}$ oracle in Appendix~\ref{app: O_sgn}. The $\mathrm{SELECT}$ circuit, described in Appendix~\ref{app: sel} and shown in Figure~\ref{fig:lcu-architecture}, implements the cyclic shift matrices for each computational basis state, thereby capturing the circulant structure of the Laplacian. Finally, applying $U^{(2)}_{\mathrm{slac}}$ to the ancilla and main registers and postselecting the ancilla registers in the state $\ket{\bar{0}}$ yields:

\begin{widetext}
\begin{equation}\label{slac2be}
\begin{aligned}
&\Bigl(\bra{[0]_{(n-1)}}\bra{[0]_{n}}\bra{0}_f\bra{0}_a\bra{[0]_{m}}\bra{0}_C\otimes I_n \Bigr) 
U_\mathrm{\mathrm{slac}} ^{(2)}
\Bigl(\ket{[0]_{(n-1)}}\ket{[0]_{n}}\ket{[0]_{m}}\ket{0}_a\ket{0}_f\ket{0}_C\otimes I_n \Bigr)\ket{\psi}_{\text{sys}} \\
&= \Bigl(\bra{[0]_{(n-1)}}\bra{[0]_{n}}\bra{\bar0}_f\bra{0}_a\bra{[0]_{m}}\bra{0}_C
\otimes I_n \Bigr) 
O_{\text{prep}}^\dagger\Biggl\{-\sqrt{\frac{\pi^2}{\pi^2+24}}\ket{\mu=0}\ket{j = 0}\ket{0}_a\otimes P^0_{sys}+\sqrt\frac{12}{\pi^2+24}
      \sum_{\mu=0}^{n-2}\ket{\mu}\\
      & \sum_{j \in B_{\mu}}\frac{(-1)^{1+j}}{j}\ket{j}\otimes\biggl(\ket{0}_d\otimes P^{j}_{sys}+\ket{1}_d\otimes P^{N-j}_{sys}\biggr)\Biggr\}\ket0_a\ket{0}_{f}\ket{0}_C\ket{\psi}_{sys}+ \sqrt{1-p^{(2)}}\;\ket\omega_{fail}\ket{1}_{f}\ket{1}_C\ket{\psi}_{sys}\\
&=  \frac{3}{\pi^2+24}\Biggl\{\frac{-\,\pi^{2}}{3}\otimes P^0_{sys}
+ \sum_{j=1}^{N/2-1}\frac{(-1)^{1+j}\;2}{j^2}\;\otimes P^j_{sys}
+ \sum_{j=N/2+1}^{N-1}\frac{(-1)^{1+j}\;2}{j^2}\;\otimes P^{N-j}_{sys}
\Biggr\}\ket{\psi}_{\text{sys}} \\
&= \frac{3}{\pi^2+24}\;
\tilde{\Delta}^{(2)}_{\mathrm{slac}} \;\;\ket{\psi}_{\text{sys}}.
\end{aligned}
\end{equation}
\end{widetext}

Consequently, the block-encoding of the second-order SLAC derivative (the Laplacian) is obtained with a constant rescaling factor:
\begin{equation}\label{scf}
\alpha^{(2)} = \frac{\pi^2+24}{3} \approx 11.29. 
\end{equation} 

\subsubsection{Error Analysis of the block-encoding}
In this subsection, we bound the overall error of the block-encoding, which in turn allows us to bound the number of qubits required in the reference register for the inequality test, i.e.\ $n_{ref}$. There are two main sources of error in the state‐preparation protocol that propagate through the $\mathrm{LCU}$. Let $\varepsilon_{\mathrm{trunc}}$ denote the error due to encoding the large-$N$ limit of the finite–lattice coefficients--that is, preparing a state that yields the coefficients in
Eq.~\eqref{slaccf} rather than those in Eq.~\eqref{slacc}. This error can be bounded by the
operator norm of the difference matrix $\Delta^{(2)}_\mathrm{slac} \;-\;\tilde\Delta^{(2)}_\mathrm{slac},$
between these two operators. Since the matrix is circulant (and normal), its operator norm
equals its maximum eigenvalue. We use the Fourier Transform, given by:
\begin{equation*}
  F \;=\;\frac{1}{\sqrt{N}}\,\omega^{-jk},
  \quad
  \omega \equiv e^{2\pi i/N},
\end{equation*}
where $\omega$ is the $N^{\mathrm{th}}$ root of unity. It is straightforward to verify that, upon
diagonalising the Laplacian with the QFT, the diagonal entries (eigenvalues) are given by the Fourier
transform of the first row, and the maximum eigenvalue occurs at mode $k = N/2$.
We use this to bound the operator‐norm error of the SLAC Laplacian as shown below:
\begin{widetext}
    \begin{align} \label{slaclaperror}
&\Bigl\|
 \alpha \bigl(\bra0^a\otimes I_s\bigr)\,U^{(2)}_\mathrm{slac}\,\bigl(\ket0^a\otimes I_s\bigr)
 -{\Delta}^{(2)}_{\mathrm{slac}}
\Bigr\| \;=   
\Bigl\|
 \tilde{\Delta}^{(2)}_{\mathrm{slac}}
 -{\Delta}^{(2)}_{\mathrm{slac}}
\Bigr\| = \Bigl\|F^{\dagger}\tilde{D}F-F^{\dagger}{D}F\Bigl\| \;                                                 \notag
\end{align}
$\textrm{where}\;\; D=    \sum_{j=0}^{N-1}{\Delta}^{(2)}(j)\;e^{\frac{-i2\pi jk}{N}} \ket k \bra k \;\;\textrm{as}\;\;     \Delta^{(2)}_{\mathrm{slac}}=\sum_j \Delta^{(2)}(j)P^j $

\begin{align}
\therefore \;\;\Bigl\|
 \tilde{\Delta}^{(2)}_{\mathrm{slac}}
 -{\Delta}^{(2)}_{\mathrm{slac}}
\Bigr\|&=   max_k\left\{\sum_j\Bigl(\tilde\Delta^{(2)}(j)-\Delta^{(2)}(j)\Bigr)e^{\frac{-i2\pi jk}{N}}\right\} \notag \\
&=  \left\{\sum_j\Bigl(\tilde\Delta^{(2)}(j)-\Delta^{(2)}(j)\Bigr)e^{\frac{-i2\pi jk}{N}}\right\}_{k=N/2}\notag \\
\tilde\Delta^{(2)}(j)-\Delta^{(2)}(j) & = (-1)^{1+j}\left\{\frac{2}{j^2}-\dfrac{2\pi^2\,}{N^2\,\sin^2\!\Bigl(\tfrac{\pi j}{N}\Bigr)} + \frac{2\pi^2}{6n^2}\right\}  \notag \\
&\le (-1)^{1+j}\Biggl(\frac{2}{j^2} + \frac{2\pi^2}{3N^2} + \frac{2\pi^4j^2}{15N^4}- \frac{2}{j^2} - \frac{2\pi^2}{3N^2} \Biggr)\notag\\
&\le (-1)^{1+j}\Biggl(\frac{2\pi^4j^2}{15N^4} \Biggr)\notag\\
 \implies \;\;\Bigl\|
 \tilde{\Delta}^{(2)}_{\mathrm{slac}}
 -{\Delta}^{(2)}_{\mathrm{slac}}
\Bigr\|&\le \left|\sum_{j=1}^{N-1} (-1)^{1+j}\frac{2\pi^4j^2}{15N^4}  \exp\Biggl({\frac{-i2\pi j (N/2)}{N}}\Biggr)\right|= \left|-\frac{4\pi^4}{15N^4}\int_{j=1}^{N/2-1} j^2dj \right|\notag \\
& \therefore \;\;\Bigl\|\tilde{\Delta}^{(2)}_{\mathrm{slac}} -{\Delta}^{(2)}_{\mathrm{slac}}\Bigr\| = \mathcal{O}\left(\frac{1}{N}\right)\,.
\end{align}
\end{widetext}
Hence, the finite-lattice truncation error can be bounded as
\[
\varepsilon_{\mathrm{trunc}}
\;\le\; 
\bigl\|\Delta^{(2)}-\widetilde{\Delta}^{(2)}\bigr\|
\;=\; 
\mathcal{O}\!\left(\frac{1}{N}\right).
\]
The $\mathcal{O}(1/N)$ scaling of the truncation error is corroborated numerically: plotting the operator-norm difference between the two matrices for varying system size $N$ (see Fig.~\ref{fig:slac1infi}) confirms this expected behaviour.
This contribution is intrinsic to the finite lattice and is independent of the quantum algorithm. 
We denote the target total precision by $\varepsilon$ and require the total precision to satisfy
\[
\varepsilon_{\mathrm{trunc}}
+
\varepsilon_{\mathrm{prep}}
+\cdots
\le
\varepsilon .
\]
Here, $\varepsilon_{\mathrm{prep}}$ arises from using a large but finite value of $M$ and from neglecting the ceiling functions in the inequality test in Eq.~\eqref{Q}. The ellipsis represents the remaining algorithmic errors, such as those resulting from the block-encoding approximation and the QLSA.

Moreover, the lattice size is chosen so that
\[
\varepsilon_{\mathrm{trunc}}=\mathcal{O}(\varepsilon),
\]
which requires
\[
N=\Omega\!\left(\frac{1}{\varepsilon}\right).
\]

The error, $\varepsilon_\mathrm{prep}$ is upper bounded by the squared sum of the relevant amplitudes in Eq.~\eqref{fis}, i.e.,
\begin{align}
\varepsilon_{\rm prep}
&< 
\frac{12}{M\,(\pi^2+24)\,(1-2^{-(n-1)})}
\sum_{\mu=0}^{n-2}\sum_{j\in B_\mu}2^{-2\mu}
\notag\\
&=
\mathcal{O}\!\left(\frac{1}{M}\right).
\end{align}
Requiring this contribution to be of order $\varepsilon$ gives
\[
M=\Omega\!\left(\frac{1}{\varepsilon}\right).
\]
Since the lattice size is chosen as $N=\Omega(1/\varepsilon)$, we may take $M=\Theta(N)$. 
The reference register therefore requires
\[
n_{\mathrm{ref}}
=
\Theta(\log M)
=
\Theta\!\left(\log\frac{1}{\varepsilon}\right)
=
\Theta(\log N)
=
\Theta(n)
\]
qubits.

This establishes that we can define an efficient block-encoding for the SLAC Laplacian. Before doing so, we first recall the formal definition of a block-encoding of a non-unitary matrix $A$, as given below.
\begin{df1}\label{lm2}
Formally, the $(n + a)$-qubit unitary $U$ is an $(\alpha, a, \varepsilon)$-block-encoding of $A$ if:
\begin{equation}
    \Bigl\| A  - \alpha \bigl( \bra{0^a} \otimes I_n \bigr)\;U\; \bigl( \ket{0^a} \otimes I_n \bigr) \Bigr\| \leq \varepsilon.
\end{equation}
\end{df1}
\newtheorem{result}{Result}
Furthermore, a block-encoding is said to be efficient if it meets the criteria in the following definition.
\begin{df1}\label{df1}\cite{llyod}. Given a matrix $A \in \mathbb{C}^{N \times N}$, an $(\alpha, a, \varepsilon)$-block-encoding $U$ of $A$ is good if and only if the ratio 
$\alpha / \|A\| = \mathcal{O}\bigl(\mathrm{polylog}\left(N\right)\bigr)$, and additionally, $a = \mathcal{O}\bigl(\mathrm{polylog}\left(\tfrac{N}{\varepsilon}\right)\bigr)$ and the circuit depth of $U$ is $\mathcal{O}\bigl(\mathrm{polylog}\left(\tfrac{N}{\varepsilon}\right)\bigr)$. The block-encoding is called optimal if $\alpha = \Theta\bigl(\|A\|\bigr)$.
\end{df1}
With this definitions in place, we now state Proposition~\ref{corl1}, which gives an optimal $\mathrm{LCU}$-based construction of an $\varepsilon$-approximate block-encoding of the $n$-qubit SLAC Laplacian.
\newtheorem{prop}{Proposition}
\begin{prop}\label{corl1}
An $\varepsilon$-approximate block-encoding of the $n$-qubit SLAC Laplacian can be constructed within the $\mathrm{LCU}$ framework by combining the inequality-test--based state-preparation procedure in~Eq.~\eqref{eq:prep2} with the modular-subtractor–based $\mathrm{SELECT}$ step in~Eq.~\eqref{sel2}. This construction has gate complexity of $\mathcal{O}(n^2)$ and requires $\mathcal{O}(n)$ ancilla qubits. Using Definition~\ref{lm2}, the resulting implementation defines a $((\pi^2+24)/3,\,\mathcal{O}(n),\,\varepsilon)$ optimal block-encoding of the SLAC Laplacian, given by
\begin{widetext}
  \begin{equation}\label{beslac2}
\Biggl\| \tilde{\Delta}^{(2)} - \alpha^{(2)}
\Bigl(\bra{[\mu]_{(n-1)}}\bra{[0]_{n}}\bra{[0]_{m}}\bra{0}_C\bra{\bar0}_f\otimes I_n \Bigr) 
U_{\mathrm{slac}}^{(2)} 
\Bigl(  \ket{[\mu]_{(n-1)}}\ket{[0]_{n}}\ket{[0]_{m}}\ket{0}_C\ket{\bar0}_f \otimes I_n \Bigr) 
\Biggr\| \le \varepsilon.
\end{equation}  
\end{widetext}
\end{prop}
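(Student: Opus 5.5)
The proof assembles the ingredients already in place. We verify three things in turn: the block has the right form, the error budget holds, and the resources and optimality are as claimed.

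\textbf{Step 1: the block equals the rescaled LCU operator.}
Take $U^{(2)}_{\mathrm{slac}} = O_{\mathrm{prep}}^{(2)\dagger} O_{\mathrm{cpy}}^{(2)} O_{\mathrm{select}}^{(2)} O_{\mathrm{sgn}}^{(2)} O_{\mathrm{prep}}^{(2)}$. Apply Definition~\ref{lm1} with the ancilla registers of Eq.~\eqref{qreg} playing the role of $\ket{0^n}$. By Eq.~\eqref{fis}, in the large-$M$ limit the success branch ($f=0$) carries amplitude $\sqrt{|\widetilde\Delta^{(2)}_j|\cdot 3/(\pi^2+24)}$ on each index $j$. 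The remaining pieces act as follows:
\begin{itemize}
\item $O_{\mathrm{sgn}}$ is a $Z$ on the least significant qubit of $j$ together with the $a=0$ phase, so it supplies $(-1)^{1+j}$ and the minus sign on the $j=0$ term.
\item The Hadamard on $d$ splits each term equally into $P^j$ and $P^{N-j}$.
\item $O_{\mathrm{cpy}}$ removes the failure branch $\ket{\Psi^\perp}$ from the projected block.
\end{itemize}
The computation in Eq.~\eqref{slac2be} then shows that projecting all ancillae onto the reference state gives exactly $\tfrac{3}{\pi^2+24}\,\widetilde\Delta^{(2)}_{\mathrm{slac}}$. This identifies $\alpha^{(2)} = (\pi^2+24)/3$, as in Eq.~\eqref{scf}.

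\textbf{Step 2: error budget.}
Split the total error as $\varepsilon_{\mathrm{trunc}} + \varepsilon_{\mathrm{prep}}$.
\begin{itemize}
\item The truncation term is bounded by the Fourier-symbol argument of Eq.~\eqref{slaclaperror}, giving $\|\Delta^{(2)} - \widetilde\Delta^{(2)}\| = \mathcal{O}(1/N)$.
\item The preparation term comes from the ceiling in Eq.~\eqref{Q}. Each amplitude is perturbed by at most $2^{-\mu}/\sqrt{M}$ times a single extra value of $m$. Summing over the boxes gives a state-vector error of $\mathcal{O}(1/M)$. The LCU sandwich $O_{\mathrm{prep}}^\dagger(\cdot)\,O_{\mathrm{prep}}$ maps a state-preparation error $\delta$ to an operator-norm error in the block of at most $2\delta$. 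Multiplying by the constant $\alpha^{(2)}$ leaves this at $\mathcal{O}(1/M)$.
\end{itemize}
Choosing $N, M = \Omega(1/\varepsilon)$, and in particular $M = \Theta(N)$, so that $n_{\mathrm{ref}} = \Theta(n)$, yields the bound in Eq.~\eqref{beslac2}.

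The main obstacle is the propagation of $\varepsilon_{\mathrm{prep}}$. The post-selected amplitudes are renormalised by the success probability, which must be controlled uniformly in $n$. I would handle this by using the lower bound $p^{(2)} \ge \pi^2/(\pi^2+24)$ from Eq.~\eqref{psucclap}. I would then bound the difference between the exact amplitude $\sqrt{\lceil x\rceil/M}$ and $\sqrt{x/M}$ term by term, rather than relying only on the squared-sum heuristic.

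\textbf{Step 3: resource count.}
Tally the costs of the state preparation:
\begin{itemize}
\item $\mathcal{O}(n)$ controlled rotations, controlled Hadamards and CX ladders;
\item squaring, multiplication and comparison, costing $(n-1)^2 + (n-1) + 4(n-1)n_{\mathrm{ref}} = \mathcal{O}(n^2)$ Toffolis since $n_{\mathrm{ref}} = \Theta(n)$.
\end{itemize}
Both the forward and the uncomputed (inverse) preparation are counted. The modular-subtractor SELECT of Eq.~\eqref{sel2} adds $\mathcal{O}(n)$ gates, and the sign and copy oracles add $\mathcal{O}(1)$. The total is $\mathcal{O}(n^2)$ gates. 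The ancilla count is $(n-1) + n + 2(n-1) + n_{\mathrm{ref}} + \mathcal{O}(1) = \mathcal{O}(n)$.

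\textbf{Step 4: optimality.}
Per Definition~\ref{df1}, we need $\alpha^{(2)} = \Theta(\|\Delta^{(2)}\|)$. The spectrum of the SLAC Laplacian is $\{k^2\}$ for $k$ in the first Brillouin zone, so $\|\Delta^{(2)}\| = \pi^2$, the maximum attained at $k = \pi$ (up to $\mathcal{O}(1/N)$ corrections). Since $\alpha^{(2)} \approx 11.29$ is a constant, $\alpha^{(2)}/\|\Delta^{(2)}\| = \mathcal{O}(1)$, and the block-encoding is optimal.
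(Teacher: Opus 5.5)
Your proof follows the paper's route. The LCU identity of Eq.~\eqref{slac2be} fixes $\alpha^{(2)}=(\pi^2+24)/3$. The error splits into $\varepsilon_{\mathrm{trunc}}=\mathcal{O}(1/N)$ and $\varepsilon_{\mathrm{prep}}=\mathcal{O}(1/M)$ with $M=\Theta(N)$. The gate count is dominated by the $\mathcal{O}(n^2)$ inequality-test arithmetic, and optimality comes from a constant ratio $\alpha^{(2)}/\|\tilde\Delta^{(2)}\|$.

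Two of your deviations are improvements. Your $\mathcal{O}(n)$ count for SELECT is correct: it is a ripple-carry add/subtract of $j$ controlled on $d$, so the paper's $\mathcal{O}(n^2)$ is a harmless overcount. Your $\|\Delta^{(2)}\|=\pi^2$ also supplies the \emph{lower} bound on the norm that $\alpha=\Theta(\|A\|)$ actually needs, whereas the paper's proof only quotes $\|\tilde\Delta^{(2)}\|\le\pi^2$.

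Two local steps need fixing, though. First, the sign oracle is only the $Z$ on the least significant qubit of $j$. It gives $(-1)^j=-(-1)^{1+j}$ for every $j$, including $j=0$, whose target sign $(-1)^{1}=-1$ already matches $\tilde\Delta_0=-\pi^2/3$. The extra phase on the $a=0$ branch that you describe would flip the $j=0$ term relative to all the others. Up to sign it would encode $\tfrac{\pi^2}{3}P^0+\sum_{j\neq0}\tilde\Delta_jP^j$, whose symbol is $2\pi^2/3-k^2$ rather than $-k^2$.

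Second, your preparation-error step does not give what you claim. Put $c=12/[(\pi^2+24)(1-2^{-(n-1)})]$. An extra component of norm $\sqrt{c}\,2^{-\mu}/\sqrt{M}$ on each of the $2^{\mu}$ indices of $B_\mu$ has squared norms summing to $\mathcal{O}(1/M)$. The state-vector error is therefore $\mathcal{O}(M^{-1/2})$, and the generic $\delta\mapsto2\delta$ sandwich bound then yields only $\mathcal{O}(M^{-1/2})$. That forces $M=\Omega(\varepsilon^{-2})$ rather than $M=\Theta(N)$; the proposition survives, since $n_{\mathrm{ref}}=\Theta(n)$ either way.

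The clean argument, which is what the paper's ``squared sum of amplitudes'' bound encodes, is this. The projected block equals $\sum_j p_j s_j W_j$ exactly, with signs $s_j$, shifts $W_j\in\{P^j,P^{N-j}\}$, and $p_j$ the total squared amplitude left on LCU index $(j,d)$ after the test. The block is linear in these weights, so its error is at most $\sum_j|p_j-p_j^{\mathrm{ideal}}|\le\sum_\mu 2^{\mu}\,c\,2^{-2\mu}/M=\mathcal{O}(1/M)$. Your term-by-term square-root bound also reaches $\mathcal{O}(1/M)$ once you use $x=2^{2\mu}M/j^2>M/4$, which gives $\sqrt{\lceil x\rceil}-\sqrt{x}<1/\sqrt{M}$.

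The ``renormalisation by the success probability'' you single out as the main obstacle never arises. The subnormalisation is fixed at $\alpha^{(2)}$, and the failure branch is removed by projecting $C$ onto $\ket{0}$, so no bound on $p^{(2)}$ is needed.

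Finally, ``exactly'' in Step~1 is slightly too strong. By Eq.~\eqref{boxfitlap}, the $j\neq0$ weights carry an extra factor $(1-2^{-(n-1)})^{-1}$ relative to the $j=0$ weight; the paper drops this too. It is an $\mathcal{O}(1/N)$ discrepancy, absorbed into $\varepsilon_{\mathrm{trunc}}$ or removed by retuning $\theta_{\mathrm{ini}}$.
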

\begin{proof}
\normalfont
The block-encoding of the $n$-qubit SLAC Laplacian follows the LCU construction described in Definition~\ref{lm1}. As established in Result~\ref{lem:prep}, the state-preparation routine has a gate complexity of $\mathcal{O}(n^2)$, while Result~\ref{lem:sel} proves that the multi-controlled subtractor-based SELECT oracle (see Appendix \ref{app: sel}) similarly incurs a cost of $\mathcal{O}(n^2)$. Auxiliary COPY and SIGN oracles (Appendices~\ref{app:cpy} and~\ref{app: O_sgn}) introduce only negligible overhead to the total depth. 

The arithmetic logic for the intermediate square and inequality test adds a Toffoli cost of $(n-1)^2+(n-1)+4(n-1)n_{ref} = \mathcal{O}(n^2)$. Regarding spatial resources, the scratchpad registers (for the inequality test) require $2(n-1) + n_{ref}$ qubits, resulting in a total ancilla requirement of $4n + 2n_{ref} \equiv \mathcal{O}(n)$. 

Finally, the efficiency of the construction is verified by the ratio of the rescaling factor to the operator norm; with $\alpha^{(2)} \approx 11.29$ and $\|\tilde{\Delta}^{(2)}\| \leq \pi^2$, the ratio $\alpha^{(2)}/\|\tilde{\Delta}^{(2)}\| $ is a constant independent of $n$. According to Definition~\ref{lm2}, this yields an optimal block-encoding of the SLAC Laplacian.
\end{proof}

\section{Block-encoding of the first order SLAC derivative }\label{sec: slac1be}
As shown in~\cite{slac2}, the first-order SLAC derivative truncated on a periodic lattice of $N$ sites is translationally invariant, with matrix elements $\Delta^{(1)}_{j-k}$ at row $j$ and column $k$, repeating periodically and given by the following expression:
\begin{equation}\label{slacfo}
\Delta_{j}^{(1)}=\left\{
\begin{array}{ll}
\frac{-i\pi}{N}, & \text{if } j = 0 \\[10pt]
\frac{\pi (-1)^{1+j}}{N}\Big(\frac{\cos({\pi j / N})}{\sin({\pi j / N})} + i\Big), & \text{if } j \neq 0\,.
\end{array}
\right.
\end{equation}
Notice that at sites $0$ and $N/2$ this coefficient is the same, and for points beyond the mid-lattice point the coefficients (in any row of the circulant matrix) begin to repeat in reverse order due to translational invariance. Such a circulant matrix can be block-encoded using techniques analogous to those employed for the SLAC Laplacian operator. Since, as discussed previously, the bottleneck of this block encoding lies in the state preparation, we employ the infinite-lattice limit $N \rightarrow \infty$ of the above coefficients:
\begin{equation}\label{slac1cf}
\tilde{\Delta}_{j}^{(1)}=\left\{
\begin{array}{ll}
0, & \text{if } j = 0,N/2 \\[10pt]
\frac{(-1)^{1+j}}{(j)}e^{i\pi j/N}, & \mathrm{if\;} 0<j<N/2\\[10pt]
\frac{(-1)^{1+j}}{(N-j)}e^{i\pi j/N} & \mathrm{otherwise}\,.
\end{array}
\right.
\end{equation}

The truncation error for the Fourier symbol of the exact SLAC derivative in Eq.~\eqref{slacfo} with that of the truncated representation in Eq.~\eqref{slac1cf} is shown in the upper-right panel of Fig.~\ref{fig:slac_symbol_comparison}.

In the $\mathrm{LCU}$ construction, we aim to prepare an unnormalised state whose amplitudes are given by the square roots of the coefficients defined above. Since the same amplitudes repeat from $j = N/2$, we can employ an inequality test, analogous to that used in the state preparation for the SLAC Laplacian, to generate these amplitudes on $n-1$ qubits. A Hadamard gate is then applied to the $n^{\text{th}}$ qubit, thereby duplicating these $N/2$ amplitudes across the second half of the $N$ states.
\begin{figure}[t!]
            \includegraphics[width=\columnwidth]{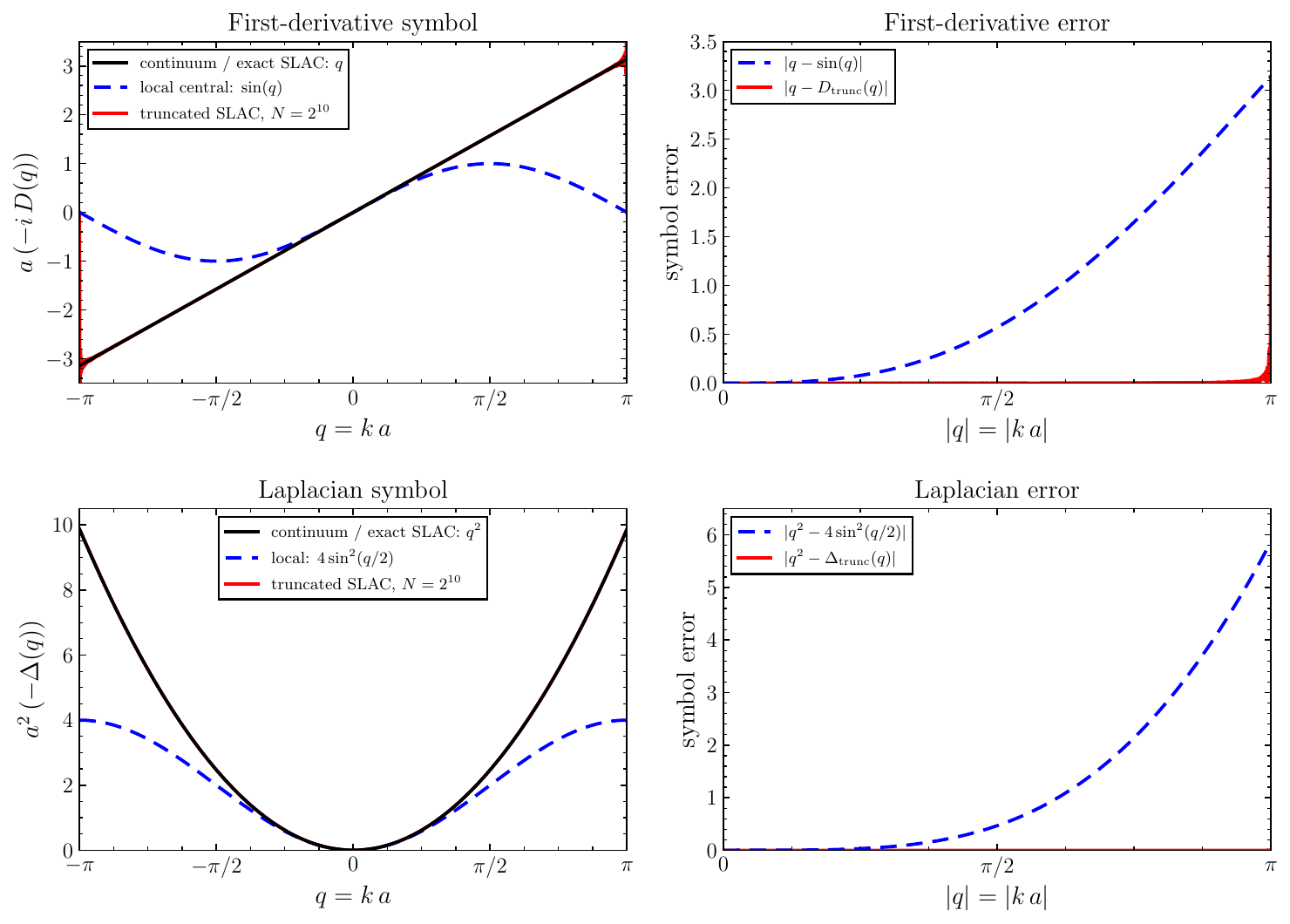}
\caption{
Comparison of local finite-difference and truncated SLAC symbols on the first Brillouin zone, with dimensionless momentum $q=ka$. 
    The exact SLAC first derivative and Laplacian have symbols $q$ and $q^2$, respectively, while the local finite-difference approximations have symbols $\sin(q)$ and $4\sin^2(q/2)$. 
    The truncated SLAC symbols are obtained from the finite Fourier sums
    $D_{\mathrm{trunc}}(q)=2\sum_{j=1}^{N-1}(-1)^{j+1}\sin(jq)/j$ and
    $\Delta_{\mathrm{trunc}}(q)=\pi^2/3+4\sum_{j=1}^{N-1}(-1)^j\cos(jq)/j^2$.
    The plots show that truncating the SLAC representation produces a small finite-truncation error, whereas local finite differences exhibit a dispersion error away from $q=0$.
}

		\label{fig:slac_symbol_comparison}
\end{figure}

\subsection{State preparation for first order SLAC derivative}\label{subsec:prepfo}
We proceed by implementing the same steps as described in Subsection~\ref{subsec:preplap} on the $n-1$ qubits of the ancilla register $j$, i.e., we ``place'' each value of $j$ into boxes, where each box is controlled by another ancilla register $\mu$. An inequality test is then performed so that the success probability of the state preparation remains nominal and the block-encoding scales with a nominal scaling factor. A set of flag qubits is employed to separate the relevant states from the unwanted ones in the overall superposition. Since the sum of the coefficients, $1/j$ scales as the harmonic number \(H_N\), the contribution from the \(1/j\) terms grows only logarithmically with \(N\). Thus, for \(N=2^n\), one expects the corresponding block-encoding to have a subnormalisation constant scaling as \(\mathcal{O}(n)\).

\begin{figure}[t!]
            \includegraphics[width=\columnwidth]{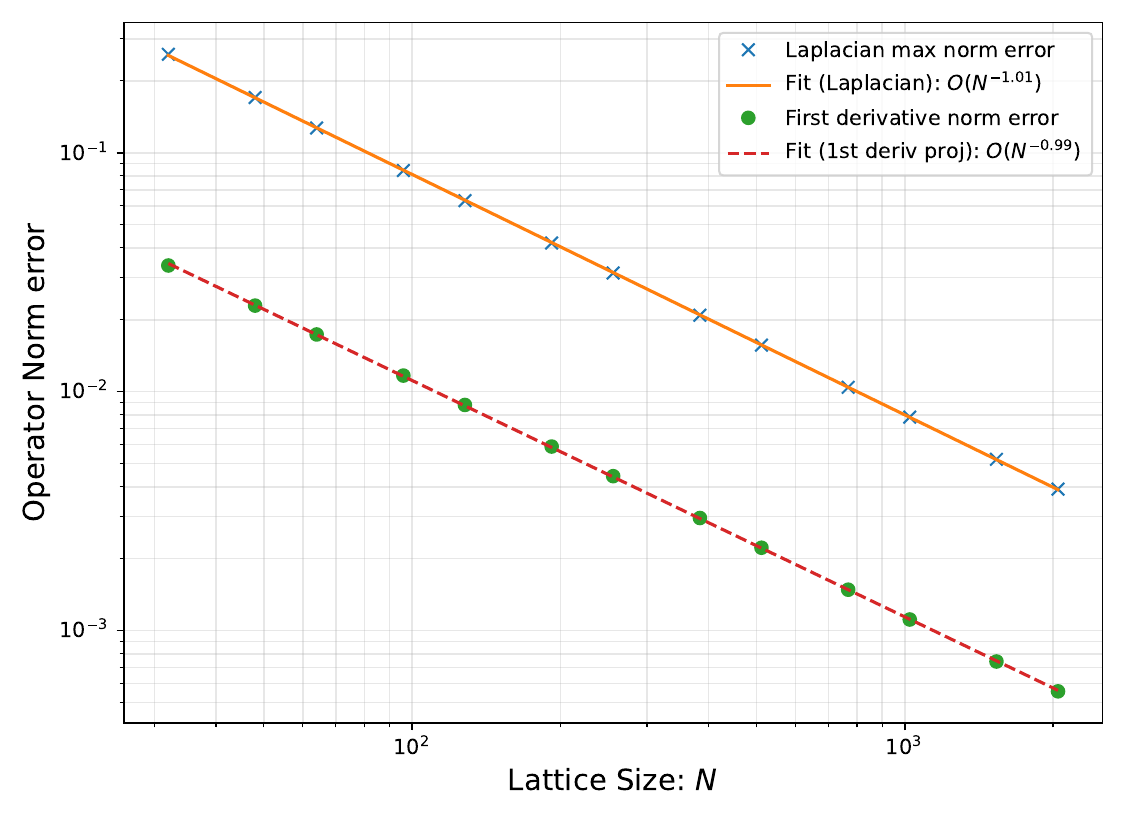}
\caption{
Log--log plot showing the scaling of the operator-norm error between the ideal SLAC operators and their block-encoded counterparts as a function of the lattice size~$N$. The second-order SLAC Laplacian of~Eq.~\eqref{slacc} is compared with its truncated form~Eq.~\eqref{slaccf}, and the first-order SLAC derivative of~Eq.~\eqref{slacfo} is compared with its truncated form~Eq.~\eqref{slac1cf}. Both operators exhibit an error scaling of $\mathcal{O}(1/N)$, in agreement with the analytical estimates of~Eq.~\eqref{slaclaperror} and~Eq.~\eqref{errorfo}. For the first-order derivative, the high-momentum modes have been projected out to emphasize the physically relevant subspace.
}

		\label{fig:slac1infi}
\end{figure}
Note that in the $\mathrm{LCU}$ construction for the first-order SLAC derivative, the coefficient associated with the $j=0$ state is absent. Therefore, the initial rotation used in the Laplacian state-preparation protocol is not required. Instead, the protocol for the first-order derivative begins by preparing the following superposition over the unary register $\mu$:
\begin{align}
&\sqrt{\frac{1}{(n-1)}} 
\sum_{\mu=0}^{n-2}\ket{[\mu]_{n-1}} \ket{[0]_{n-1}}_j\ket{0}_d.
\end{align}
This superposition is efficiently prepared using a sequential cascade of parameterized $R_y$ rotations. We begin by applying an unconditional $R_y(\theta_0)$ gate to the first qubit, $\mu_0$. Next, we apply a sequence of controlled-$R_y(\theta_k)$ rotations to each subsequent qubit $\mu_k$, controlled on the state of the preceding qubit $\mu_{k-1}$, for $k = 1, \dots, n-3$. To ensure that the probability amplitude is partitioned equally among all $n-1$ unary branches, the classical rotation angles are precomputed as $\theta_k = 2\arccos\!\bigl(1/\sqrt{n - 1 - k}\bigr)$. 

Following this unary state preparation, an equal superposition is generated in the register $j$ by applying a series of Hadamard gates, each controlled by the corresponding active qubits of the unary register $\mu$:
\begin{align}
&\sqrt{\frac{1}{(n-1)}} 
\sum_{\mu=0}^{n-2}  
\sqrt{2^{-\mu}}\;\ket{[\mu]_{n-1}}\sum_{j=0}^{2^\mu-1}\ket{[j]_{n-1}}_j\ket{0}_d.
\end{align}

Next, we convert the unary-encoded register $\mu$ into a one-hot representation by applying a cascade of $\mathrm{CX}$ gates, $\mathrm{CX}_k^{k+1}$, sequentially starting from the qubit indexed $(n-3)$, followed by an $\mathrm{X}$ gate applied to the least significant qubit (LSB) of $\mu$. This step closely follows the procedure used in the Laplacian state-preparation protocol. 

We then prepare the conditional superposition over the register $j$ corresponding to each value of $\mu$, such that $\mu \ge 0 \Rightarrow j = 2^{\mu}, \dots, 2^{\mu+1}-1$. In particular, the sequence of gates $\prod_{i=0}^{n-2} \mathrm{CX}_{\mu_i}^{j_i}$ is applied, with each $\mathrm{CX}$ gate controlled on the $i^{\mathrm{th}}$ qubit of the register $\mu$ and targeted on the corresponding qubit of the register $j$. This yields a superposition in which the values $j = 0, \dots, N/2 - 1$ are grouped into boxes labelled $B_\mu$,
\begin{align}\label{boxfito1}
&\sqrt{\frac{1}{({n}-1)}}
   \sum_{\mu=0}^{n-2}\sqrt{2^{-\mu}}\ket{\mu}\sum_{j \in B_{\mu}} \ket{[j]_{n-1}}.
\end{align}
To perform the inequality test, we first prepare the reference register $\mathit{ref}$ in an equal superposition over the $M = 2^{n_{ref}}$ values, and then check that the condition $m/M < 1$ holds by testing the following inequality:
\begin{equation}\label{ineqder1}
\begin{aligned}
M\,\times{2^{\mu}} > {m}\times j, \qquad \mathrm{for\,} j \neq 0.
\end{aligned}
\end{equation}

The inequality is evaluated such that, for the chosen parameters, the state preparation achieves a nominal success probability. Consequently, the amplitude associated with each state $\ket{j}$ is multiplied by the square root of,
\begin{equation}
    \begin{aligned}
 &Q = \Bigg\lceil {\frac{2^{\mu}\; M\;}{j}}\Bigg\rceil.
\end{aligned}
\end{equation}

After applying the inequality test on the $(n-1)$ qubits of the register $j$, the state obtained has its success branch identified by $\ket{0}_{f}$. For $M \gg 0$, the ceiling functions appearing in the above expressions can be neglected, introducing a preparation error $\varepsilon_{\mathrm{prep}}$ that can be bounded analogously to the Laplacian state-preparation case. Therefore, after performing the inequality test and applying a Hadamard gate to the most significant qubit $d$, the resulting state becomes:
\begin{equation}\label{fis1}
\begin{aligned}
&\sqrt{\frac{1}{M\,({n}-1)}}\sum_{\mu=0}^{n-2}\ket{\mu}\sum_{j \in B_{\mu}}\sqrt{2^{-\mu}}\ket{[j]_{n-1}}\sum_{m=0}^{Q-1}\ket{m}
\ket{+}_d\ket{0}_{f}\\&+\sqrt{1-p^{(1)}}\ket\omega_f\,.
\end{aligned}
\end{equation}
Here, $\ket{+}_d$ denotes the equal superposition state prepared in the most significant qubit of the register $j$. 

The parameter $p^{(1)}$ represents the success probability associated with the implementation of the inequality test and is defined as the sum of the squared amplitudes within the success branch:
\begin{align}\label{normalisingfo}
&p^{(1)}
=  \frac{1}{\,(n-1)}\Biggl\{\sum_{\mu=0}^{n-2}
     \sum_{j \in B_\mu}\frac{2^{-\mu}}{M}.Q\Biggr\}\notag\\
&p^{(1)}_{\;n\gg1} = \frac{1}{(n-1)}\sum_{j=1}^{N/2-1}\frac{1}{j}\notag\\
&\qquad\;\;\approx\frac{1}{(n-1)}H_{N/2-1}\approx 0.693 + \frac\gamma n\,.
\end{align}
In the asymptotic limit of large $n$, the success probability converges to a strict constant, $\lim_{n \to \infty} p^{(1)} = \ln(2) \approx 0.693$. Here, $H_N$ denotes the $N^{\text{th}}$ harmonic number, which can be asymptotically expressed as $H_N \approx \ln N + \gamma$, where $\gamma$ is the Euler-Mascheroni constant. For finite $n$, the success probability strictly bounds the asymptotic limit from above. Because the initial state is prepared with a significantly high probability of success ($\geq 69.3\%$), this state-preparation protocol requires zero rounds of amplitude amplification.

All preceding state-preparation steps share the same complexity as in the Laplacian case; however, the arithmetic cost of the inequality test is significantly reduced. The $\mathrm{LHS}$ of the inequality can be evaluated in a target register of size $(n-1)+n_{\mathrm{ref}}$. Unlike the Laplacian case, we do not need to interleave with extra zeros. Because $\mu$ is encoded in one-hot unary, its active wire intrinsically represents the binary value $2^\mu$. To multiply by $M=2^{n_{\mathrm{ref}}}$, we simply pad the register with $n_{\mathrm{ref}}$ zeroed ancillae. Consequently, the Toffoli cost of computing the $\mathrm{LHS}$ is strictly zero. 

Computing the right-hand side ($\mathrm{RHS}$), $m \times j$, requires a standard quantum multiplier writing to a target register of size $(n-1)+n_{\mathrm{ref}}$, which incurs a cost of $2(n-1)n_{\mathrm{ref}}-n_{\mathrm{ref}}$ Toffoli gates. The inequality can then be tested using an $((n-1)+n_{\mathrm{ref}})$-bit comparator with a Toffoli cost of $(n-1)+n_{\mathrm{ref}}$. Summing these contributions, the total cost of computing the $\mathrm{LHS}$, $\mathrm{RHS}$, and executing the comparator is exactly $2(n-1)n_{\mathrm{ref}}+(n-1)$. As will be shown later, the required reference register size scales as $n_{\mathrm{ref}}=\mathcal{O}(n)$. Therefore, the overall Toffoli complexity of implementing this first-order derivative inequality test is bounded by $\mathcal{O}(n^2)$.

\subsection{LCU for the first order SLAC derivative}
\label{SLAC1LCU}
To construct the block-encoding using the $PREP$ and $\mathrm{SELECT}$ circuits, one can follow steps similar to those in the SLAC Laplacian's block-encoding. We denote all the operations involved in the $\mathrm{LCU}$ of the first-order SLAC derivative as 
\begin{equation}
    U_\mathrm{slac}^{(1)} = O_\mathrm{prep}^{(1)\dagger} O_\mathrm{select}^{(1)}O_\mathrm{phase}^{(1)}O_\mathrm{cpy}^{(1)}O_\mathrm{prep}^{(1)},
\end{equation}
and list the intermediate states obtained at each stage of the $\mathrm{LCU}$ in the Appendix, starting from Appendix~\ref{app: O_prep}. The copy oracle first marks the desired branch of the superposition by flipping a qubit $C$, such that the desired branch of the prepared state is located at $C=0$ as shown in Eq.~\eqref{cpy1}. The phase oracle is then applied to the state obtained in Eq.~\eqref{fis1}, producing alternating signed weights in the $\mathrm{LCU}$ as in Eq.~\eqref{signfo}.  The $\mathrm{SELECT}$ oracle acts with controls on the register $j$ and $d$: a $0$-control on $d$ applies the adder circuit implementing $P^{j}$, while a $1$-control applies $P^{N-j}$ on the system register, as expressed in Eq.~\eqref{selfo}, thereby realising the circulant structure in the block-encoding. Finally, all operations in $PREP$ are uncomputed, which discards the irrelevant parts of the superposition. In this way, an efficient block-encoding of the first-order SLAC derivative is implemented, as shown in the following derivation:
\begin{widetext}
\begin{equation}
\begin{aligned}\label{slac1be}
&\bigl(
  \bra{[0]_{n-1}}_\mu\,\bra{[0]_n}\bra{[0]_{n_{ref}}}_{\mathrm{ref}}\bra{0}_{C}\bra{0}_{f} \otimes I_n
\bigr)\,
U_{\mathrm{slac}}^{(1)}\,
\bigl(
  \ket{[0]_{n-1}}_\mu\,\ket{[0]_n}\ket{[0]_{n_{ref}}}_{\mathrm{ref}}\ket{0}_{C}\ket{0}_{f} \otimes I_n
\bigr)\ket{\psi}_{sys}
\\
&\;=\;
\bigl(
  \bra{[0]_{n-2}}_\mu\,\bra{[0]_{n_{ref}}}_{\mathrm{ref}}\,\bra{[0]_n}\,\bra{0_{f}}\bra{0}_{C} \otimes I_n
\bigr)\,
O_{\mathrm{prep}}^\dagger
\sqrt{\frac{1}{2\,(n-1)}}\Biggr\{\sum_{\mu=0}^{n-2}\ket{\mu}
   \sum_{j \in B_{\mu}} \exp\Big(\frac{i\pi j}{N}\Big)(-1)^{1+j}\\&\otimes\sqrt{\frac1j}\ket{[j]_{n-1}}\Bigg(\ket{0}_d\otimes P^{j}_{sys}+\ket{1}_d\otimes P^{N-j}_{sys}\Bigg)\Biggl\}\ket{0}_{f}\ket{0}_C+ \ket{\omega}_{fail}\ket{1}_C\ket{\psi}_{sys}
\\
&\;\approx\;
\frac{1}{2\,(n-1)} 
\Biggl\{ 
   \sum_{j=1}^{N/2-1} \exp\!\Bigl(\frac{i\pi j}{N}\Bigr)(-1)^{1+j}
   \Biggl(
      \frac1j\otimes P^j
      + 
      \frac1j\otimes P^{N-j}
   \Biggr)\Biggl\}\ket{\psi}_{sys}
\\
&=
\frac{1}{2\,(n-1)}  \,\tilde\Delta^{(1)}\ket{\psi}_{sys}\,.%
\end{aligned}
\end{equation}
\end{widetext}
Hence, we obtain the block-encoding of the first-order SLAC derivative operator $\tilde{\Delta}^{(1)} / \alpha^{(1)}$, where the rescaling factor is $\alpha^{(1)} = 2(n - 1)$.

In direct analogy with the Laplacian case, we bound the error in the block-encoding of the first-order SLAC derivative by estimating the operator norm of the difference between the ideal and truncated matrices. Both matrices are antisymmetric and circulant, and hence can be diagonalised by applying the quantum Fourier transform to their first row. For the full first-order derivative operator, the largest eigenvalue occurs near the highest momentum mode, at $k=N/2+1$. However, in the present setting we restrict attention to the low-momentum sector, since the high-$k$ modes are not physically probed. Applying the same truncation-error analysis as in the Laplacian case then yields the following bound on the norm error of the first-order derivative:
\begin{widetext}
    \begin{align}\label{errorfo}
&\bigl\|
 \alpha \bigl(\bigl\langle 0^a\bigr|\otimes I_s\bigr)\,U\,\bigl(\lvert 0^a\rangle\otimes I_s\bigr)
 -{\Delta}^{(1)}_{\mathrm{slac}}
\bigr\| \;=   
\bigl\|
 \tilde{\Delta}^{(1)}_{\mathrm{slac}}
 -{\Delta}^{(1)}_{\mathrm{slac}}
\bigr\| = \bigl\|F^{\dagger}\tilde{D}F-F^{\dagger}{D}F\bigl\| \;                                                 \notag\\
&\mathrm{where}\;\; D=    \sum_{j=1}^{N/2-1}{\Delta}^{(1)}(j)\;e^{\frac{-i2\pi jk}{N}} \ket k \bra k \;\;\mathrm{as}\;\;     \Delta^{(1)}_{\mathrm{slac}}=\sum_j \Delta^{(1)}P^j  \notag\\
&\therefore \;\;\bigl\|
 \tilde{\Delta}^{(1)}_{\mathrm{slac}}
 -{\Delta}^{(1)}_{\mathrm{slac}}
\bigr\|=   \max\left\{\sum_j(\tilde\Delta^{(1)}(j)-\Delta^{(1)}(j))e^{\frac{-i2\pi jk}{N}}  \right\} = \left\{\sum_j[\tilde\Delta^{(1)}(j)-\Delta^{(1)}(j)]e^{\frac{-i2\pi jk}{N}}\right\}_{k=N/2+1}        \notag\\
& \tilde\Delta^{(1)}(j)-\Delta^{(1)}(j) \le (-1)^{1+j}\left(\frac{1}{j} - \frac{1}{j} + \frac{\pi^2j}{3N^2} \right)e^{i\pi j/N} = (-1)^{1+j}\left(\frac{\pi^2j}{3N^2} \right)e^{i\pi j/N} \notag\\
& \mathrm{Define}\hspace{5pt} \Pi_k = \sum_{|k|<=k_{max}}\ket{k}\bra{k} \notag\\
&
\Pi_k (\tilde\Delta^{(1)}(j) - \Delta^{(1)}(j))\Pi_k^\dagger \;\le\;
\left| \sum_{j=1}^{N/2-1} 
\left( \frac{\pi^2 j}{3N^2} \right) 
\exp\!\left( i j \left(\pi - \tfrac{2\pi k}{N}\right)\right) \right|_{k_{max}} = \left| \sum_{j=1}^{N/2-1} 
\left( \frac{\pi^2 j}{3N^2} \right) 
\big(\exp\!\left( i\theta_k\right)\big)^j \right|    \notag\\&
\bigl\|
 \tilde{\Delta}^{(1)}_{\mathrm{slac}} 
 -{\Delta}^{(1)}_{\mathrm{slac}}
\bigr\|= \left| \sum_{j=1}^{N/2-1} 
\left( \frac{\pi^2 j}{3N^2} \right) 
\big(r_k\big)^j \right|_{k_{max}=N/2.5}\equiv \mathcal{O}\left(\frac{1}{N}\right) \,.
\end{align}
\end{widetext}
Here $r_k=\exp{(i\theta_k)},\quad s.t.\,\,\theta_k=(\pi-2\pi k/N)$ and $\Pi_k$ denotes a projector in momentum space that restricts the momentum modes to values less than a chosen cutoff $k_{max} = N/2.5$, thereby excluding the high-momentum modes that could otherwise worsen the error scaling. This error bound on the projected-out subspace of the eigenvalues is justified, as one seldom probes physics in this high-energy (projected-out) regime of any theory.

Within the projected subspace, the finite-lattice truncation error for the first-order SLAC derivative behaves analogously to the Laplacian case, namely
\[
\varepsilon_{\mathrm{trunc}}
=
\mathcal{O}\!\left(\frac{1}{N}\right).
\]
If we set the target precision to be $\varepsilon$, then as $\varepsilon_{trunc}\leq \varepsilon$, the lattice size can be chosen as $N=\Omega(1/\varepsilon)$.

The finite reference register used in the inequality test of Eq.~\eqref{ineqder1} introduces a preparation error. 
Neglecting ceiling effects, this error is bounded by the squared amplitudes immediately before the inequality test in Eq.~\eqref{fis1}, giving
\[
\varepsilon_{\mathrm{prep}}
\le
\frac{1}{4M(n-1)}
\sum_{\mu=2}^{n-1}\sum_{j\in B_\mu}2^{-(\mu-1)}
=
\mathcal{O}\!\left(\frac{1}{M}\right).
\]
Requiring this contribution to be of order $\varepsilon$ gives
\[
M=\Omega\!\left(\frac{1}{\varepsilon}\right).
\]
Since the lattice size is chosen such that $N=\Omega(1/\varepsilon)$, we may take
\[
M=\Theta(N).
\]
Therefore, the reference register requires
\[
n_{\mathrm{ref}}
=
\Theta(\log M)
=
\Theta\!\left(\log\frac{1}{\varepsilon}\right)
=
\Theta(\log N)
=
\Theta(n)
\]
qubits.

We formalise the results of this section in the following proposition, which gives an efficient block-encoding of the first-order SLAC derivative operator.
\begin{prop}\label{corl2}
An $\varepsilon$-approximate block-encoding of the $n$-qubit first-order SLAC derivative can be obtained within the $\mathrm{LCU}$ framework by combining the inequality-test–based state-preparation routine in~Eq.~\eqref{PREP1} with the modular-subtractor–based $\mathrm{SELECT}$ step in~Eq.~\eqref{selfo}. This construction incurs a gate complexity of $\mathcal{O}(n^2)$ and requires $\mathcal{O}(n)$ ancilla qubits. Applying Definition~\ref{lm2}, this construction yields a $(2(n-1),\,\mathcal{O}(n),\,\varepsilon)$ efficient block-encoding of the first-order SLAC derivative, explicitly satisfying
\begin{widetext}
\begin{equation}\label{beslac1}
\left\| {\tilde{\Delta}^{(1)}} - \alpha^{(1)}
\left(  \bra{[\mu]_{(n-1)}}\bra{[0]_{n}}\bra{[0]_{m}}\bra{0}_C\bra{0}_f\otimes I_n \right) 
U_{\mathrm{slac}}^{(1)} 
\left(  \ket{[\mu]_{(n-1)}}\ket{[0]_{n}}\ket{[0]_{m}}\ket{0}_C\ket{0}_f \otimes I_n \right) 
\right\| \leq \varepsilon.
\end{equation}  
\end{widetext}
\end{prop}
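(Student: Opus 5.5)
The proof will mirror that of Proposition~\ref{corl1}. We establish three things in turn: correctness of the block-encoded matrix with subnormalisation $\alpha^{(1)}=2(n-1)$; an error budget $\varepsilon_{\mathrm{trunc}}+\varepsilon_{\mathrm{prep}}\le\varepsilon$; and the gate and ancilla counts.

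\emph{Correctness.} We track the state through $U^{(1)}_{\mathrm{slac}}=O^{(1)\dagger}_{\mathrm{prep}}O^{(1)}_{\mathrm{select}}O^{(1)}_{\mathrm{phase}}O^{(1)}_{\mathrm{cpy}}O^{(1)}_{\mathrm{prep}}$.
\begin{itemize}
\item In the large-$M$ limit, the state in Eq.~\eqref{fis1} has weight $2^{-\mu}\cdot 2^{\mu}/j=1/j$ on each $\ket{j}$ with $j\in B_\mu$. This uses the uniform $1/(n-1)$ weight on $\mu$, the $2^{-\mu}$ weight from the conditional Hadamards, and the inequality test of Eq.~\eqref{ineqder1}.
\item The Hadamard on $d$ splits each weight equally between the branches $d=0$ and $d=1$.
\item $O_{\mathrm{phase}}$ then attaches $(-1)^{1+j}e^{i\pi j/N}$.
\item $O_{\mathrm{select}}$ applies $P^j$ or $P^{N-j}$ according to $d$.
\item $O_{\mathrm{cpy}}$ ensures that failed inequality-test branches have zero overlap with $\ket{0}_C$.
\end{itemize}
Projecting back onto the prepared state with $O_{\mathrm{prep}}^\dagger$ gives $\sum_j\sqrt{w_j}\,c_j\sqrt{w_j}$, where the $w_j$ are the unsigned weights and $c_j$ the signed coefficients. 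This reproduces Eq.~\eqref{slac1be}, i.e.\ $\tilde\Delta^{(1)}/(2(n-1))$.

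One point needs checking: the phase on the $P^{N-j}$ branch must match Eq.~\eqref{slac1cf}. The index $N-j$ carries $(-1)^{1+N-j}e^{i\pi(N-j)/N}=(-1)^{j}e^{-i\pi j/N}$ up to sign conventions. So $O_{\mathrm{phase}}$ must depend on $d$, applying the conjugate phase with an extra sign on the $d=1$ branch. This is what makes the encoded operator anti-Hermitian. We will state this dependence explicitly, citing Appendix~\ref{app: phase}.

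\emph{Error.} By the triangle inequality,
\[
\bigl\|\Delta^{(1)}-\alpha^{(1)}(\cdots)\bigr\|\le\bigl\|\Delta^{(1)}-\tilde\Delta^{(1)}\bigr\|+\alpha^{(1)}\bigl\|\tilde\Delta^{(1)}/\alpha^{(1)}-(\text{encoded block})\bigr\|.
\]
\begin{itemize}
\item The first term is $\mathcal O(1/N)$ on the projected subspace $\Pi_k$, by Eq.~\eqref{errorfo}.
\item For the second term, dropping the ceilings perturbs each amplitude by at most $\mathcal O(1/\sqrt{M})$ relative to $\sqrt{2^{-\mu}/(n-1)}$. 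The resulting block differs from the ideal one by $\mathcal O(\varepsilon_{\mathrm{prep}})=\mathcal O(1/M)$ in operator norm, via a Cauchy--Schwarz bound on $\langle\tilde\phi|\,\mathrm{select}\,|\phi\rangle$.
\end{itemize}
This second term is the main obstacle. It is multiplied by $\alpha^{(1)}=2(n-1)$, so we need $M=\Omega(n/\varepsilon)$ rather than $\Omega(1/\varepsilon)$. This still gives $n_{\mathrm{ref}}=\mathcal O(n+\log(1/\varepsilon))=\mathcal O(n)$ once $N=\Omega(1/\varepsilon)$. Taking $N=\Omega(1/\varepsilon)$ and $M=\Theta(nN)$ then yields total error at most $\varepsilon$.

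\emph{Resources.} We sum the gate costs:
\begin{itemize}
\item unary preparation and box-fitting: $\mathcal O(n)$;
\item multiplier and comparator: $2(n-1)n_{\mathrm{ref}}+(n-1)=\mathcal O(n^2)$;
\item phase oracle: $\mathcal O(n^2)$ at worst, as it is a phase gradient on $j$ plus $Z$ on the least significant bit;
\item SELECT: $\mathcal O(n^2)$, by Result~\ref{lem:sel};
\item uncomputation: doubles the above.
\end{itemize}
The total is $\mathcal O(n^2)$.

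The ancillas are the registers $\mu,j,d$, the reference register, the flag and copy qubits, and the $\mathcal O(n)$ arithmetic scratch, for $\mathcal O(n)$ in total.

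Finally, $\|\tilde\Delta^{(1)}\|=\Theta(\pi)$ on the projected subspace, so $\alpha^{(1)}/\|\tilde\Delta^{(1)}\|=\mathcal O(n)=\mathcal O(\mathrm{polylog}\,N)$. This satisfies Definition~\ref{df1} for an efficient, though not optimal, block-encoding.
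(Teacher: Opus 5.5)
Your proposal is correct and follows the same route as the paper's proof: LCU correctness as in Eq.~\eqref{slac1be}, $\mathcal{O}(n^2)$ cost from Results~\ref{cor:prep1} and~\ref{lem:sel}, $\mathcal{O}(n)$ ancillas, and $\alpha^{(1)}/\|\tilde\Delta^{(1)}\|=\mathcal{O}(n)$; your two additions are genuine fixes the paper glosses over, since as written in Eq.~\eqref{selfo} both $P^{j}$ and $P^{N-j}$ receive $(-1)^{1+j}e^{i\pi j/N}$, which does not reproduce Eq.~\eqref{slac1cf} (so the phase must depend on $d$), and the preparation error is amplified by $\alpha^{(1)}$, so $M=\Omega(n/\varepsilon)$, which still gives $n_{\mathrm{ref}}=\mathcal{O}(n)$. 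One small slip, harmless for your conclusion: a Cauchy--Schwarz bound on $\mathcal{O}(1/\sqrt{M})$ amplitude errors only yields an $\mathcal{O}(1/\sqrt{M})$ block error, whereas the $\mathcal{O}(1/M)$ claim follows directly because the encoded block is exactly circulant with weights $2^{-\mu}\lceil 2^{\mu}M/j\rceil/(2(n-1)M)$, each exceeding the ideal $1/(2(n-1)j)$ by less than $2^{-\mu}/(2(n-1)M)$, and these excesses sum to at most $1/M$.
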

\begin{proof}
\normalfont
The construction of the $n$-qubit first-order SLAC derivative block-encoding follows the LCU framework as formalized in Definition~\ref{lm1}. The total gate complexity is dominated by two primary components: state preparation and the SELECT oracle. As established in Result~\ref{cor:prep1}, the state-preparation routine requires $\mathcal{O}(n^2)$ gates, while Result~\ref{lem:sel} shows that the modular-subtractor-based SELECT oracle similarly incurs a cost of $\mathcal{O}(n^2)$. Auxiliary components, including the COPY and PHASE oracles (Appendices~\ref{app:cpy} and~\ref{app: phase}), introduce only negligible overhead to the total depth. 

The inequality test requires $n+n_{\mathrm{ref}}$ scratchpad qubits. This gives a total ancilla count of $3n+2n_{\mathrm{ref}}$, which is $\mathcal{O}(n)$.

Finally, as the rescaling ratio $\alpha^{(1)}/\|\tilde{\Delta}^{(1)}\|$ scales linearly with $n$ (where $\|\tilde{\Delta}^{(1)}\| = \pi)$, the construction successfully yields an  $(\alpha^{(1)}, \mathcal{O}(n), \varepsilon)$ efficient block-encoding with an overall gate complexity of $\mathcal{O}(n^2)$, satisfying the criteria of Definition~\ref{lm2}.
\end{proof}

\section{Breaking the translational invariance of the SLAC operators}
\label{brokentransinv}
The SLAC derivative operators exhibit a circulant structure that arises from their inherent translational invariance, permitting their block-encodings to be efficiently constructed through diagonalization in the Fourier basis. In this section, however, we elaborate on why such a diagonalisation-based approach was not adopted in our study. In many physically relevant situations---for instance, when external potentials are introduced or when symmetry is explicitly broken at the discretisation stage---the translational symmetry of the operator no longer holds. Under these conditions, the $\mathrm{LCU}$-based construction developed in this work, which integrates the inequality-test based state-preparation circuit with an adder-based $\mathrm{SELECT}$ operation, provides a more natural and adaptable framework than the diagonalisation method, particularly for handling operators with broken or spatially varying symmetries.

To illustrate the procedure, we first consider a case study in which an inhomogeneous mass term is incorporated into the SLAC derivative operator in the form of a potential, denoted by $V_d$. A related scenario was examined in Section~4.7.1 of~\cite{qft3}, where the generation of the ground state of an inhomogeneous-mass quantum field theory was investigated by introducing a potential comprising a {single non-zero diagonal element}---chosen to exceed the free-field mass. This localized perturbation, commonly referred to as a {point-defect potential}, can be expressed as 
\[
V_{\mathrm{pd}}(x) = \lambda\,\delta(x - x_0),
\]
which, in operator form, corresponds to an $N = 2^n$-dimensional diagonal matrix whose entries are all zero except at $x = x_0$, where the diagonal element takes the value $\lambda$. The Hamiltonian of the resulting quantum system is therefore 
\[
H = \Delta_{\mathrm{SLAC}}^{(1;2)} + V_{\mathrm{pd}}.
\]
This Hamiltonian can be efficiently block-encoded by combining the block-encodings of the SLAC operators---constructed in Sections~\ref{sec: slac1be} and~\ref{sec: slaclap}---with that of the potential term, which is trivially block-encoded owing to its {1-sparse} structure. By invoking the {sum-of-block-encodings} technique described in Section~4.3 of~\cite{qsvt}, one can thereby construct a block-encoding of the complete Hamiltonian, which in this case is {not translationally invariant}.

In general, if an efficient block-encoding of a generic potential operator $V(x)$ is accessible, the corresponding Hamiltonian with broken translational symmetry can be readily constructed using the aforementioned case study. Notably, this statement also holds when the SLAC operator itself is block-encoded through the diagonalization approach, wherein one block-encodes the diagonal matrix (see~\cite{be2}) of SLAC eigenvalues, denoted by $D_{\mathrm{SLAC}}$, and implements the following similarity transformation via a quantum circuit:
\begin{equation*}
    \Delta_{\mathrm{SLAC}} = QFT^{\dagger} D_{\mathrm{SLAC}} QFT.
\end{equation*}
Thus, in both formulations---whether employing the diagonalisation-based or the $\mathrm{LCU}$-based block-encoding---the construction of a non-translationally invariant Hamiltonian, $H = \Delta_{\mathrm{SLAC}} + V(x),$ relies on the same key requirement: access to an efficient block-encoding of the potential operator $V(x)$, which typically constitutes the principal computational bottleneck in such constructions.

Next, we consider a more unconventional setting in which the derivative operator is still motivated by the SLAC formalism, but no longer retains exact translational invariance at the level of its matrix entries. Specifically, suppose that each row \(j=0,\ldots,2^n-1\) is assigned a row-dependent cutoff \(c_j\), beyond which the ideal SLAC coefficients, proportional to \(1/j\) or \(1/j^2\), are set to zero. The resulting SLAC-type operator is therefore no longer circulant, and is represented by a non-circulant matrix as,
\begin{equation}\label{slacmod}
    \Delta_{\mathrm{mod}} = \sum_{j=1}^{\lfloor N/2 \rfloor} 
    \widetilde{\Delta}^{(1;2)}_{j}\, 
    M_j \left(P^{j} - P^{N-j}\right),
\end{equation}
where $\widetilde{\Delta}^{(1;2)}_{j}$ denote the coefficients of the ideal SLAC operators as defined in Eq.~\eqref{slaccf} and Eq.~\eqref{slac1cf}, and $M_j$ is a diagonal masking matrix given by
\begin{equation}\label{mask}
    (M_j)_{kk} = \chi_k(j) =
    \begin{cases}
        1, & j \leq c_k, \\[4pt]
        0, & j > c_k.
    \end{cases}
\end{equation}

We now outline the construction of the block-encoding of $\Delta_{\mathrm{mod}}$. It is evident that the preparation circuits introduced in Subsections~\ref{subsec:preplap} and~\ref{subsec:prepfo} can still be employed for this purpose. At a high level, the initial step of the circuit proceeds as follows (up to normalisation):
\begin{align}\label{slacmodprep}
&\ket{[0]_{n}}\ket{0}_f\ket{0}_{\mathrm{sel}}\ket{0}_m\ket{\psi}_{\mathrm{sys}} 
\xrightarrow{\mathrm{PREP}}\nonumber \\
&\sum_{j,k=0}^{N-1} \sqrt{\widetilde{\Delta}_j}\, \psi_k 
\ket{[j]_{n}}\ket{0}_f\ket{0}_{\mathrm{sel}}\ket{0}_m\ket{k}_{\mathrm{sys}},
\end{align}
where the state of the system register is expanded as $\ket{\psi}_{\mathrm{sys}} = \sum_{k} \psi_k \ket{k}$. 

Next, to implement the action of the diagonal mask $M_j$, we introduce a predicate operation, denoted by ${KEEP}$, which first loads the row-dependent cut-off $c_k$ corresponding to each row $k$, and then flips the flag qubit $f$ whenever the current index satisfies $j \le c_k$, as determined by a quantum comparator circuit~\cite{compgid}. The states for which the flag is flipped correspond to matrix elements with $M_j = 1$, i.e., the coefficients of the SLAC operator that are retained in the success branch. The action of this predicate oracle, denoted as $KEEP(j,k)$, is defined as
\begin{equation}\label{predicate}
    \ket{j}\ket{k}\ket{f=0} 
    \xrightarrow{\mathrm{KEEP}} 
    \ket{j}\ket{k}\ket{f \oplus [j \le c_k]}.
\end{equation}

When the predicate oracle \text{KEEP} is applied to the prepared state of~Eq.~\eqref{slacmodprep}, the resulting state can be written as:
\begin{equation*}
    \sum_{k=0}^{N-1}\psi_k\left(\sum_{j\le c_k}\sqrt{\Delta_j}\ket{[j]_{n}}\ket{1}_f+\sum_{j> c_k}\sqrt{\Delta_j}\ket{[j]_{n}}\ket{0}_f\right)\ket{k}.
\end{equation*}
Before implementing the \text{SELECT} operation of the $\mathrm{LCU}$ construction, we introduce an additional step in which the most significant bit (MSB) of the register $j$ is coherently copied (using controlled Toffoli) onto a fresh flag qubit, denoted by $sl$, conditioned on the flag $f$. After this controlled operation, the overall state transforms to
\begin{align}\label{afterkeep}
    &\sum_{k=0}^{N-1} \Bigg\{\bigg(
        \sum_{\substack{j \le c_k \\ j < N/2}} 
            \sqrt{\Delta_j} \ket{[j]_n} \ket{0}_{sl}  +\sum_{\substack{j \le c_k \\ j \ge N/2}} 
            \sqrt{\Delta_j} \ket{[j]_n} \ket{1}_{sl}\bigg)\nonumber\\&
     \otimes\ket{1}_f +\;\sum_{j > c_k} 
            \sqrt{\Delta_j} \ket{[j]_n} \ket{0}_{sl} \ket{0}_f
    \Bigg\}\psi_k  \ket{k}_{\mathrm{sys}}.
\end{align}
The \text{SELECT} circuit required to implement the modified derivative operator applies modular-adder operations corresponding to the permutation operators $P^{j}$ or $P^{N-j}$ on the success branch, in a manner analogous to that described in~Eq.~\eqref{selfo}. In addition, the modified \text{SELECT} circuit must flip the marker flag qubit $m$ so that the failure branch is coherently separated and subsequently discarded during the uncomputation step. The complete action of this modified \text{SELECT} operator can therefore be expressed as:
\begin{align}\label{selmod}
\sum_{j}\ket{j}\bra{j}\otimes\Biggr\{\Big(&\ket{f{=}1}\bra{f{=}1}
    \otimes \ket{0}\bra{0}_{sl}
    \otimes \mathbb{1}_m
    \otimes P^{j}_{sys} \Big) \nonumber \\[6pt]
+~\Big(&\ket{f{=}1}\bra{f{=}1}
    \otimes \ket{1}\bra{1}_{sl}
    \otimes \mathbb{1}_m
    \otimes P^{N-j}_{sys} \Big)\nonumber \\[6pt]
+~\Big(&\ket{f{=}0}\bra{f{=}0}
    \otimes \ket{0}\bra{0}_{sl}
    \otimes {X}_m
    \otimes \mathbb{1}_{sys} \Big)\Biggr\}.
\end{align}
After applying the modified \text{SELECT} oracle defined in~Eq.~\eqref{selmod} to the state prepared in~Eq.~\eqref{afterkeep}, the system evolves to the following state:
\begin{align}
    \sum_{k=0}^{N-1} \Bigg(
        &\sum_{\substack{j \le c_k \\ j < N/2}} 
            \widetilde{\Delta}_j \ket{[j]_n} \ket{0}_{sl} \ket{1}_f \ket{0}_m \otimes P^j_{\mathrm{sys}} \nonumber \\[4pt]
      +\;&\sum_{\substack{j \le c_k \\ j \ge N/2}} 
            \widetilde{\Delta}_j \ket{[j]_n} \ket{1}_{sl} \ket{1}_f \ket{0}_m \otimes P^{N-j}_{\mathrm{sys}} \nonumber \\[4pt]
      +\;&\sum_{j > c_k} 
            \widetilde{\Delta}_j \ket{[j]_n} \ket{0}_{sl} \ket{0}_f \ket{1}_m \otimes \mathbb{I}_{\mathrm{sys}}
    \Bigg) \psi_k\ket{k}_{\mathrm{sys}}.
\end{align}

The respective phase oracles described in~\ref{app: O_sgn} and~\ref{app: phase} can likewise be implemented under the control of the flag qubit $f$. It then follows that, after flipping the flag and then uncomputing the \text{PREP} step, the matrix elements of the SLAC operator corresponding to the failure branch---identified by the qubits $(m{=}1, f{=}1)$---are coherently suppressed, while those in the success branch $(m{=}0, f{=}0)$ are retained in the modified SLAC matrix. Consequently, the combined action of the predicate oracle \text{KEEP} defined in~Eq.~\eqref{predicate} and the modified \text{SELECT} oracle of~Eq.~\eqref{selmod} effectively realises the diagonal mask $M_j$ introduced in~Eq.~\eqref{mask}. After the \text{UNPREP} step, postselecting on all ancilla qubits being in the $\ket{0}$ state yields an operator acting on the system register that corresponds precisely to the modified SLAC-type operator of~Eq.~\eqref{slacmod}. It should be emphasised that the discussion presented here provides only a high-level description and does not account for the explicit normalisation and rescaling factors that ultimately determine the efficiency of the resulting block-encoding.

We emphasize that any modified SLAC-type derivative operator exhibiting broken translational symmetry can, in principle, be generated by implementing a projector $M_j$ through a dedicated predicate-type oracle (corresponding to the \text{KEEP} step in the above example) together with the modified version of the \text{SELECT} operator. If the operation $M_j$ can be realised efficiently, then the block-encoding of a non-circulant, modified SLAC operator can likewise be implemented efficiently using the inequality-test-based preparation circuits developed in this work.

Such modified derivative operators, where translational symmetry is broken at the component of the derivative operator, cannot in general be conveniently block-encoded via the diagonalisation approach, since the resulting matrices are no longer circulant. Implementing them by decomposing into circulant and non-circulant components requires repeated basis transformations between the position and Fourier representations, which accumulate significant depth overhead when invoked multiple times in amplitude amplification or QSVT routines. In contrast, the preparation and \text{SELECT} circuits introduced here, together with the predicate-type oracles that underpin the non-circulant construction, operate entirely within the position basis. 

If $\alpha$ denotes the rescaling factor associated with the block-encoding of the diagonal matrix $D_{\mathrm{SLAC}}$, the diagonalisation-based implementation incurs a cost of $O(\alpha n^2)$ per application due to the two $n$-qubit QFTs required for each transformation. The inequality-test-based construction achieves a comparable asymptotic scaling but avoids repeated basis changes, thereby providing a more streamlined and scalable framework for constructing block-encodings of operators with broken translational symmetry. This also ensures that the preconditioner described in~\ref{sec:precon} can be naturally and efficiently applied to the block-encodings obtained through our framework, as the preconditioner is likewise represented in the position basis.

\section{Implementation of Multi-scaled SLAC derivative operator}\label{sec:multisc}
According to Section 5 of~\cite{finiteswt}, the Shannon wavelet transform (SWT) in the Shannon limit provides a natural framework for recursively renormalising the Hamiltonian of a bosonic quantum field theory. In this regime, perfect momentum separation ensures that each application of the SWT splits the Hamiltonian into two equal halves,  $H \;=\; H_{IR}\;\oplus\;H_{UV}$.
The renormalised Hamiltonian is then identified with the $H_{IR}$ block. At the next scale, one applies the SWT of reduced size---specifically, half the size of the previous step---to $H_{IR}$, which once again separates it into IR and UV sectors. Iterating this procedure over successive scales $r$ generates a sequence of Hamiltonians whose dimensions decrease by a factor of two at each step.

Motivated by this recursive renormalisation framework, and noting that SLAC derivative operators naturally lie in the scale-field basis of the SWT, in this section we present a construction of their multi-scale representation tailored for qubit-based quantum computation. For this purpose, we combine the block-encodings of SLAC operators developed in Sections~\ref{sec: slaclap},\ref{sec: slac1be} with the QSWT implementation introduced in Section~\ref{sec:qswt}.
\begin{figure}[ht]
            \includegraphics[width=\columnwidth]{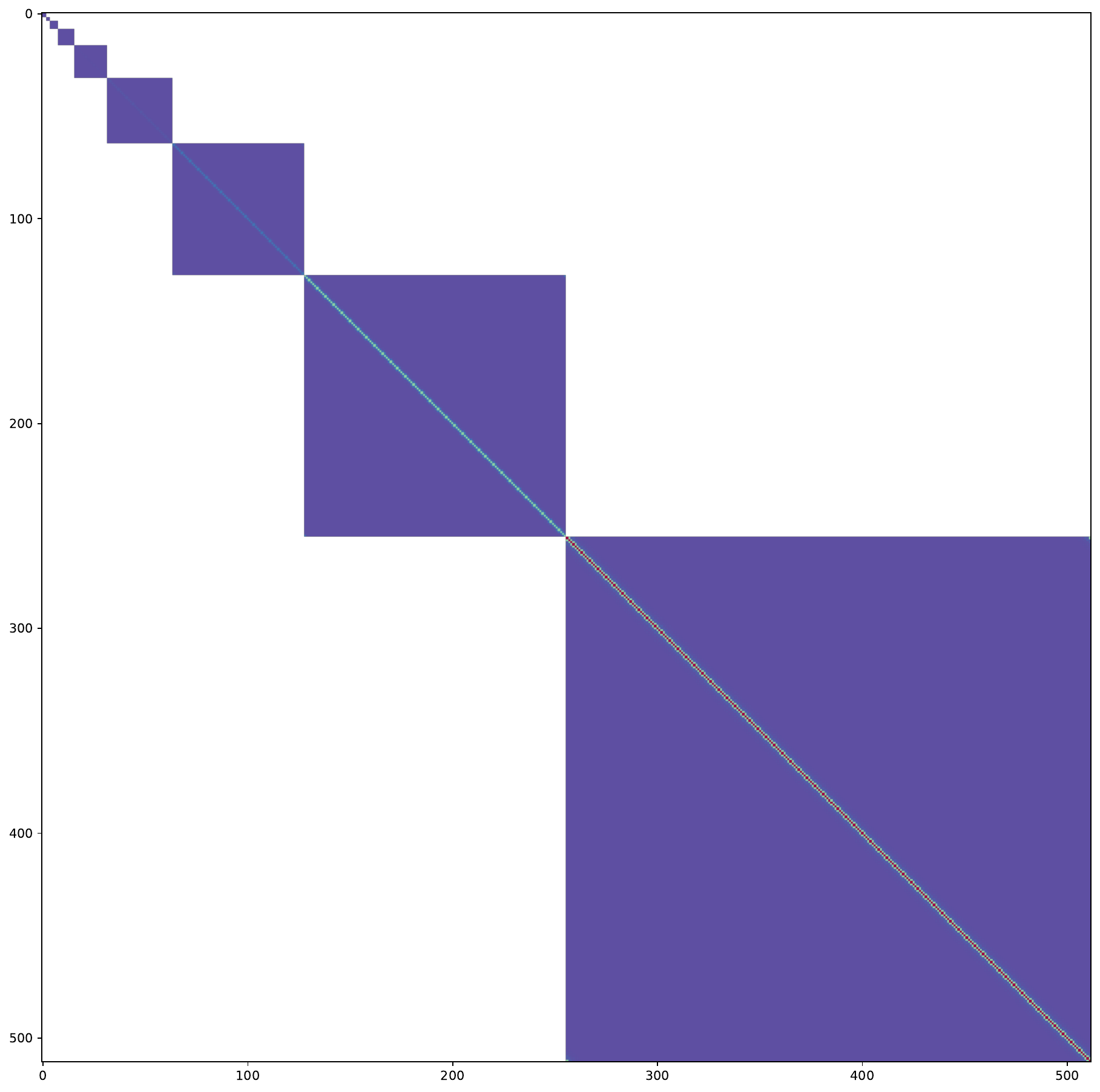}
		\caption{Multi-scaled $512$ sized SLAC Laplacian operator obtained using recursive applications of QSWT is a block-diagonal matrix with IR block at scale $8$ and UV blocks at lower scales with sizes getting halved at each scale.}
		\label{fig:multi-scaledlap}
\end{figure}

Let the \(N\)-dimensional block-encoding of the SLAC derivative be identified with the derivative operator at scale \(r=0\), and denote the quantum Shannon wavelet transform by \(\mathrm{QSWT}=S\). Applying the \(N\)-dimensional transform \(S_{(r=0)}\) as a similarity transformation to this operator gives:
\begin{equation}
S \Delta S^\dagger = 
\begin{bmatrix}
S^{\text{IR}} \Delta S^{\text{IR} \dagger} & 0 \\
0 & S^{\text{UV}} \Delta S^{\text{UV} \dagger}
\end{bmatrix}=
\begin{bmatrix}
\Delta^{IR}_{(r=1)}  & 0 \\
0 & \Delta^{UV}_{(r=1)}
\end{bmatrix}.
\end{equation}
Since perfect momentum separation eliminates the cross terms, the $N$-dimensional SLAC derivative decomposes into two independent blocks of size $N/2$. 

Applying the wavelet transform at scale 1 to the IR block of this matrix yields a further decomposition of the SLAC derivative operator:
\begin{equation}
S_{(r=1)}S \Delta S^\dagger S^\dagger_{(r=1)} = 
\begin{bmatrix}
\Delta^{IR}_{(r=2)}  & 0  &0 \\
0  & \Delta^{UV}_{(r=2)}  &0 \\
0 & 0 &\Delta^{UV}_{(r=1)}
\end{bmatrix} .
\end{equation}

In the above expression, recursively applying the QSWT $S$ to the IR sector produces the derivative operator at scale $r=2$, which has size $N/4$. Continuing this process up to scale $r=k$ yields a multi-scale representation of the SLAC derivative operator. This SLAC operator (at scale $k$) in the multi-scaled basis can be represented by a block-diagonal matrix, where the smallest block corresponds to $\Delta^{IR}_{(r=k)}$, and the remaining blocks are UV components whose sizes grow by successive powers of two as the scale decreases:
\begin{equation}\label{mslac}
\begin{aligned}
&\Delta_{(r=k)}=S_{(r=k)} \cdots S_{(r=0)} \, \Delta_{(r=0)} \, S^\dagger_{(r=0)} \cdots S^\dagger_{(r=k)}\\& \quad\quad\quad=
\begin{bmatrix}
\Delta^{\text{IR}}_{(r=k)} & 0 & 0 & \cdots & 0 \\
0 & \Delta^{\text{UV}}_{(r=k)} & 0 & \cdots & 0 \\
0 & 0 & \Delta^{\text{UV}}_{(r=k-1)} & \cdots & 0 \\
\vdots & \vdots & \vdots & \ddots & \vdots \\
0 & 0 & 0 & \cdots & \Delta^{\text{UV}}_{(r=1)}
\end{bmatrix}.
\end{aligned}
\end{equation}

In Fig.~\ref{fig:multi-scaledlap} we present the matrix plot of a 512-dimensional SLAC Laplacian obtained through recursive applications of the Shannon wavelet transform. The resulting multi-scale representation consists of UV blocks of sizes $256,128,64,\ldots,2$ together with an IR block of size $2$. Since the block dimension is halved at each scale, the Laplacian admits a multi-scale decomposition up to scale $r=8$. More generally, a SLAC derivative operator of size $N$ can be represented as a multi-scale matrix up to scale $r=\log_2 N - 1 = n-1$, where at each scale $r$ one obtains a block of size $2^{\,n-r}\times 2^{\,n-r}$.
\begin{figure*}[htb!]
	\centering
            \includegraphics[width=\textwidth]{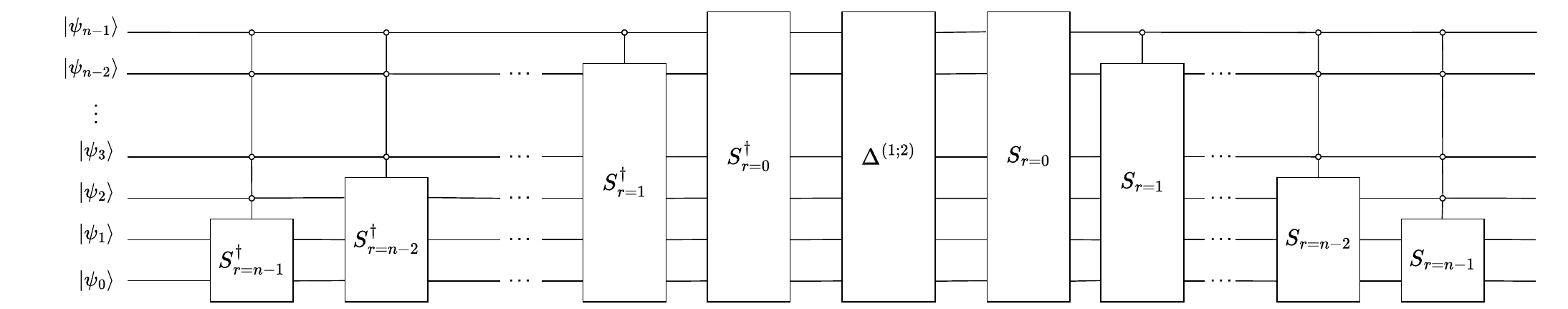}
		\caption{Quantum circuit for recursively applying the wavelet transform at each scale to the SLAC derivative operator (first-order/Laplacian), using its block-encoding together with the construction of $S \equiv \mathrm{QSWT}$. Ancilla qubits associated with the $\mathrm{LCU}$ block-encoding of the derivative operator are omitted for clarity. The circuit realises the multi-scale SLAC derivative operator (at sacle $n-1$) acting on the input state $\ket{\psi}$.}
		\label{fig:msslackt}
\end{figure*}

To implement the multi-scale SLAC operator on a quantum computer, we use two ingredients as subroutines: (i) a block-encoding of the SLAC derivative at a given scale (Section~\ref{subsec:preplap} and \ref{subsec:analyticallapbe}), and (ii) the QSWT that effects perfect IR/UV separation at that scale (Section~\ref{sec:qswt}). At each stage $r$, the QSWT $S_{(r)}$ is applied only on the leading $2^{\,n-r}$-dimensional active subspace (leaving the remaining qubits untouched), thereby block-diagonalising the current SLAC block into its IR/UV components. 
The multi-scale representation can therefore be implemented coherently as:
\begin{align}\label{multiqswt}
\Delta_{(r=k)} \ket{\psi} 
&= (S_{(r=k)} \oplus I_{N - 2^{n - k}})
\cdots (S_{(r=1)} \oplus I_{N/2})\nonumber \\
&\quad S_{(r=0)} \Delta_{(r=0)} 
S^\dagger_{(r=0)} \nonumber \\
&\quad (S^\dagger_{(r=1)} \oplus I_{N/2}) 
\cdots (S^\dagger_{(r=k)} \oplus I_{N - 2^{n - k}})
\ket{\psi}\,.
\end{align}
Here, $I_k$ denotes the identity matrix of dimension $k$. The sequence of $\mathrm{QSWT} \equiv S$ blocks and identity matrices can be realised using controlled-on-$0$ versions of the QSWT. 

At scale $r=0$, the $\mathrm{QSWT}$ and its inverse appear as a similarity transform applied to the block-encoding of the SLAC derivative across all qubits. At scale $r=1$, the wavelet transform acts only on the first $n-1$ qubits, controlled on the $n^{\text{th}}$ qubit being in state $\ket{0}$, so that the half-sized $\mathrm{QSWT}$ is restricted to the IR subspace of scale $r=0$. This recursive procedure continues: at each successive scale $r$, a halved $\mathrm{QSWT}$ is applied to one fewer qubit, with the control-on-$0$ condition shifting down by one qubit each time, until reaching scale $r=n-1$. The complete quantum circuit implementing the multi-scale representation of the SLAC derivative operators is shown in Fig.~\ref{fig:msslackt}.

Based on the preceding discussion, we summarise the complexity of implementing the multi-scale SLAC operator from its block-encoding in the following result.
\begin{result}\label{re:msslac}
The multiscale representation, up to scale \(n-1\), of an \(N\)-dimensional SLAC derivative operator, for derivative orders \(k=1,2\), can be realised by combining a single application of its block-encoding with \(2(n-1)\) applications of the quantum Shannon wavelet transform (\(\mathrm{QSWT}\)). The block-encoding can be implemented using \(\mathcal{O}(n^2)\) gates, and each application of \(\mathrm{QSWT}\) likewise has gate complexity \(\mathcal{O}(n^2)\). Thus, the query complexity of the construction is \(\mathcal{O}(n)\) calls to the \(\mathrm{QSWT}\) routine together with one call to the block-encoding of the SLAC operator, while each primitive call is implementable with \(\mathcal{O}(n^2)\) gates, yielding an overall gate complexity of \(\mathcal{O}(n^3)\).
\end{result}

\section{Implementing linear combination of SLAC derivatives} \label{sec: slaclc}
When solving partial differential equations, one often needs to block-encode linear combinations of derivative operators of different orders. If these derivatives are discretised within the SLAC framework, this naturally leads to the problem of block-encoding linear combinations of SLAC derivative operators. In particular, we can use the block-encodings constructed in the preceding sections: \(U_{\mathrm{slac}}^{(1)}\) for the first-order SLAC derivative and \(U_{\mathrm{slac}}^{(2)}\) for the SLAC Laplacian. In this section, we first present a simple construction for implementing the sum of these two operators, and then extend the method to general linear combinations of SLAC derivative operators in Appendix~\ref{app:sumlcu}.

\subsection{Implementing sum of SLAC first order derivative and Laplacian}
We first consider the special case where each block-encoding of the first-order SLAC derivative and the Laplacian is used once. A block-encoding of sum of SLAC derivative operators, such as $\alpha \tilde{\Delta}^{(1)} + \beta \tilde{\Delta}^{(2)}$, can be constructed using a straightforward method outlined in~\cite{PDOpaper}. This involves invoking the explicit block-encodings derived in Eq.~\eqref{slac1be} and Eq.~\eqref{slac2be}. In these expressions, the first- and second-order SLAC derivative operators are rescaled by the factors $p=\alpha^{(1)}=\mathcal{O}(n)$ and $q=\alpha^{(2)}=\mathcal{O}(1)$, respectively. The procedure begins by preparing the normalised state
\begin{equation}\label{phi}
    \ket{\phi} = \frac{1}{\sqrt{\alpha p+\beta q}} \left(\sqrt{\alpha p}\;\ket{0} + \sqrt{\beta q}\;\ket{1}\right)\,.
\end{equation}

With a single invocation of each block-encoding of the SLAC derivatives, one can build a controlled unitary of the form,
\begin{equation}
    U = \ket{0}\bra{0}\otimes U_\mathrm{slac}^{(1)} + \ket{1}\bra{1} \otimes U_\mathrm{slac}^{(2)}\,.
\end{equation}

Within the $\mathrm{LCU}$ framework, one first prepares the state $\ket{\phi}$, applies the controlled unitary defined above, and then uncomputes the state. Under this procedure, it can be verified that:
\begin{align}
&\left( \bra{\phi} \otimes \mathbf{1}_n \right) 
U \left( \ket{\phi} \otimes \mathbf{1}_n \right)\;\; \ket{\psi}_{sys}\nonumber \\
&= \left( \bra{\phi} \otimes \mathbf{1}_n \right) \frac1{\sqrt{\alpha p+\beta q}}
\left( {\sqrt{\alpha p}}\,\ket{0} U^{1} 
+ {\sqrt{\beta q}}\,\ket{1} U^{2} \right)\ket\psi \nonumber \\
&= \frac{1}{\alpha p + \beta q} \left(\alpha \tilde{\Delta}^{(1)} + \beta \tilde{\Delta}^{(2)} \right) \ket\psi\,.
\end{align}
Note that in the above equation we have omitted the ancillas required for state preparation in each block-encoding. By introducing one additional ancilla (prepared using a $CROT$ operation of Eq.~\eqref{phi}) and making a single call to the block-encodings of the Laplacian and the first-order SLAC derivative operator, it is possible to implement their linear combination with an overall scaling factor of ${\alpha p + \beta q}$ and with both ancilla overhead and gate complexity of $\mathcal{O}(\mathrm{poly}(n))$. Since $p \equiv \mathcal{O}(n)$, the block-encoding of any arbitrary linear combination of SLAC derivative operators satisfies the requirements for a ``good'' block-encoding (see Definition~\ref{df1}). As discussed earlier, the raw error in the block-encoded first-order derivative relative to its ideal counterpart scales slightly worse. Therefore, when implementing the sum of these two operators, one can project out the high-energy modes as in Eq.~\eqref{errorfo}, in which case the dominant error in realising this block-encoded sum is $O(1/N)$.

To realise an arbitrary linear combination of SLAC derivative operators of the form $A = \sum_{j=0}^{m-1} y_j A_j^{(k)}$, with $y \in \mathbf{C}^m$ such that $||y||_1 \leq \beta$ and where $A_j^{(k)}$ denotes the $k^{\text{th}}$-order SLAC derivative operator, one can adapt the $QSVT$ framework introduced in~\cite{qsvt} to implement a linear combination of block-encodings of SLAC derivative operators. A brief discussion of this construction is provided in Appendix~\ref{app:sumlcu}.

 \section{Preconditioning the multi-scaled SLAC derivative operator}\label{sec:precon}
One of the potential applications of the efficient block-encodings constructed in this work is the quantum solution of partial differential equations using the HHL algorithm or QLSA~\cite{HHL}. As established in prior works, the complexity of inverting a matrix using the HHL algorithm scales linearly with the condition number $\kappa$ of the matrix. Therefore, in order to preserve the exponential speedup promised by HHL, it is essential to reduce the condition number to at least $O(\log N)$. If the condition number is large, the matrix is referred to as ill-conditioned, and such matrices must be preconditioned to reduce $\kappa$ prior to applying HHL.

Since the eigenvalues of the SLAC operators exactly correspond to their dispersion relations, it follows that the condition number of the block-encoded SLAC operators scales proportionally with the matrix dimension, i.e., as $\mathcal{O}(N)$. As discussed earlier, however, the SLAC operators also admit a multi-scale representation that can be implemented efficiently within the $\mathrm{QSWT}$ framework presented in Section~\ref{sec:multisc}. This representation provides a key advantage, as the wavelet basis naturally supports the use of wavelet-based preconditioning strategies. In particular, the multi-scale SLAC derivative operator introduced in Section~\ref{sec:multisc} can be effectively preconditioned using a diagonal preconditioner—following the approach outlined in~\cite{fspde}—with block-diagonal elements of the form $1/2^r$, where $r$ labels the scale index in the multi-scale decomposition.

\subsection{Implementing the diagonal wavelet preconditoner}
We now briefly revisit the construction of the diagonal wavelet preconditioner and analyze its effect on the multi-scale SLAC operator. For further details on the application of diagonal preconditioners to differential operators represented in a wavelet basis, the reader is referred to~\cite{fspde}. The diagonal preconditioner matrix for a system of size $N = 2^n$ is given by:  
\begin{equation}\label{precondM}
    P = \begin{bmatrix}
\mathbb{1}_1 & & & & \\
& \frac{1}{2^1} \mathbb{1}_1 & & & \\
& & \frac{1}{2^2} \mathbb{1}_2 & & \\
& & & \frac{1}{2^3} \mathbb{1}_3 & \\
& & & & \ddots \\
& & & & & \frac{1}{2^{n-1}} \mathbb{1}_{n-1}
\end{bmatrix}\,.
\end{equation}
where $\mathbb{1}_1$ is $2-$ dimensional identity matrix. The preconditioner matrix $P$ can be expressed as a linear combination of two unitary matrices, $U^{\pm}$, such that $P = (U^{+} + U^{-})/2$, where
\begin{equation}\label{preU}
U^{\pm} := P \pm i \sqrt{\mathbb{1} - P^2} = e^{\pm i \arccos P}\,.
\end{equation}
Here, $U^{\pm}$ are diagonal matrices with a structure analogous to that of the preconditioner itself, whose diagonal entries take the form $e^{\pm i \theta}$ for some angle $\theta$. 

As noted in~\cite{fspde}, such unitaries correspond to a sequence of controlled rotations and can be represented as follows:
\begin{equation}\label{preconU}
U^{\pm} = \prod_{r=1}^{n-1} \Lambda_{0}^{r-1} \left( R_z(\pm \theta_{n-r}) \right) \otimes \mathbb{1}_{n-r},
\end{equation}
where $\Lambda_{0}^{r} \left( R_z(\theta)\right)$ denotes $\ket{0^r}-$ controlled-$R_z(\theta)=exp(i\theta Z)$ rotation gates, with $\theta_r=\arccos(1/2^r)$. 

As, stated in {Lemma 2} of~\cite{fspde} these unitaries can be implemented using $O(n)$ Toffoli, one and two-qubit gates with $n$ additional ancilla qubits. Using these unitary gates, one can implement a $(1,1,0)-$ block-encoding (using the $\mathrm{LCU}$ technique) for the preconditioner as:
\begin{equation}\label{preconbe}
U_P := (H \otimes \mathbb{1}_n) \Lambda_0(U^+) \Lambda_1(U^-) (H \otimes \mathbb{1}_n)
\end{equation}
with a gate cost of $O(n)$.

The Shannon wavelet transform block-diagonalises the exact SLAC Laplacian into dyadic momentum bands, which form the multiscale SLAC operator described in Section~\ref{sec:multisc}. 
We denote the resulting multiscale SLAC Laplacian by $\Delta_{\mathrm{MS}}$. 
After projecting out the zero mode, the $\ell$-th scale contains Fourier modes whose integer labels satisfy,
\[
2^{\ell-1}\le |j| < 2^\ell ,
\]
with the upper endpoint included in the finest Nyquist band. 
The corresponding physical momenta (after projecting our the nullspace) are,
\[
k_j=\frac{2\pi j}{N} , \quad j\in\left\{-\frac{N}{2}+1,\ldots,\frac{N}{2}\right\}\setminus\{0\}. 
\]
Since the exact SLAC Laplacian has Fourier symbol proportional to $k_j^2$, its eigenvalue magnitudes are proportional to
\[
\lambda_j=\left(\frac{2\pi j}{N}\right)^2.
\]
Equivalently, we define the rescaled eigenvalues
\[
\bar{\lambda}_j
:=
\left(\frac{N}{2\pi}\right)^2 |\lambda_j|
=
j^2 .
\]
Thus, in the $\ell$-th dyadic block, the rescaled eigenvalue magnitudes satisfy
\[
2^{2(\ell-1)}
\le
\bar{\lambda}_j
\le
2^{2\ell}.
\]

The diagonal preconditioner in Eq.~\eqref{precondM} assigns weight $w_\ell=2^{-\ell}$ to this scale. 
Thus, under the symmetric preconditioning sandwich $P\Delta_{\mathrm{MS}}P$, the rescaled eigenvalues in the $\ell$-th block are transformed as
\[
w_\ell^2\bar{\lambda}_j
=
2^{-2\ell}\bar{\lambda}_j .
\]
Consequently,
\[
2^{-2\ell}2^{2(\ell-1)}
\le
w_\ell^2\bar{\lambda}_j
\le
2^{-2\ell}2^{2\ell},
\]
or equivalently
\[
\frac{1}{4}
\le
w_\ell^2\bar{\lambda}_j
\le
1 .
\]

The bound is saturated on the finest dyadic band. 
For $N=2^n$, the finest scale $\ell=n-1$ contains the edge modes $|j|=2^{n-2}$ and $|j|=2^{n-1}$. 
After applying the preconditioner weight $w_{n-1}=2^{-(n-1)}$, these modes give
\[
2^{-2(n-1)}2^{2(n-2)}=\frac{1}{4},
\qquad
2^{-2(n-1)}2^{2(n-1)}=1.
\]
Therefore, the rescaled nonzero spectrum of the preconditioned exact SLAC Laplacian lies in the interval $[1/4,1]$ and attains both endpoints. 
The unscaled eigenvalues differ only by the common factor $(2\pi/N)^2$, which cancels in the condition number. 
Hence, after projecting out the nullspace, the condition number of the preconditioned SLAC Laplacian is $\kappa_p=4.$

For the truncated SLAC Laplacian's block-encoding, the same argument applies up to the truncation error.

For the first-order derivative, the Fourier symbol scales as $|k|$ rather than $k^2$. 
Thus, on the $\ell$-th dyadic band, the eigenvalues scale as $2^\ell$, and the corresponding symmetric preconditioner must use weights $w_\ell=2^{-\ell/2}$, giving a bounded condition number with ratio at most $2$. Hence, the preconditioner in Eq.~\eqref{precondM} must be modified by taking the square root of its diagonal entries. This modified preconditioner can be implemented with the same asymptotic gate cost by adjusting the angles \(\theta_r\) in Eq.~\eqref{preconU}. Under this square-root preconditioning, the first-order SLAC derivative attains an effective condition number of approximately \(2\).

Thus, under the action of the preconditioner in Eq.\eqref{precondM} (and its modified version) the condition number of the truncated SLAC Laplacian (and first order derivative) remains $\mathcal{O}(1)$.

\subsection{Block-encoding for the preconditoned multi-scaled SLAC derivative operator}
Building on the construction of the multi-scale SLAC derivative operator presented in Section~\ref{sec:multisc}, the preconditioned SLAC derivative operator can be realized by making a single call to the block-encoding of the derivative operator, $2 \times (n-1)$ applications of the controlled QSWT operations, and two calls to the block-encoding of the preconditioner matrix, $U_p$.

Let us denote the recursive sequence of $\mathrm{QSWT}$ applications in Eq.~\eqref{multiqswt} by $W$. The $(\alpha,a,\varepsilon)$ block-encoding of the preconditioned SLAC operator, $U_{A_p}$, can then be formally expressed as
\begin{equation}\label{preconAbe}
U_{A_p} := (\mathbb{1}_{a-1} \otimes U_P)(\mathbb{1}_a \otimes W) U_{\mathrm{slac}} (\mathbb{1}_a \otimes W^\dagger)(\mathbb{1}_{a-1} \otimes U_P),
\end{equation}
where the parameters $\alpha$ and $a$ depend on the order of the derivative operator. The gate cost of the constituent primitive calls is determined by $U_{\mathrm{slac}}$ and the elementary $\mathrm{QSWT}$ operations, each of which can be implemented using $\mathcal{O}(n^2)$ gates.

It is important to note that the smallest block at the coarsest scale $r=n-1$ of the multi-scale SLAC operator originates from the IR contribution and contains an eigenvalue associated with the nullspace of the operator. Consequently, even after applying the preconditioner, the matrix remains ill-conditioned unless the eigenvector corresponding to this null mode, $\ket{\psi_{\mathrm{null}}}$, is projected out. For an $n$-qubit SLAC derivative operator, the eigenvector of this nullspace mode is given by $\ket{\psi_{\mathrm{null}}} = H_0 \ket{0}^{\otimes n}$.

Since the nullspace of the SLAC derivative operator is known analytically, the right-hand side vector of the linear system, \( \vec{b} \), can be projected entirely outside this nullspace; see Appendix~\ref{app:nullspaceIR}. Consequently, the effective condition number of the projected and preconditioned SLAC Laplacian becomes independent of the system size, with \( \kappa' = 4 \).

As further benchmarks, we consider the following four second-order elliptic operators:
\begin{equation}
\begin{aligned}
\mathcal L_1 &= \partial_x^2-\partial_x+1,\\
\mathcal L_2 &= -\partial_x\!\left(c_a\,\partial_x\right)+e^x,\\
\mathcal L_3 &= -\partial_x^2+V,\\
\mathcal L_4 &= -\partial_x\!\left(a_\epsilon\,\partial_x\right)+1,
\end{aligned}
\label{eq:benchmark_ops}
\end{equation}

where
\[
\begin{aligned}
c_a(x) &= \cosh(x/a); \quad \mathrm{for}\,\,\,a=4,\\
V(x) &= 1+\sin^2(2\pi x),\\
a_\epsilon(x) &= 1+\epsilon\cos(2\pi x),
\qquad 0<\epsilon<1 .
\end{aligned}
\]

Here, \(\mathcal L_1\) is a convection--diffusion--reaction operator, \(\mathcal L_2\) is a slowly varying Sturm--Liouville operator, \(\mathcal L_3\) is a Schr\"odinger-type elliptic operator with a smooth periodic potential, and \(\mathcal L_4\) is a periodic variable-coefficient diffusion operator. The operators \(\mathcal L_1\) and \(\mathcal L_2\) were also considered in Ref.~\cite{fspde}. The ellipticity of these benchmark operators is established in Appendix~\ref{app:ellip}, following the framework of Ref.~\cite{EvansPDE}.

The numerical results shown in Fig.~\ref{fig:slac_precond} indicate that the preconditioner defined in Eq.~\eqref{precondM} performs well for these operators when they are discretised using the SLAC representation. Moreover, linear combinations of such SLAC operators can be implemented within the framework described in Appendix~\ref{app:sumlcu}.

This ensures that $\mathrm{QLSA}$~\cite{Childshhl,HHL} can be applied without any loss of its exponential speedup due to ill-conditioning, as explained in the consequent section. 

\subsection{Cost of obtaining block-encoding of the preconditioned SLAC operators}
The cost of each step for implementing the preconditioned matrix in the multi-scale Shannon wavelet basis, as outlined in this section, is summarized below.
\begin{itemize}
   \item  The SLAC derivative operators, denoted as $A = \Delta^{(k)}_{\mathrm{SLAC}}$, are accessible via the block-encodings defined in Eqs.~\eqref{beslac1} and~\eqref{beslac2}. For an $n$-qubit system, these encodings are realized with a gate complexity of $\mathcal{O}(n^2)$ using the LCU architectures detailed in Sections~\ref{sec: slaclap} and~\ref{sec: slac1be}. Methods for implementing linear combinations of these operators are further elaborated in Section~\ref{sec: slaclc}.
   
   \item Leveraging the natural representation of the SLAC derivative in the Shannon wavelet basis, the operator is mapped to a multi-scale basis through recursive applications of the QSWT; see Section~\ref{sec:multisc}. The implementation requires one query to the block-encoding of $A$ and $2(n-1)$ applications of the $\mathrm{QSWT}$. Each primitive call, namely one application of the SLAC block-encoding or one application of $\mathrm{QSWT}$, has gate complexity $\mathcal{O}(n^2)$, as established in Sections~\ref{sec: slaclap},~\ref{sec: slac1be} and~\ref{sec:qswt} respectively.

   To optimise the controlled operations appearing in the recursive protocol, one can employ a sawtooth, or ladder, decomposition for the multi-controlled logic, using available garbage ancillae so that the control-specific Toffoli count scales linearly with $n$. Consequently, the multi-scale matrix $A_w = W A W^\dagger$ can be realised using $\mathcal{O}(n)$ calls to the $\mathrm{QSWT}$ protocol and one call to the block-encoding of the corresponding SLAC operator, with each primitive call implementable using $\mathcal{O}(n^2)$ gates.

   \item The multi-scaled matrix can be preconditioned by applying the diagonal wavelet preconditioner twice. Its block-encoding is given in Eq.~\eqref{preconbe} and is constructed using the diagonal unitaries defined in Eq.~\eqref{preU}, with gate cost $\mathcal{O}(n)$. Consequently, the block-encoding of the preconditioned multi-scale SLAC derivative operator, $A_p=P A_w P^\dagger$, can be obtained as described in Eq.~\eqref{preconAbe} by accessing the block-encoding of the SLAC derivative together with $\mathcal{O}(n)$ applications of the $\mathrm{QSWT}$ protocol. Each primitive call to either the SLAC block-encoding or the $\mathrm{QSWT}$ protocol has gate complexity $\mathcal{O}(n^2)$.
\end{itemize} 

\section{Inverting the SLAC derivative operator using QLSA}\label{sec: HHL}
In this section, we revisit the method introduced in~\cite{fspde} for efficiently solving PDEs involving SLAC derivative operators. We first formulate the problem for PDEs whose derivatives are discretised according to the SLAC prescription. We then show how the block-encoding of the preconditioned SLAC operator, constructed in the previous section, can be inverted using \(\mathrm{QLSA}\) to efficiently solve the resulting linear system, after projecting out the nullspace of the operator.
\begin{figure}[hbtb!]
            \includegraphics[width=\columnwidth]{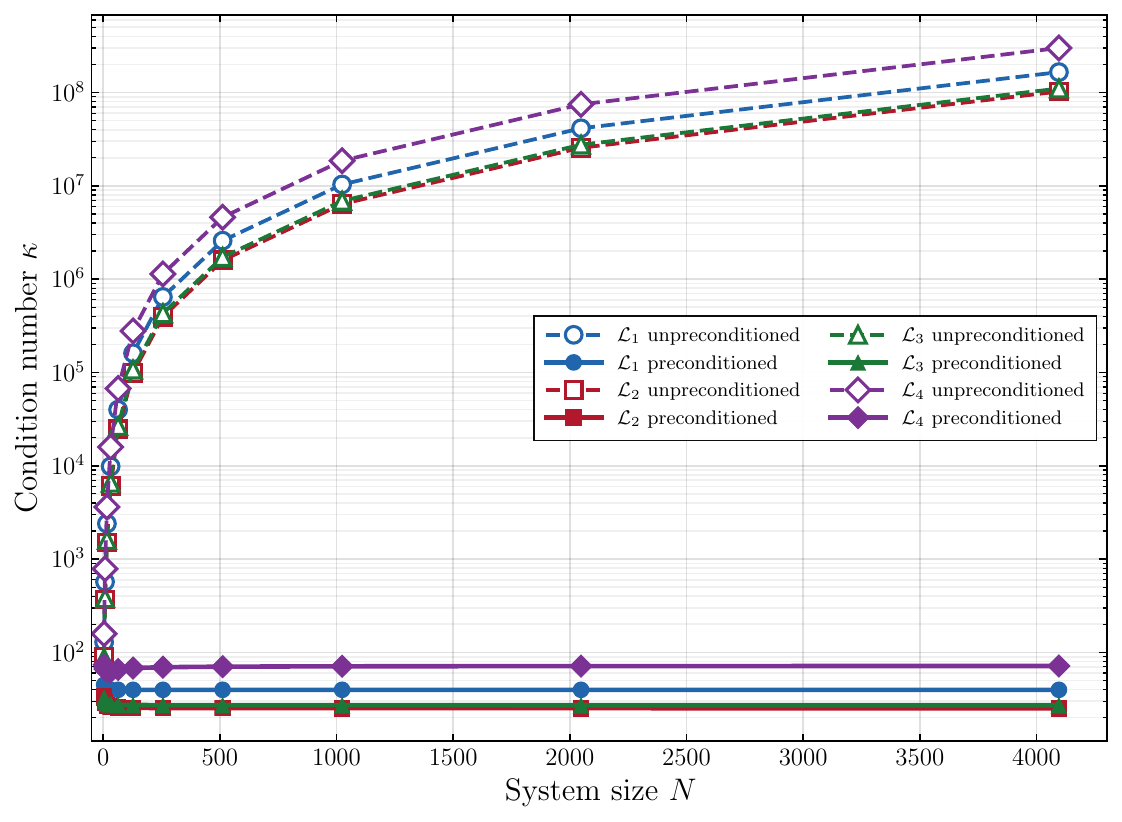}
\caption{
    The condition numbers for the four SLAC-discretised benchmark operators in Eq.~\eqref{eq:benchmark_ops}. Dashed curves with open markers denote the unpreconditioned systems, while solid curves with filled markers denote the preconditioned systems. The operators are discretised using periodic SLAC derivatives on the fixed domain \(\Omega=[0,1]\). Note the nullspace is trivial for all four operators. The unpreconditioned matrix condition numbers grow with the system size \(N\), whereas the preconditioned matrix condition numbers remain nearly constant, showing that the multiscale wavelet preconditioner suppresses the dominant scale dependence.
    }
    \label{fig:slac_precond}
\end{figure}
\subsection{Formulating the problem of SLAC derivative operator inversion}
We now outline the class of partial differential equations (PDEs) for which SLAC-based discretizations, combined with wavelet-based preconditioning, yield an efficient computational framework. Our formulation follows the notation and conventions of~\cite{fspde}.

Consider a $d$-dimensional linear elliptic PDE defined over a bounded domain \( \Omega \subset \mathbb{R}^d \), expressed as
\[
\mathcal{L} u(\vec{x}) = b(\vec{x}),
\]
where \( \vec{x} \in \Omega \), \( u(\vec{x}) \) denotes the unknown function, \( b(\vec{x}) \) represents the source term or inhomogeneity, and \( \mathcal{L} \) is a linear differential operator.

In our framework, \( \mathcal{L} \) is discretized using SLAC derivative operators. Specifically, for a linear PDE of order \( m \), the operator \( \mathcal{L} \) can be written as:
\[
\mathcal{L} = \sum_{\vec{\alpha}} c_{\vec{\alpha}}(\vec{x}) \, \partial^{\vec{\alpha}}.
\]
Here \( \vec{\alpha} = (\alpha_1, \ldots, \alpha_d) \) is a multi-index with \( |\vec{\alpha}| := \sum_{i=1}^d \alpha_i \le m \), and \( \partial^{\vec{\alpha}} := \partial_1^{\alpha_1}\cdots \partial_d^{\alpha_d} \). In our formulation, each partial derivative \( \partial_i^{\alpha_i} \) is replaced by its SLAC-discretized counterpart \( \partial_{\mathrm{slac};i}^{(\alpha_i)} \), yielding a discrete operator \( \mathcal{L}_{{slac}} \in \mathbb{C}^{N\times N} \). Here \( \partial_{\mathrm{slac};i}^{(\alpha_i)} \) denotes the SLAC partial derivative of order \( \alpha_i \) with respect to \( x_i \).

If the leading-order coefficients satisfy \( \sum_{|\vec{\alpha}| = m} c_{\vec{\alpha}}(\vec{x})\, \vec{\xi}^{\vec{\alpha}} \neq 0 \) for all nonzero \( \vec{\xi} \in \mathbb{R}^d \) and for all \( \vec{x} \in \Omega \), then the operator \( \mathcal{L} \) is said to be elliptic, and the corresponding PDE is classified as elliptic.

\paragraph{The solution state.} In quantum algorithms for solving linear systems \( A u = b \), the objective is typically to prepare a quantum state (not necessarily \( |u\rangle \)) that enables the estimation of expectation values \( \langle u| M |u\rangle \) for observables \( M \) of interest. In our approach, following the framework of~\cite{fspde}, we do not explicitly construct the state \( |u\rangle \); instead, we prepare an alternative state \( |\psi\rangle \) that encodes the same observable information. We refer to this state \( |\psi\rangle \) as the \emph{solution state} throughout.

To construct \( |\psi\rangle \), we first transform the linear system \( A u = b \) into a multi-scale wavelet basis, where \( A = \Delta^{(m)}_{\mathrm{slac}} \) denotes the SLAC derivative operator of order \( m = 1, 2 \). This basis transformation is performed through recursive applications of the quantum Shannon wavelet transform (QSWT), as outlined in Section~\ref{sec:multisc}, resulting in a multi-scale representation \( A_w \) of \( A \). Each block of \( A_w \) corresponds to a circulant matrix at a specific scale and can be diagonalized using a scale-dependent controlled QFT (QFT).
Let \( W \) denote the fully recursive QSWT circuit. The linear system in the multi-scale basis is then given by \( A_w u_w = b_w \), where \( A_w := W A W^\dagger \), \( u_w := W u \), and \( b_w := W b \). 

At this stage, the multi-scaled SLAC operator \( A_w \) of size \( N \) exhibits a condition number that scales as \( \mathcal{O}(N) \), making it ill-conditioned. To address this, we implement a diagonal wavelet preconditioner \( P \), which rescales each block of \( A_w \) by a factor of \( 1/2^r \) corresponding to its scale \( r \). This yields the preconditioned system \( A_p u_p = b_p \), with \( A_p := P A_w P \), \( u_p := P^{-1} u_w \), and \( b_p := P b_w \), as detailed in Section~\ref{sec:precon}.

Finally, to ensure numerical stability and avoid inversion over the nullspace of \( A_p \), we project the vector \( b_w \) onto the subspace orthogonal to the nullspace of \( A_p \). This projection guarantees that the preconditioned matrix retains a bounded, constant condition number, ensuring that the inverse operation executed by quantum linear solvers such as HHL remains well-defined, stable, and efficient. The details of this nullspace projection are provided in Appendix~\ref{app:nullspaceIR}.

We define the solution state as:
\begin{equation}\label{sol}
    \ket{\psi_{ab}} := W^\dagger U^a A_p^{-1} U^b W \; \ket{b} \quad \forall a,b \in \mathbb{B} := \{0,1\},
\end{equation}
where \( U^{0/1} = U^{+/-} \) are as defined in Eq.~\eqref{preU}. The corresponding observable is given by:
\begin{equation}
M' := \sum_{abcd \in \mathbb{B}} \ket{ab} \bra{cd} \otimes M\,.
\end{equation}
As shown in~\cite{fspde}, the expectation value \( \bra{\psi} M' \ket{\psi} \) approximates \( \bra{u} M \ket{u} \), where \( \ket{u} = W^{-1} P A_p^{-1} P W \ket{b} \) represents the true solution state of the PDE. The implementation of \( M' \) follows the procedure outlined in Appendix~A of~\cite{fspde}.

With this correspondence established, the following subsection focuses on obtaining the solution state defined in Eq.~\eqref{sol} by utilizing the block-encodings of the preconditioned SLAC derivative operators and analyzing the resource cost associated with each step required to construct this state for PDEs discretized using SLAC derivatives.

\subsection{Preparing the solution state and its cost}
As discussed earlier, the solution state for a PDE discretized using the multi-scale SLAC formalism is obtained by applying the inverse of the preconditioned matrix to the state \( \ket{b} \). Prior to performing this inversion, it is necessary to construct the block-encoding of the preconditioned matrix with the steps shown in previous section \ref{sec:precon}.

In~\cite{fspde}, the inverse of the preconditioned matrix was block encoded by first introducing an operation denoted as \( \textbf{QMI} \):
\begin{equation}
\textbf{QMI: }\;\ket{0}_{\text{flag}} \ket{\psi_{\text{in}}} \mapsto \ket{0}_{\text{flag}} \frac{1}{\alpha} \widetilde{A}_p^{-1} \ket{\psi_{\text{in}}} + \ket{\phi},
\end{equation}
where \( \widetilde{A}_p^{-1} \) represents an \( \varepsilon \)-approximation of \( A_p^{-1} \) in the operator norm.

As outlined in the previous subsection, it is necessary to project out the nullspace of the preconditioned matrix using the controlled swap test described in Appendix~\ref{app:nullspaceIR}, ensuring that the eigenvalues of \( A_p \) lie within the bounded range \( [1/\kappa, 1] \) with \( \kappa = \mathcal{O}(1) \). From Eq.~\eqref{applyv}, we observe that the subspace orthogonal to the vector \( \vec{b} \) is indicated by the flag \( F = 1 \). To perform matrix inversion within this subspace, we define a controlled unitary built from the \( \textbf{QMI} \) operation, which applies \( \textbf{QMI} \) selectively on the orthogonal component.
\begin{equation}
  \widetilde{QMI}
  \;=\;
  \ket{0}\bra{0}_{F} \otimes I_{R} \otimes I_{S}
  \;+\;
  \ket{1}\bra{1}_{F} \otimes I_{R} \otimes QMI .
\end{equation}
Thus, to invert the SLAC derivative operator only within the subspace orthogonal to its nullspace, we employ the above operation and subsequently trace out the register labeled \( R \).

As described in Appendix~E2 of~\cite{fspde}, the $\mathbf{QMI}$ procedure consists of a quantum phase estimation ($\mathrm{QPE}$) routine applied to the preconditioned matrix $A_p$, followed by controlled rotations on an ancilla qubit that encode the reciprocals of the estimated eigenvalues. The gate complexity of the $\mathrm{QPE}$ and $\mathrm{CROT}$ steps in the matrix-inversion routine is $\mathcal{O}\bigl(\log(\kappa_p / \varepsilon)\bigr)$. According to~\cite{fspde}, the query complexity of implementing $\mathbf{QMI}$ is $\mathcal{O}\bigl((\kappa_p / \varepsilon)\log(\kappa_p / \varepsilon)\bigr)$. When $A_p$ is accessed through a block-encoding with subnormalisation factor $\alpha^{(k)}$, this query cost acquires an additional multiplicative overhead proportional to the normalisation ratio $\alpha^{(k)} / \|A_p\|$. Accordingly, the overall query complexity for inverting the preconditioned matrix becomes $\mathcal{O}\bigl(\alpha^{(k)} (\kappa_p / \varepsilon)\log(\kappa_p / \varepsilon)\bigr)$ as $\|A_p\|=\mathcal{O}(1)$. 

By instead employing Quantum Singular Value Transformation (QSVT) (see~\cite{qsvt}), one obtains an exponential improvement in the dependence on the precision parameter $\varepsilon$. In this setting, the inverse of the preconditioned matrix can be block-encoded using $\mathcal{O}\bigl(\alpha^{(k)}\kappa_p \log(\kappa_p / \varepsilon)\bigr)$ queries to the block-encoding of $A_p$, together with $\mathcal{O}(n)$ applications of the $\mathrm{QSWT}$ protocol required to reach the multi-scale basis.

However, after projecting out the nullspace, the preconditioned matrix in the multi-scale basis satisfies $\kappa_p=\mathcal{O}(1)$. Therefore, after nullspace projection, the inverse of the preconditioned matrix can be block-encoded using $\mathcal{O}(\alpha^{(k)}\log(1/\varepsilon))$ queries to the block-encoding of $A_p$. Each query to this block-encoding consists of one call to the SLAC block-encoding together with $\mathcal{O}(n)$ applications of the $\mathrm{QSWT}$ protocol. Each primitive call to either the SLAC block-encoding or the $\mathrm{QSWT}$ protocol can be implemented using $\mathcal{O}(n^2)$ gates.

The formalism described above can be naturally extended to a \( d \)-dimensional ( \( d\mathrm{D} \) ) PDE system by following the procedure outlined in ~\cite{fspde}. This involves first defining a \( d\mathrm{D} \) preconditioner (refer Appendix D of~\cite{fspde}) that incorporates the \textbf{MAX} operation introduced in Eq.~(13) of~\cite{fspde}. Similarly, the \( d\mathrm{D} \) Shannon wavelet transform, which applies the wavelet transform across all spatial dimensions, can be realized as a tensor product of the one-dimensional recursive QSWT protocol. Consequently, the overall cost of obtaining the solution state for a \( d\mathrm{D} \) PDE scales linearly with the number of dimensions \( d \).

In the $d$-dimensional setting, let $N=2^{nd}$ denote the total number of lattice points.
These points are distributed over a $d$-dimensional grid, so that the number of lattice points along each spatial direction scales as $N^{1/d}$. 
Therefore, the finite-lattice truncation error should be estimated with respect to the resolution per dimension, giving a scaling of order $\mathcal{O}(N^{-1/d})$. 
For example, this scaling is attained for product-state solutions whose one-dimensional components each have truncation error $\mathcal{O}(1/N^{1/d})$.

\subsection{Complexity of solving dD PDEs using SLAC}
Altogether, we summarise the cost of generating a quantum solution for a $d$-dimensional PDE with derivative operators discretised using the SLAC formalism in the following result.

\begin{result}\label{re:SLACPDE}
Consider a $d$-dimensional inhomogeneous linear PDE of the form
$\mathcal{L}u(x)=b(x)$ defined on the domain $[0,1]^d$ with
periodic boundary conditions, where $\mathcal{L}$ is an elliptic
operator discretised using the SLAC formalism. Let
$A\in\mathbb{R}^{N\times N}$ denote the resulting SLAC matrix with
$N=2^{nd}$ points on a $d$-dimensional grid, and let
$Au=b$ be the corresponding linear system. For $\varepsilon>0$,
assume access to:
\begin{itemize}
    \item an $(\alpha,a,\varepsilon)$ block-encoding of $A$ constructed via the Linear Combination of Unitaries ($\mathrm{LCU}$) framework, where the state-preparation circuit is implemented using inequality tests; and
    \item a procedure $\mathcal{P}_b$ that prepares a quantum state $\ket{b}$ encoding the right-hand-side vector $b$.
\end{itemize}
Then, after projecting out the nullspace, an approximate solution state
$\ket{\tilde{u}}$ to the linear system $Au=b$, as defined in~Eq.~\eqref{sol}, with error at most $\varepsilon$, can be prepared using $\mathcal{O}(1)$ invocations of $\mathcal{P}_b$ and $\mathcal{O}(\alpha^{(k)}\log(1/\varepsilon))$ queries to the multiscale preconditioned block-encoding of $A$. Each such query uses one call to the SLAC block-encoding together with $\mathcal{O}(n)$ applications of the $d$-dimensional $\mathrm{QSWT}$ protocol. Each primitive call to the SLAC block-encoding and to the $d$-dimensional $\mathrm{QSWT}$ can be implemented using $\mathcal{O}(dn^2)$ gates. 
\end{result}
In particular, combining this result with Proposition~\ref{corl1} shows that, when the PDE involves an $n$-qubit SLAC Laplacian, we have $\alpha^{(2)}=\mathcal{O}(1)$, which makes the query cost $\mathcal{O}(\log(1/\varepsilon))$. Likewise, Proposition~\ref{corl2} implies that if the PDE involves an $n$-qubit first-order SLAC derivative, the subnormalisation constant scales as $\alpha^{(1)}=\mathcal{O}(n)$, increasing the query cost to $\mathcal{O}(n\log(1/\varepsilon))$. More generally, Result~\ref{re:lcuolcu} shows that, for a PDE involving a linear combination of SLAC operators, the resulting block-encoding has a subnormalisation factor $\alpha^{(k)}$ with the same asymptotic scaling as that of the first-order SLAC derivative.

Each $d$-dimensional $\mathrm{QSWT}$ can be implemented using $\mathcal{O}(d n^2)$ gates. 
Since the multiscale construction requires $\mathcal{O}(n)$ calls to the $\mathrm{QSWT}$, the gate cost of one multiscale block encoding scales as $\mathcal{O}(d n^3)$. 
In the preconditioned $\mathrm{QLSA}$, this block encoding is invoked $\mathcal{O}(\alpha^{(k)}\log(1/\varepsilon))$ times. 
Therefore, the overall gate complexity for solving a $d$-dimensional PDE in the SLAC representation scales as
$
\mathcal{O}\!\left(d n^3 \alpha^{(k)} \log(1/\varepsilon)\right).
$

\section{Conclusions}
In this work, we developed efficient quantum protocols for implementing SLAC derivative operators and their multi-scale representations on a finite lattice of $N = 2^n$ sites. The main challenge addressed here is that SLAC derivatives are highly non-local in position space, resulting in fully dense $N \times N$ matrices, even though they preserve the continuum momentum dispersion exactly within the Brillouin zone. We showed that this non-locality can nevertheless be handled efficiently within the block-encoding framework by combining $\mathrm{LCU}$ methods, nested-box inequality-test state preparation, and modular arithmetic.

We first constructed an explicit quantum implementation of the Shannon wavelet transform. Since the Shannon wavelet transform achieves exact momentum separation, it provides a natural basis for multi-scale representations of SLAC operators. Our construction uses the $\mathrm{QFT}$, modular adders, and simple sparse unitaries, giving an implementation whose gate complexity is polynomial in $n=\log N$. This contrasts with the classical FFT-based cost, which scales polynomially in the lattice size $N$.

The central technical contribution of this work is the explicit block-encoding of the SLAC Laplacian and the first-order SLAC derivative. Both operators are realised through the $\mathrm{LCU}$ framework, where the dense coefficient states are prepared using nested-box inequality tests. The resulting block-encodings have gate complexity $\mathcal{O}(n^2)$ up to error $\varepsilon=\mathcal{O}(1/N)$, with the dominant cost arising from the inequality-test arithmetic and the controlled modular subtractors used in the SELECT oracle. The SLAC Laplacian admits an optimal block-encoding, with a constant normalisation overhead, while the first-order SLAC derivative admits a good block-encoding with subnormalisation scaling as $\mathcal{O}(n)$.

We then showed how these block-encodings can be combined with the Quantum Shannon Wavelet Transform to construct multi-scale SLAC derivative operators. A single call to the block-encoding, together with $\mathcal{O}(n)$ applications of the $\mathrm{QSWT}$ protocol, generates the multi-scale SLAC operator up to scale $n-1$. We also discussed how linear combinations of SLAC derivative operators can be implemented using standard $\mathrm{LCU}$-of-block-encodings techniques, allowing sums of different derivative orders to be treated within the same framework.

Finally, we incorporated diagonal wavelet preconditioning into the SLAC framework. After projecting out the nullspace, the preconditioned multi-scale SLAC operators have improved conditioning and remain efficiently block-encodable. This allows differential equations discretised using SLAC derivatives to be solved using $\mathrm{QLSA}$, provided the right-hand-side state has support on the invertible subspace of the operator. In this way, the block-encodings developed here serve as efficient quantum primitives for solving PDEs involving non-local derivative operators.

The constructions developed in this work are particularly relevant for physical systems where preserving continuum dispersion properties after discretisation is important. This includes condensed-matter models such as Luttinger liquids and topological insulators ~\cite{cmp1,cmp2,cmp3}, as well as lattice gauge-theory settings where SLAC-type derivatives are useful for avoiding fermion doubling. More broadly, the nested-box state-preparation methods used here can be adapted to other dense coefficient functions that vary smoothly with the computational basis index, provided the associated arithmetic can be implemented efficiently.

Overall, this work shows that dense non-local derivative operators arising from continuum-preserving discretisations can be made compatible with efficient quantum block-encoding techniques. By combining analytic structure, tailored state preparation, and wavelet-based multi-scale methods, the approach developed here extends the range of differential and field-theoretic operators that can be implemented efficiently on a quantum computer.

A natural direction for future work is to use these block-encodings as primitives for qubitization-based simulation of continuum condensed-matter systems. This would allow the dynamics of SLAC-discretised Hamiltonians to be simulated directly while retaining the continuum dispersion properties encoded by the SLAC operators. Another important step is to perform a detailed resource estimate for a fault-tolerant implementation of these block-encodings. The main non-Clifford bottlenecks arise from the inequality tests, for which we have provided explicit Toffoli counts scaling quadratically in the register size. This makes the constructions well suited for assessing the resources required to implement SLAC-based simulations on future fault-tolerant quantum architectures.

\section*{Acknowledgments}
GKB thanks Mohsen Bagherimehrab, Daniel J. George, Nicholas Funai, Dominic G. Lewis, Nicolas C. Menicucci, and \v{S}imon Vedl for helpful conversations. DWB worked on this project under a sponsored research agreement with Google Quantum AI. DWB is also supported by Australian Research Council Discovery Project DP220101602. GKB received support from the Australian Research Council Discovery Project DP260104888. RMG thanks Soumya Sarkar for helpful discussions. RMG is supported by the Sydney Quantum Academy. GM thanks Abhijeet Alase for insightful discussions.

\bibliography{references}\bibliographystyle{utphys}

\providecommand{\href}[2]{#2}\begingroup\raggedright\begin{thebibliography}{10}

\bibitem{NIELSEN198120}
H.~Nielsen and M.~Ninomiya, ``Absence of neutrinos on a lattice: (i). proof by homotopy theory,'' \href{http://dx.doi.org/https://doi.org/10.1016/0550-3213(81)90361-8}{{\em Nuclear Physics B} {\bfseries 185} no.~1, (1981) 20--40}. \url{https://www.sciencedirect.com/science/article/pii/0550321381903618}.

\bibitem{finiteswt}
P.~Fries, I.~Reyes, J.~Erdmenger, and H.~Hinrichsen, ``{Renormalization of lattice field theories with infinite-range wavelets},'' \href{http://dx.doi.org/10.1088/1742-5468/ab14d8}{{\em J. Stat. Mech.} {\bfseries 1906} no.~6, (2019) 064001}, \href{http://arxiv.org/abs/1811.05388}{{\ttfamily arXiv:1811.05388 [hep-th]}}.

\bibitem{slac1}
S.~D. Drell, M.~Weinstein, and S.~Yankielowicz, ``Strong-coupling field theories. ii. fermions and gauge fields on a lattice,'' \href{http://dx.doi.org/10.1103/PhysRevD.14.1627}{{\em Phys. Rev. D} {\bfseries 14} (Sep, 1976) 1627--1647}. \url{https://link.aps.org/doi/10.1103/PhysRevD.14.1627}.

\bibitem{Quinn:1986mzs}
H.~R. Quinn and M.~Weinstein, ``Lattice theories of chiral fermions,'' \href{http://dx.doi.org/10.1103/PhysRevD.34.2440}{{\em Phys. Rev. D} {\bfseries 34} (1986) 2440--2450}.

\bibitem{slac2}
J.~P. Costella, ``{A New proposal for the fermion doubling problem. 2. Improving the operators for finite lattices},'' \href{http://arxiv.org/abs/hep-lat/0207015}{{\ttfamily arXiv:hep-lat/0207015}}.

\bibitem{shannon1}
S.~Mallat, {\em A Wavelet Tour of Signal Processing, Third Edition: The Sparse Way}.
\newblock Academic Press, Inc., USA, 3rd~ed., 2008.

\bibitem{shannon4}
M.~Bagherimehrab and A.~Aspuru-Guzik, ``{Efficient quantum algorithm for all quantum wavelet transforms},'' \href{http://dx.doi.org/10.1088/2058-9565/ad3d7f}{{\em Quantum Sci. Technol.} {\bfseries 9} no.~3, (2024) 035010}, \href{http://arxiv.org/abs/2309.09350}{{\ttfamily arXiv:2309.09350 [quant-ph]}}.

\bibitem{qft1}
G.~K. Brennen, P.~Rohde, B.~C. Sanders, and S.~Singh, ``{Multiscale quantum simulation of quantum field theory using wavelets},'' \href{http://dx.doi.org/10.1103/PhysRevA.92.032315}{{\em Phys. Rev. A} {\bfseries 92} no.~3, (2015) 032315}, \href{http://arxiv.org/abs/1412.0750}{{\ttfamily arXiv:1412.0750 [quant-ph]}}.

\bibitem{qft2}
S.~Singh and G.~K. Brennen, ``{Holographic Construction of Quantum Field Theory using Wavelets},'' \href{http://arxiv.org/abs/1606.05068}{{\ttfamily arXiv:1606.05068 [quant-ph]}}.

\bibitem{qft3}
M.~Bagherimehrab, Y.~R. Sanders, D.~W. Berry, G.~K. Brennen, and B.~C. Sanders, ``{Nearly Optimal Quantum Algorithm for Generating the Ground State of a Free Quantum Field Theory},'' \href{http://dx.doi.org/10.1103/PRXQuantum.3.020364}{{\em PRX Quantum} {\bfseries 3} no.~2, (2022) 020364}, \href{http://arxiv.org/abs/2110.05708}{{\ttfamily arXiv:2110.05708 [quant-ph]}}.

\bibitem{qft4}
D.~J. George, Y.~R. Sanders, M.~Bagherimehrab, B.~C. Sanders, and G.~K. Brennen, ``{Entanglement in quantum field theory via wavelet representations},'' \href{http://dx.doi.org/10.1103/PhysRevD.106.036025}{{\em Phys. Rev. D} {\bfseries 106} no.~3, (2022) 036025}, \href{http://arxiv.org/abs/2201.06211}{{\ttfamily arXiv:2201.06211 [quant-ph]}}.

\bibitem{qft5}
{\v{S}}.~Vedl, D.~J. George, and G.~K. Brennen, ``{Scale limited fields and the Casimir effect},'' \href{http://dx.doi.org/10.1103/PhysRevD.109.016018}{{\em Phys. Rev. D} {\bfseries 109} no.~1, (2024) 016018}, \href{http://arxiv.org/abs/2310.04089}{{\ttfamily arXiv:2310.04089 [quant-ph]}}.

\bibitem{fspde}
M.~Bagherimehrab, K.~Nakaji, N.~Wiebe, G.~K. Brennen, B.~C. Sanders, and A.~Aspuru-Guzik, ``{Fast quantum algorithm for differential equations},'' \href{http://arxiv.org/abs/2306.11802}{{\ttfamily arXiv:2306.11802 [quant-ph]}}.

\bibitem{hh1}
A.~W. Harrow, A.~Hassidim, and S.~Lloyd, ``{Quantum Algorithm for Linear Systems of Equations},'' \href{http://dx.doi.org/10.1103/physrevlett.103.150502}{{\em Phys. Rev. Lett.} {\bfseries 103} no.~15, (2009) 150502}, \href{http://arxiv.org/abs/0811.3171}{{\ttfamily arXiv:0811.3171 [quant-ph]}}.

\bibitem{hhl2}
C.~Andrew~M., K.~Robin, and S.~Rolando~D., ``{Quantum Algorithm for Systems of Linear Equations with Exponentially Improved Dependence on Precision},'' \href{http://dx.doi.org/10.1137/16M1087072}{{\em SIAM J. Comput.} {\bfseries 46} no.~6, (2017) 1920--1950}, \href{http://arxiv.org/abs/1511.02306}{{\ttfamily arXiv:1511.02306 [quant-ph]}}.

\bibitem{hhl3}
B.~D. {Clader}, B.~C. {Jacobs}, and C.~R. {Sprouse}, ``{Preconditioned Quantum Linear System Algorithm},'' \href{http://dx.doi.org/10.1103/PhysRevLett.110.250504}{{\em \prl} {\bfseries 110} no.~25, (June, 2013) 250504}, \href{http://arxiv.org/abs/1301.2340}{{\ttfamily arXiv:1301.2340 [quant-ph]}}.

\bibitem{Childshhl}
C.~Andrew~M., K.~Robin, and S.~Rolando~D., ``{Quantum Algorithm for Systems of Linear Equations with Exponentially Improved Dependence on Precision},'' \href{http://dx.doi.org/10.1137/16M1087072}{{\em SIAM J. Comput.} {\bfseries 46} no.~6, (2017) 1920--1950}, \href{http://arxiv.org/abs/1511.02306}{{\ttfamily arXiv:1511.02306 [quant-ph]}}.

\bibitem{Ambainis:2010wfy}
A.~Ambainis, ``{Variable time amplitude amplification and a faster quantum algorithm for solving systems of linear equations},'' \href{http://arxiv.org/abs/1010.4458}{{\ttfamily arXiv:1010.4458 [quant-ph]}}.

\bibitem{alves2024}
D.~S.~M. Alves, ``Entanglement renormalization for quantum field theories with discrete wavelet transforms,'' \href{http://dx.doi.org/10.1007/JHEP07(2024)081}{{\em Journal of High Energy Physics} {\bfseries 2024} no.~7, (2024) 81}. \url{https://doi.org/10.1007/JHEP07(2024)081}.

\bibitem{PhysRevD.109.016018}
{\v{S}}.~Vedl, D.~J. George, and G.~K. Brennen, ``Scale limited fields and the casimir effect,'' \href{http://dx.doi.org/10.1103/PhysRevD.109.016018}{{\em Phys. Rev. D} {\bfseries 109} (Jan, 2024) 016018}. \url{https://link.aps.org/doi/10.1103/PhysRevD.109.016018}.

\bibitem{Childs:2012gwh}
A.~M. Childs and N.~Wiebe, ``{Hamiltonian Simulation Using Linear Combinations of Unitary Operations},'' \href{http://dx.doi.org/10.26421/QIC12.11-12-1}{{\em Quant. Inf. Comput.} {\bfseries 12} no.~11\&12, (2012) 0901--0924}, \href{http://arxiv.org/abs/1202.5822}{{\ttfamily arXiv:1202.5822 [quant-ph]}}.

\bibitem{nestedboxes}
R.~Babbush, D.~W. Berry, J.~R. McClean, and H.~Neven, ``{Quantum simulation of chemistry with sublinear scaling in basis size},'' \href{http://dx.doi.org/10.1038/s41534-019-0199-y}{{\em npj Quantum Inf.} {\bfseries 5} (2019) 92}.

\bibitem{PDOpaper}
H.~Li, H.~Ni, and L.~Ying, ``{On efficient quantum block encoding of pseudo-differential operators},'' \href{http://dx.doi.org/10.22331/q-2023-06-02-1031}{{\em Quantum} {\bfseries 7} (2023) 1031}, \href{http://arxiv.org/abs/2301.08908}{{\ttfamily arXiv:2301.08908 [quant-ph]}}.

\bibitem{addergidney}
C.~Gidney, ``Halving the cost of quantum addition,'' \href{http://dx.doi.org/10.22331/q-2018-06-18-74}{{\em Quantum} {\bfseries 2} (June, 2018) 74}. \url{http://dx.doi.org/10.22331/q-2018-06-18-74}.

\bibitem{be5}
D.~Camps, L.~Lin, R.~Van~Beeumen, and C.~Yang, ``{Explicit Quantum Circuits for Block Encodings of Certain Sparse Matrices},'' \href{http://dx.doi.org/10.1137/22M1484298}{{\em SIAM J. Matrix Anal. Appl.} {\bfseries 45} no.~1, (2024) 801--827}, \href{http://arxiv.org/abs/2203.10236}{{\ttfamily arXiv:2203.10236 [quant-ph]}}.

\bibitem{be6}
C.~S{\"u}nderhauf, E.~Campbell, and J.~Camps, ``{Block-encoding structured matrices for data input in quantum computing},'' \href{http://dx.doi.org/10.22331/q-2024-01-11-1226}{{\em Quantum} {\bfseries 8} (2024) 1226}, \href{http://arxiv.org/abs/2302.10949}{{\ttfamily arXiv:2302.10949 [quant-ph]}}.

\bibitem{domyuv}
Y.~R. {Sanders}, G.~{Hao Low}, A.~{Scherer}, and D.~W. {Berry}, ``{Black-box quantum state preparation without arithmetic},'' \href{http://dx.doi.org/10.48550/arXiv.1807.03206}{{\em arXiv e-prints} (July, 2018) arXiv:1807.03206}, \href{http://arxiv.org/abs/1807.03206}{{\ttfamily arXiv:1807.03206 [quant-ph]}}.

\bibitem{costella2002newproposalfermiondoubling}
J.~P. Costella, ``A new proposal for the fermion doubling problem. ii. improving the operators for finite lattices,'' 2002.
\newblock \url{https://arxiv.org/abs/hep-lat/0207015}.

\bibitem{Su:2021lut}
Y.~Su, D.~W. Berry, N.~Wiebe, N.~Rubin, and R.~Babbush, ``{Fault-Tolerant Quantum Simulations of Chemistry in First Quantization},'' \href{http://dx.doi.org/10.1103/PRXQuantum.2.040332}{{\em PRX Quantum} {\bfseries 2} no.~4, (2021) 040332}, \href{http://arxiv.org/abs/2105.12767}{{\ttfamily arXiv:2105.12767 [quant-ph]}}.

\bibitem{Lemieux:2024pmt}
J.~Lemieux, M.~Lostaglio, S.~Pallister, W.~Pol, K.~Seetharam, S.~Sim, and B.~{\c{S}}ahino{\u{g}}lu, ``{Quantum sampling algorithms for quantum state preparation and matrix block-encoding},'' \href{http://arxiv.org/abs/2405.11436}{{\ttfamily arXiv:2405.11436 [quant-ph]}}.

\bibitem{llyod}
Q.~T. Nguyen, B.~T. Kiani, and S.~Lloyd, ``{Block-encoding dense and full-rank kernels using hierarchical matrices: applications in quantum numerical linear algebra},'' \href{http://dx.doi.org/10.22331/q-2022-12-13-876}{{\em Quantum} {\bfseries 6} (2022) 876}, \href{http://arxiv.org/abs/2201.11329}{{\ttfamily arXiv:2201.11329 [quant-ph]}}.

\bibitem{qsvt}
A.~Gily\'en, Y.~Su, G.~H. Low, and N.~Wiebe, \href{http://dx.doi.org/10.1145/3313276.3316366}{``{Quantum singular value transformation and beyond: exponential improvements for quantum matrix arithmetics},''} in {\em {51st Annual ACM SIGACT Symposium on Theory of Computing}}.
\newblock 6, 2018.
\newblock \href{http://arxiv.org/abs/1806.01838}{{\ttfamily arXiv:1806.01838 [quant-ph]}}.

\bibitem{be2}
J.~Zylberman, U.~Nzongani, A.~Simonetto, and F.~Debbasch, ``{Efficient Quantum Circuits for Non-Unitary and Unitary Diagonal Operators with Space-Time-Accuracy Trade-Offs},'' \href{http://dx.doi.org/10.1145/3718348}{{\em ACM Trans. Quant. Comput.} {\bfseries 6} no.~2, (2025) 15}, \href{http://arxiv.org/abs/2404.02819}{{\ttfamily arXiv:2404.02819 [quant-ph]}}.

\bibitem{compgid}
T.~Khattar and C.~Gidney, ``{Rise of conditionally clean ancillae for efficient quantum circuit constructions},'' \href{http://dx.doi.org/10.22331/q-2025-05-21-1752}{{\em Quantum} {\bfseries 9} (2025) 1752}, \href{http://arxiv.org/abs/2407.17966}{{\ttfamily arXiv:2407.17966 [quant-ph]}}.

\bibitem{HHL}
A.~W. Harrow, A.~Hassidim, and S.~Lloyd, ``{Quantum Algorithm for Linear Systems of Equations},'' \href{http://dx.doi.org/10.1103/physrevlett.103.150502}{{\em Phys. Rev. Lett.} {\bfseries 103} no.~15, (2009) 150502}, \href{http://arxiv.org/abs/0811.3171}{{\ttfamily arXiv:0811.3171 [quant-ph]}}.

\bibitem{EvansPDE}
L.~C. Evans, {\em Partial differential equations}, vol.~19 of {\em Graduate Studies in Mathematics}.
\newblock American Mathematical Society, Providence, RI, 1998.

\bibitem{cmp1}
Z.~{Wang}, F.~{Assaad}, and M.~{Ulybyshev}, ``{Validity of SLAC fermions for the (1 +1 ) -dimensional helical Luttinger liquid},'' \href{http://dx.doi.org/10.1103/PhysRevB.108.045105}{{\em \prb} {\bfseries 108} no.~4, (July, 2023) 045105}, \href{http://arxiv.org/abs/2211.02960}{{\ttfamily arXiv:2211.02960 [cond-mat.str-el]}}.

\bibitem{cmp2}
C.~W.~J. {Beenakker}, A.~{Don{\'\i}s Vela}, G.~{Lemut}, M.~J. {Pacholski}, and J.~{Tworzyd{\l}o}, ``{Tangent Fermions: Dirac or Majorana Fermions on a Lattice Without Fermion Doubling},'' \href{http://dx.doi.org/10.1002/andp.202300081}{{\em Annalen der Physik} {\bfseries 535} no.~7, (July, 2023) 2300081}, \href{http://arxiv.org/abs/2302.12793}{{\ttfamily arXiv:2302.12793 [cond-mat.mes-hall]}}.

\bibitem{cmp3}
T.~C. Lang and A.~M. L{\"a}uchli, ``{Chiral Heisenberg Gross-Neveu-Yukawa criticality: honeycomb vs. SLAC fermions},'' \href{http://arxiv.org/abs/2503.15000}{{\ttfamily arXiv:2503.15000 [cond-mat.str-el]}}.

\bibitem{Wang:2024rer}
S.~Wang, X.~Li, W.~J.~B. Lee, S.~Deb, E.~Lim, and A.~Chattopadhyay, ``{A Comprehensive Study of Quantum Arithmetic Circuits},'' \href{http://arxiv.org/abs/2406.03867}{{\ttfamily arXiv:2406.03867 [quant-ph]}}.

\end{thebibliography}\endgroup
\newpage
\appendix
\section{The ingredients for the LCU based block-encoding}
In Sections~\ref{sec: slaclap} and \ref{sec: slac1be}, we adapt the nested-box inequality-test framework to prepare the required \(N\) amplitudes with the desired weighting on each basis state. Using the finally prepared state, this appendix briefly describes the components of the $\mathrm{LCU}$ required to construct the block-encoding, as defined analytically in Eq.~\eqref{beslac2} and Eq.~\eqref{beslac1}. These “ingredients” comprise, as usual, the preparation circuit (for the respective block-encoding) and the selector circuit. In addition, we introduce a copy oracle that labels the success and failure branches, ensuring that the desired branch is recovered after uncomputation. We also briefly describe the implementation of the sign oracle used to set the required phase factors in both \(\mathrm{LCU}\) constructions.

\subsection{The preparation circuit: PREP}\label{app: O_prep}
For both block-encoding—the first-order SLAC derivative and the Laplacian—we begin by preparing the desired scaling through an inequality test. 

From Eq.~\eqref{fis}, the result of implementing the state-preparation module underlying the $\mathrm{LCU}$-based block-encoding (for $n\gg0$) of the SLAC Laplacian can be formalised as follows:

\begin{result}\label{lem:prep}
The overall action of the state-preparation circuit $O_{\mathrm{prep}}^{(2)}$ used in the $\mathrm{LCU}$-based block-encoding of the SLAC Laplacian (for $n\gg0$) is:
\begin{widetext}
\begin{align}\label{eq:prep2}
\ket{[0]_{n-1}}\,&\ket{[0]_{n-1}}\ket{\bar0}_{d;a}\,\ket{[0]_{n_{ref}}}\ket{0}_{f}\,\ket{\psi}_{\mathrm{sys}}
\;\xrightarrow{O_{\mathrm{prep}}^{(2)}}\notag\\&\sqrt{\frac{\pi^2}{\pi^2+24}}\ket{\mu=0}\ket{j = 0}\ket{0}_d\ket{0}_a+\sqrt\frac{12}{\pi^2+24}
      \sum_{\mu=0}^{n-2}\ket{\mu}
      \sum_{j \in B_{\mu}}\frac{1}{j}\ket{j}
 \ket{+}_d\ket{1}_a\ket{0}_{f_{1}} + \sqrt{1-p^{(2)}}\ket\omega_{fail}\,.
\end{align}
\end{widetext}
where $\ket{\omega_{\mathrm{fail}}}$ is the failure state resulting from the inequality test. A high-level circuit diagram for this operation is shown in Fig.~\ref{fig:Opreplap}. This circuit uses $\mathcal{O}(n)$ ancilla qubits and has gate complexity $\mathcal{O}(n^2)$. 
\end{result}

As noted earlier in Section~\ref{subsec:prepfo}, the state-preparation routine for the first-order derivative follows the same sequence of steps as in the Laplacian case, except that it excludes the state $j=0$. Using Eq.~\eqref{fis1}, we obtain the following result describing the action of the corresponding preparation circuit for the first-order SLAC derivative.

\begin{result}\label{cor:prep1}
The overall action of the state-preparation circuit $O_{\mathrm{prep}}^{(1)}$ for the $\mathrm{LCU}$-based block-encoding of the first-order SLAC derivative operator is: 
\begin{widetext}
\begin{equation}\label{PREP1}
\begin{aligned}
&\ket{[0]_{n-1}}\,\ket{[0]_{n-1}}\;\ket{0}_d\,\ket{[0]_{n_{ref}}}\ket{0}_{f}\,\ket{\psi}_{\mathrm{sys}}\;\xrightarrow{O_{\mathrm{prep}}^{(1)}}\\&\sqrt{\frac{1}{(n-1)}}\sum_{\mu=0}^{n-2}\ket{\mu}\sum_{j \in B_{\mu}}\sqrt{\frac 1 j}\ket{[j]_{n-1}}
\ket{+}_d\ket{0}_{f}+\sqrt{1-p^{(1)}}\ket\omega_{fail}\ket{1}_{f}\,.
\end{aligned}
\end{equation}
\end{widetext}
where $\ket{\omega_{\mathrm{fail}}}$ denotes the failure state produced by the inequality test.  
The high-level circuit structure is identical to that of Fig.~\ref{fig:Opreplap}, except for the modified first step and the altered inequality test.  
This circuit requires $\mathcal{O}(n)$ ancilla qubits and has gate complexity $\mathcal{O}(n^2)$.
\end{result}

\subsection{The Copy oracle}\label{app:cpy}
As the preparation circuit constructed above had some overall contribution from irrelavant states $\ket\omega_{fail}$, we want to mark the relevant and undesirable branch with a marker qubit labelled $C$. This ensures that while uncomputing the flipped qubit $C$, the undesired branches get discarded.

To do this one needs to apply $\mathrm{CX}$ gate from flag $f$ to marker qubit $C$. Then the failure branch will have a flipped marker qubit i.e. $\ket1_C$. This oracle will not be uncomputed and hence will help us to project out and postselect on the relevant branch. 

After implementing this copy oracle, the prepared state for the SLAC Laplacian operator (from Eq.\eqref{eq:prep2}) now becomes:
\begin{widetext}
\begin{equation}\label{c1}
\sqrt{\frac{\pi^2}{\pi^2+24}}\ket{\mu=0}\ket{j = 0}\ket{0}_d\ket{0}_a\ket{0}_C+\sqrt\frac{12}{\pi^2+24}
      \sum_{\mu=0}^{n-2}\ket{\mu}
      \sum_{j \in B_{\mu}}\frac{1}{j}\ket{j}
 \ket{+}_d\ket{1}_a\ket{0}_{f_{1}}\ket{0}_C + \sqrt{1-p^{(2)}}\ket\omega_{fail}\ket{1}_C\,.
\end{equation}
\end{widetext}
Similarly if we apply the copy oracle as $\mathrm{CX}$ gate from flag $f$ to the marker qubit $C$, then the prepared state for the first order SLAC derivative (from Eq.\eqref{PREP1})  becomes: 
\begin{widetext}
\begin{equation}\label{cpy1}
\begin{aligned}
\sqrt{\frac{1}{(n-1)}}\sum_{\mu=0}^{n-2}\ket{\mu}\sum_{j \in B_{\mu}}\sqrt{\frac 1 j}\ket{[j]_{n-1}}
\ket{+}_d\ket{0}_{f}\ket{0}_C+\sqrt{1-p^{(1)}}\ket\omega_{fail}\ket{1}_C\,.
\end{aligned}
\end{equation}
\end{widetext}

\subsection{The Sign Oracle for SLAC Laplacian}\label{app: O_sgn}
It is clear from Eq.~\eqref{slaccf} that we must also apply alternating signs to the computational basis states $\ket{j}$ to obtain the correct weightings in the SLAC Laplacian. This can be done efficiently using a sign oracle, say $O_{sgn}$, which has the following action:
\begin{equation}
    O_{sgn} \ket{j}\ket{\psi}_{sys} = (-1)^{1+j} \ket{j} \ket{\psi}_{sys}\,.
\end{equation}
This operation can be realised by applying a $Z$ gate to the $LSB$ of the ancilla register ${j}$. This produces the state $(-1)^j\,\ket{j}$, which equals the desired phase factor up to a global factor of $-1$. Therefore, if we apply this sign oracle to the prepared state of Eq.~\eqref{c1}, we obtain:
\begin{widetext}
    \begin{equation}
\label{sgn}
\Biggl\{-\sqrt{\frac{\pi^2}{\pi^2+24}}\ket{\mu=0}\ket{j = 0}\ket 0_d\ket{0}_a+\sqrt{\frac{12}{\pi^2+24}}
      \sum_{\mu=0}^{n-2}\ket{\mu}
      \sum_{j \in B_{\mu}}\frac{(-1)^{1+j}}{j}\ket{j}\ket{+}_d\ket{1}_a\Biggl\}
 \ket{0}_{f_{1}}\ket{0}_C + \sqrt{1-p^{(2)}}\ket\omega_{fail}\ket{1}_C\,.
\end{equation}
\end{widetext}

\subsection{The Phase Oracle for first order SLAC derivative}\label{app: phase}
The first-order SLAC derivative operator also has $j$-dependent phases, as evident in Eq.~\eqref{slac1cf}. These phases can be realised, up to a global factor of $-1$, by applying a Phase oracle $O_p$ which acts on a computational basis state $\ket{j}$ as follows:
\begin{equation}\label{signfo}
O_p^{(1)}\ket{j} =
\begin{cases}
   -\,e^{i\alpha j}\,\ket{j}, & j\neq 0,
\end{cases}
\qquad \alpha = \pi\!\left(1+\tfrac{1}{N}\right).
\end{equation}
Hence, up to an overall global phase of $-1$, the required phase can be applied to each computational basis state $\ket{j}$.

To implement this efficiently, we write the integer $j$ in binary, $j = \sum_{k=0}^{m-1} b_k\,2^k ,\,\, b_k \in \{0,1\},$ so that the basis state $\ket{j}$ corresponds to the bitstring $(b_{m-1}\cdots b_0)$.
With this representation, the desired phase factor can be realised by applying appropriate single-qubit phase rotations to each qubit, conditioned only on the value of the corresponding bit $b_k$ as,
\begin{equation*}
    \;e^{i\alpha j} = \prod_{k=0}^{n-1} e^{i\alpha b_k2^k}\,.
\end{equation*}
That is, to each qubit $k$ we apply a diagonal phase unitary gate $P_k(\phi_k) = \mathrm{diag}(1,e^{i\phi_k})$ in parallel (with $\phi_k=\alpha 2^k$), with overall circuit depth 1. Thus, if the phase oracle is applied to the state obtained after implementing the $\mathrm{COPY}$ step of the $\mathrm{LCU}$, i.e., to the state in Eq.~\eqref{cpy1}, then we obtain:
\begin{widetext}
\begin{equation}\label{phasefo}
\begin{aligned}
&\sqrt{\frac{1}{3\,({n}-1)}}\Biggr\{\sum_{\mu=1}^{n-1}\ket{\mu}
   \sum_{j \in B_{\mu}}\exp\Big(\frac{i\pi j}{N}\Big)(-1)^{j+1}\sum_{j \in B_{\mu}}\sqrt{\frac1j}\ket{[j]_{n-1}}
\;\;\Big(\ket{0}_d+\ket{1}_d\Big)\Biggl\}\ket{0}_{f}\ket{0}_C \;+\; \ket{\omega}_{fail}\ket{1}_C\Biggl\}.
\end{aligned}
\end{equation}
\end{widetext}

\subsection{The SELECT oracle of the $\mathrm{LCU}$}\label{app: sel}
The SLAC Laplacian and the first-order SLAC derivative for a periodic lattice are translationally invariant circulant dense matrices. To construct the block-encoding using the state-preparation and sign oracles defined in the previous subsections, it is necessary to apply appropriate cyclic shift matrices to the states prepared in Eq.~\eqref{sgn} and Eq.~\eqref{phasefo}. 

Let $P^k$ denote the cyclic right-shift matrix of size $N \times N$, with a leading $1$ in column $k$ of the first row, which then shifts to the right in consecutive rows, such that each row is a cyclic right shift of the row above it and the first row has the $1$ in column $k$. We now summarise the action of the $\mathrm{SELECT}$ oracle and its realisation via modular subtraction.
\begin{result}\label{lem:sel}
The $O_{\mathrm{select}}$ oracle acts by applying the corresponding permutation matrix to the computational basis state:
\begin{equation}
\ket{j}\ket{\psi}\xrightarrow{O_\mathrm{select}} \ket{j}\hspace{3pt} P^j\;\ket{\psi},
\end{equation}
where $P^{j}$ denotes the size-$N$ right-shift permutation matrix.  This action can be realised using a modular subtractor, which implements the shift:
\begin{equation}
\ket{j}\ket{\psi}_{sys}\rightarrow\ket{j}\ket{(\psi-j)mod\;N}.
\end{equation}
The shift operation is implemented through controlled subtractor circuits, which in turn can be constructed from controlled adder circuits as suggested in~\cite{Wang:2024rer}.  Each adder can be implemented using the low~$T$-cost design of~\cite{addergidney}, while the controlled (on $n$ qubits) version of this adder incurs a gate complexity of $\mathcal{O}(n^2)$.
\end{result}
The select oracle for the $\mathrm{LCU}$ of both SLAC operators is constructed to apply $P^{j}$ to the first half of the lattice and $P^{N-j}$ to the second half. Moreover, the discardable part of the state is left untouched via an identity operation on the system. Such a SELECT circuit ensures that a translationally invariant circulant matrix is implemented on the system $\psi$. 

Specifically, the SELECT oracle for the SLAC Laplacian has controls on the register $j$, and the $\mathrm{MSB}$ qubit $d=0/1$ selects whether $P^{j}$ or $P^{N-j}$ is applied; this oracle for the Laplacian can then be expressed as:
\begin{equation}
\begin{aligned}
\Biggl\{\, &|j=0\rangle\langle j=0| \otimes |0\rangle_d\langle 0| \otimes|0\rangle_a\langle 0| \otimes P^0_{\mathrm{sys}} + \sum_{j\in B_\mu} |j\rangle\langle j| \\
&\otimes 
|1\rangle_a\langle 1|\Bigl(
   |0\rangle_d\langle 0| \otimes P^j 
 + |1\rangle_d\langle 1| \otimes P^{N-j}
\Bigr)
\Biggr\} \otimes |0\rangle_C\langle 0|.
\end{aligned}
\end{equation}

So, if we apply the SELECT oracle to the state generated in Eq.~\eqref{sgn} for the SLAC Laplacian then, we obtain the following state which implements the permutation matrix for each $j$: 
\begin{widetext}
    \begin{align}
\label{sel2}
\Biggl\{-\sqrt{\frac{\pi^2}{\pi^2+24}}&\ket{\mu=0}\ket{j = 0}\ket{0}_d\ket{0}_a\otimes P^0_{sys}+\sqrt\frac{6}{\pi^2+24}
      \sum_{\mu=0}^{n-2}\ket{\mu}\notag
      \sum_{j \in B_{\mu}}\frac{(-1)^{1+j}}{j}\ket{j}\bigg(\ket{0}_d\otimes P^{j}_{sys}+\ket{1}_d\otimes P^{N-j}_{sys}\bigg)\ket{1}_a\Biggr\}\\&\otimes\ket{0}_{f}\ket{0}_C\ket{\psi}_{sys}+ \sqrt{1-p^{(2)}}\ket\omega_{fail}\ket{1}_C\,.
\end{align}
\end{widetext}
For the SELECT oracle applied to the prepared state in the $\mathrm{LCU}$ of the first-order SLAC derivative operator, we only need to apply the right shift cyclic permutation matrices via adder circuits for the branch $\mu;j\ne0$, using the following oracle:
\begin{align}
&\sum_{j\in B_\mu} \ket{j}\bra{j} \otimes 
     \Big( \ket{0}_{d}\bra{0} \otimes P^{j}_{sys}\nonumber\\& + \ket{1}_{d}\bra{1} \otimes P^{N-j}_{sys}\Big)\Biggr\}\otimes\ket{0}_C\bra{0}\,.
\end{align}
After applying this step using modular subtractor circuits, the state prepared for the $\mathrm{LCU}$ of the first order SLAC derivative operator, given in Eq.~\eqref{signfo} becomes: 
\begin{widetext}
\begin{equation}\label{selfo}
\begin{aligned}
\sqrt{\frac{1}{2\,(n-1)}}\Biggr\{\sum_{\mu=0}^{n-2}\ket{\mu}
   \sum_{j \in B_{\mu}} \exp\Big(\frac{i\pi j}{N}\Big)(-1)^{1+j}\sqrt{\frac1j}\ket{[j]_{n-1}}\Bigg(\ket{0}_d\otimes P^{j}_{sys}+\ket{1}_d\otimes P^{N-j}_{sys}\Bigg)\Biggl\}\ket{0}_{f}\ket{0}_C+\ket\omega_{fail}\ket 1_C\,.
\end{aligned}
\end{equation}
\end{widetext}
The implementation of the SELECT operation is depicted as a black box in Fig.~\ref{fig:lcu-architecture}, consisting of adder circuits with low $T$-cost, as described in~\cite{addergidney}.  
The final step for both LCUs—the first-order SLAC derivative and the Laplacian—is to uncompute the operations in $O_\mathrm{prep}^{(1;2)}$. However, the sign/phase oracle and the copy oracle are not uncomputed in either case. 

As stated in Eq.~\eqref{slac2be} and Eq.~\eqref{slac1be}, the postselected state has all flag qubits in the $\ket{0}$ state, with the desired amplitudes. The rescaling factor of the block-encoding is then proportional to the success probability of implementing the inequality test on the desired branch.

\section{Linear Combination of block-encodings of SLAC derivatives}\label{app:sumlcu}
Suppose we want to implement an $s$ qubit linear combination of SLAC derivative operators of form:
\begin{equation}\label{lcslac}
    A = \sum_{j=0}^{m-1} y_j A_j^{(k)},
\end{equation}
where $A_j^{(k)}$ is a $k^{th}$ order SLAC derivative operator, and $y\in \mathbf{C}^m \;s.t.\;||y||_1\leq \beta$, then we can revisit the Lemma in~\cite{qsvt} by first defining a state preparation pair\\
\begin{df1}\label{df2}
(State preparation pair; see~\cite{qsvt}): If there are pairs of unitaries implementing $P_L\ket{0}^{\otimes b}=\sum_{j=0}^{2^b-1}c_j\ket{j}$ and $P_R\ket{0}^{\otimes b}=\sum_{j=0}^{2^b-1}d_j\ket{j}$ such that $\sum_{j=0}^{m-1} |\beta c_j^* d_j - y_j| \leq \varepsilon_1 \quad \text{and for all } j \in \{m, \dots, 2^b - 1\}, \quad c_j^* d_j = 0.$ then $P_L,P_R$ prepare the amplitudes of Eq.~\eqref{lcslac} rescaled by a factor $\beta$ upto error $\varepsilon_1$ using $b$ ancilla qubits. Hence, $P_L,P_R$ is referred to as a $(\beta,b,\varepsilon_1)$- state preparation pair.
\end{df1}
One would also need to redesign the $\mathrm{SELECT}$ operation of the $\mathrm{LCU}$ in order to implement the linear combination of slac derivative operators, as defined below,
\begin{df1}\label{df3}
(SELECT operator for implementing B.E. of SLAC derivative; see~\cite{qsvt}): We define an $s+a+b$- qubit unitary operator which implements, the $(\alpha, a , \varepsilon_2)$ B.E., $U_j$ of the $k^{th}$ order SLAC derivative operator, $A_j^{(k)}$  (as constructed in Eq.~\eqref{slac1be} and Eq.~\eqref{slac2be}) if $j\leq m$ as:
\begin{equation}
    \mathcal{S} = \sum_{j=0}^{m-1}\ket{j}\bra{j} \otimes U_j + \left( I - \sum_{j=0}^{m-1} \ket{j}\bra{j}\right) \otimes I_a \otimes I_s.
\end{equation}
\end{df1}
Now, we can construct a $\mathrm{LCU}$ using the state preparation pair and the $\mathrm{SELECT}$ operator as follows:
\begin{equation}
\begin{aligned}
&P_L^\dagger \mathcal{S} P_R \, \ket{0}^{\otimes b} \ket{0}^{\otimes a} \ket{\psi}_{\text{sys}} \\
&\quad= P_L^\dagger \mathcal{S} \left( \sum_{j=0}^{2^b - 1} d_j \ket{j} \ket{0}^{\otimes a} \ket{\psi}_{\text{sys}} \right) \\
&\quad= P_L^\dagger \left( 
    \sum_{j=0}^{m-1} d_j \ket{j} U_j \ket{0}^{\otimes a} 
    + \sum_{j=m}^{2^b - 1} d_j \ket{j} \ket{0}^{\otimes a}\right) \ket{\psi}_{\text{sys}} 
 \\
&\quad= P_L^\dagger \left( 
    \sum_{j=0}^{m-1} d_j \ket{j} U_j 
    + \sum_{j=m}^{2^b - 1} d_j \ket{j} 
\right) \ket{0}^{\otimes a} \ket{\psi}_{\text{sys}} \\
&\quad= \sum_{j=0}^{m-1} c_j^* d_j U_j \ket{0}^{\otimes a} \ket{\psi}_{\text{sys}}\,.
\end{aligned}
\end{equation}
Since the state-preparation pair encodes the coefficients $y_j/\beta$ up to error $\varepsilon_1$, and each unitary $U_j$ provides an $(\alpha,a,\varepsilon_2)$ block-encoding of the corresponding SLAC derivative operator $A_j^{(k)}$, the construction described in this appendix yields the following result:
\begin{result}\label{re:lcuolcu}
We can implement an $(\alpha\beta,\, a + b,\, \alpha\varepsilon_1 + \alpha\beta \varepsilon_2)$-block-encoding of $A$ defined in Eq.~\eqref{lcslac} using a single use of $\mathcal{S}$, $P_R$, and $P_L^\dagger$, where each $U_j$ is implemented by calling the efficient block-encodings $U_{\mathrm{slac}}^{(1)}$ and $U_{\mathrm{slac}}^{(2)}$ defined in.~Eq.~\eqref{slac1be} and Eq.~\eqref{slac2be}.
\end{result}

For both the SLAC Laplacian and the first-order SLAC derivative, the block-encodings constructed in Sections~\ref{sec: slaclap} and~\ref{sec: slac1be} achieve precision scaling as $\mathcal{O}(1/N)$. Consequently, the linear combination of these operators in~Eq.~\eqref{lcslac} can be implemented via the $\mathrm{LCU}$ construction with an overall approximation error of order $\varepsilon_2 = \mathcal{O}(1/N)$. Since we are using a linear combination of two different matrices, we take the common rescaling factor to be $\alpha = \max(\alpha_j^k)$, which in this case corresponds to the rescaling factor of the first-order SLAC derivative. Thus, in order to implement this $\mathrm{LCU}$ of block-encoded SLAC matrices, we prepare the coefficients $y_j' = \alpha_j^k y_j / \alpha$ using Result~\ref{re:lcuolcu}. In particular, for $N = 2^n$, combining these block-encodings via Results~\ref{corl1}, \ref{corl2} and~\ref{re:lcuolcu} yields an implementation of the linear combination with gate complexity $\mathcal{O}(n^2)$ and total error $\mathcal{O}(1/N)$.

\section{Ellipticity of the benchmark operators}\label{app:ellip}

In Section~\ref{sec:precon}, we studied the condition numbers of the SLAC-discretised elliptic partial differential operators given in Eq.~\eqref{eq:benchmark_ops}. 
In this appendix, we provide a short justification for why these operators are elliptic. 
We follow the formulation presented in Chapter~6 of Ref.~\cite{EvansPDE}.

Consider the differential equation $\mathcal{L}u=f$, where $U$ is an open, bounded subset of $\mathbb{R}^n$, $u:\bar{U}\to\mathbb{R}$ is the unknown function, and $f:U\to\mathbb{R}$ is given. 
Here $\mathcal{L}$ denotes a second-order partial differential operator, which may be written in divergence form as
\[
\mathcal{L}u
=
-\sum_{i,j=1}^{n}
\left(a^{ij}(x)u_{x_i}\right)_{x_j}
+
\sum_{i=1}^{n} b^i(x)u_{x_i}
+
c(x)u,
\]
or in non-divergence form as
\[
\mathcal{L}u
=
-\sum_{i,j=1}^{n}
a^{ij}(x)u_{x_i x_j}
+
\sum_{i=1}^{n} b^i(x)u_{x_i}
+
c(x)u,
\]
for given coefficient functions $a^{ij}$, $b^i$, and $c$, with $i,j=1,\ldots,n$. 
The ellipticity of such an operator is determined by the highest-order coefficient matrix
\[
A(x)=\bigl(a^{ij}(x)\bigr).
\]
The operator is uniformly elliptic if there exists a constant $\theta>0$ such that
\[
\sum_{i,j=1}^{n} a^{ij}(x)\xi_i\xi_j
\ge
\theta |\xi|^2
\]
for almost every $x\in U$ and all $\xi\in\mathbb{R}^n$. 
Equivalently, the matrix $A(x)$ must be uniformly positive definite.

The benchmark operators considered in this work are one-dimensional. 
Therefore, the coefficient matrix $A(x)$ reduces to a scalar coefficient $a(x)$, and the ellipticity condition becomes simply
\[
a(x)\ge \theta>0
\]
for all $x$ in the domain.

For the operator $-\mathcal{L}_1$ in Eq.~\eqref{eq:benchmark_ops}, the scalar coefficient of the highest-order term is
\[
a_1(x)=1,
\]
which is uniformly positive. 
Thus, $\mathcal{L}_1$ is elliptic up to an overall sign convention.

For the operator $\mathcal{L}_2$, the principal coefficient is
\[
a_2(x)=c_a(x)=\cosh(x/4).
\]
Since $\cosh(x/4)\ge 1$, this coefficient is uniformly positive, and hence $\mathcal{L}_2$ is uniformly elliptic.

Similarly, the operator $\mathcal{L}_3$ has principal coefficient
\[
a_3(x)=1,
\]
and is therefore uniformly elliptic.

Finally, for the operator $\mathcal{L}_4$, the principal coefficient is
\[
a_4(x)=a_\epsilon(x)=1+\epsilon\cos(2\pi x).
\]
For $|\epsilon|<1$, we have
\[
a_4(x)\ge 1-|\epsilon|>0.
\]
Hence $\mathcal{L}_4$ is also uniformly elliptic.

\section{Projecting $\ket{b}$ outside the nullspace of preconditioned SLAC derivative} \label{app:nullspaceIR}
Here we discuss the projection of the state $\ket{b}$ outside the nullspace of the derivative operator which as discussed in Section~\ref{sec:precon} is $\ket{\psi_{null}}$, i.e. we want to implement the following operation, $\ket{b}_s\rightarrow (1-\ket{\psi_{null}}_s\bra{\psi_{null}})\;\ket{b}_s$. To do this we use $n+1$ ancilla qubits in registers $R,\ F$ initialised to the state $\ket{[0]_n}_{R}\;\ket{0}_F$ and first prepare $\ket{\psi_{null}}_{R}$ then perform a control-swap test on the system and reference register and in the end postselect on a particular state of the flag and reference qubits.

We start by preparing the null state in the reference register by applying Hadamard on the first qubit,
    \begin{equation*}
        \ket{b}_s\ket{[0]_n}_{R}\;\ket{0}_F \xrightarrow{H^0_{R}} \ket{b}_s\ket{\psi_{null}}_{R}\;\ket{0}_F
     \end{equation*}

Next, we implement the controlled swap test defined by the operation:
\begin{equation}
\begin{aligned}
V \;&=\;\bigl( H_{F}\!\otimes I_{SR} \bigr).\;\Bigl(\lvert 0\rangle\!\langle 0\rvert_{F}\!\otimes I_{SR}\;+\;\lvert 1\rangle\!\langle 1\rvert_{F}\!\otimes \operatorname{SWAP}_{SR}\Bigr).\;\\&\quad\quad\bigl( H_{F}\!\otimes I_{SR} \bigr)
\end{aligned}
\end{equation}
So, the ancilla and system registers now become:
\begin{align}\label{applyv}
\ket{b}_s\ket{\psi_{\text{null}}}_{R}\;\ket{0}_F
\xrightarrow{V} &\frac{1}{2} \Bigl(
   \ket{0}_{F}
   \bigl(
\ket{b}_{S}\,\ket{\psi_{\text{null}}}_{R}
      +\ket{\psi_{\text{null}}}_{S}\,\ket{b}_{R}
   \bigr) \notag \\
   &+
   \ket{1}_{F}
   \bigl(
       \ket{b}_{S}\,\ket{\psi_{\text{null}}}_{R}
      -\ket{\psi_{\text{null}}}_{S}\,\ket{b}_{R}
   \bigr)
\Bigr).
\end{align}

One can clearly see that the state in the above equation on $\ket{1}_F$ is actually the vector $b$, projected in the space orthogonal to the null vector. 
Now, if we post-select on flag being $\ket{1}_F$, then the state on the system, reference registers becomes:
\begin{equation*}
  \frac{1}{2}\Bigl(
        \ket{b}_{S}\,\ket{\psi_{\text{null}}}_{R}
      - \ket{\psi_{\text{null}}}_{S}\,\ket{b}_{R}
  \Bigr).
\end{equation*}
Since, $\ket{\psi_{\text{null}}}_{S}$ is an eigenstate of the SLAC derivative operator, we can re-express the state, $\ket{b}
  = \alpha\,\ket{\psi_{\text{null}}}
  + \beta\,\ket{\psi_{\perp}},
  \qquad
  \braket{\psi_{\text{null}} | \psi_{\perp}} = 0.$
Then the state after postselecting on flag in $1$, can be reexpressed as:
\begin{equation*}
  \frac{\beta}{2}
  \left(
    \ket{\psi_{\perp}}_{S}\,\ket{\psi_{\text{null}}}_{R}
    - \ket{\psi_{\text{null}}}_{S}\,\ket{\psi_{\perp}}_{R}
  \right)\,.
\end{equation*}
Now, if we trace out the register $R$, then we have the state:
\begin{equation*}
    \frac{\beta}{2}\ket{\psi_{\perp}} = 
  \frac{1}{2}
  \left(
    I - \ket{\psi_{\text{null}}}_s\bra{\psi_{\text{null}}}
  \right)
  \ket{b}_S.
\end{equation*}

Hence, by using $n+1$ ancillas, we can efficiently project the input state $\ket{b}_s$ out of the nullspace of the SLAC derivative operator, by using controlled swap test and postselecting on flag in 1 and tracing out the reference register. 

\section{SLAC derivative as a psuedo differential operator}\label{app:PDO}
A general pseudodifferential operator (PDO) $A$ acting on a periodic function $f$ defined on a discrete lattice $\Omega = [0,1]^d$ with resolution $P = 2^p$ is given by
\begin{equation}\label{PDO}
A f(\mathbf{x}) = \sum_{\boldsymbol{\xi} \in \{-P/2,\,\dots,\,P/2 - 1\}^d} e^{2\pi i \mathbf{x}\cdot\boldsymbol{\xi}/P}\,a\left(\frac{\mathbf{x}}{P},\boldsymbol{\xi}\right)\hat{f}(\boldsymbol{\xi}),
\end{equation}
where $\hat{f}(\boldsymbol{\xi})$ denotes the discrete Fourier transform of $f(\mathbf{x})$, and the function $a(\mathbf{x},\boldsymbol{\xi})$ is referred to as the symbol of the operator $A$. The symbol $a(\mathbf{x},\boldsymbol{\xi})$ characterizes the action of $A$ in momentum space and can generally depend both on the spatial coordinate $\mathbf{x}$ and the momentum variable $\boldsymbol{\xi}$.

In the special case where the symbol is independent of the spatial coordinates, i.e., $a(\mathbf{x},\boldsymbol{\xi})=a(\boldsymbol{\xi})$, the operator $A$ becomes a translation-invariant pseudodifferential operator, also known as a Fourier multiplier. 

\noindent{Connection to SLAC Derivatives:}  
The first-order and second-order SLAC derivative operators considered in this work (Eq.~\eqref{slaccf},Eq.~\eqref{slac1cf} are explicit examples of translation-invariant PDOs. 

Specifically, the first-order SLAC derivative has the symbol $a(\xi)=i\xi$, while the SLAC Laplacian has the symbol $a(\xi)=-\xi^2$. Both operators thus fit naturally within the general PDO framework, making the theoretical results regarding block-encodings of generic PDOs directly applicable to these SLAC derivative operators. In this section we give a theoretical comparison of the block-encodings constructed using PDOs in~\cite{PDOpaper} and the explicit optimal block-encodings constructed in this work using inequality tests. For more details regarding the block-encoding of such PDOs, we refer the reader to~\cite{PDOpaper}. 

Theorem 5 in~\cite{PDOpaper}, states the complexity for block-encoding $\left(2^{pd/2} C_a,\, \mathcal{O} \left( \mathrm{poly}(pd) + \mathrm{polylog} \left( \frac{1}{\varepsilon} \right) \right),\, \varepsilon \right)$ a PDO  with any generic symbol $a(\mathbf{x}, \boldsymbol{\xi})$ as $\mathcal{O} \left( \mathrm{poly}(pd) + \mathrm{polylog} \left( \frac{1}{\varepsilon} \right) \right),$ where $C_a \geq \sup |a|$ is a constant independent of $p$, $d$, and $\varepsilon$. The block-encoding of the first-order SLAC derivative $\widetilde{\Delta}^{(1)}_{\mathrm{slac}}$ defined in Proposition~\ref{corl2} and the SLAC Laplacian operator $\widetilde{\Delta}^{(2)}_{\mathrm{slac}}$ defined in Proposition~\ref{corl1} serve as concrete instantiations of Theorem 5 in~\cite{PDOpaper} for the specific symbol $a(\xi) = i\xi$ and $a(\xi) = -\xi^2$ respectively. 


In particular, the construction shown in Eq.~\eqref{slac1be} achieves a $(p,\mathcal{O}(n), \varepsilon)$-block-encoding of $\widetilde{\Delta}^{(1)}_{\mathrm{slac}}$ where $p=\mathcal O(n)$. While the construction shown in Eq.~\eqref{slac2be} achieves an explicit $(q,\mathcal{O}(n), \varepsilon)$-block-encoding of $\widetilde{\Delta}^{(2)}$ where $q=\mathcal{O}(1)$, constructed using $\mathrm{LCU}$ over circulant permutations with state preparation via inequality tests. 

Compared to the generic normalization factor $\alpha = 2^{pd/2} C_a$ in Theorem 5 of~\cite{PDOpaper}, which simplifies to $\mathcal{O}(\sqrt{N})$ in the one-dimensional case, the explicit construction of our work achieves a favorable normalization of $p$ and $q$ and matches the polylogarithmic gate complexity, while avoiding the exponential amplitude overhead inherent to the generic case. 

As mentioned in~\cite{PDOpaper}, one can attempt to improve the exponential amplitude overhead if the symbol is separable by revisiting Theorem 6 of~\cite{PDOpaper}. Theorem 6 states that
if $a(\mathbf{x}, \boldsymbol{\xi}) = \alpha(\mathbf{x})\beta(\boldsymbol{\xi})$ is a separable symbol then the discretized PDO with separable symbols can be block encoded as $(C_\alpha C_\beta,\, \mathcal{O}(\mathrm{poly}(pd) + \mathrm{polylog}(1/\varepsilon)),\, \varepsilon)$ with gate complexity $\mathcal{O}(\mathrm{poly}(pd) + \mathrm{polylog}(1/\varepsilon))$ where $C_\alpha, C_\beta > 0$ are constants such that $C_\alpha \geq \sup|\alpha|$ and $C_\beta \geq \sup|\beta|$.

\paragraph{Comparison with Theorem 6.}
The SLAC derivative operators $\widetilde{\Delta}^{(1)}$, $\widetilde{\Delta}^{(2)}$ defined in Proposition~\ref{corl2} and Proposition~\ref{corl1} respectively are translation-invariant pseudodifferential operators and can be written in the form $a(x, \xi) = \alpha(x)\beta(\xi)$ with $\alpha(x) = 1$ and $\beta(\xi) = -\xi$ or $\beta(\xi) = -\xi^2$. This satisfies the assumptions of Theorem 6 in~\cite{PDOpaper}, which guarantees that any such separable-symbol PDO can be block encoded with normalization factor $\alpha = C_\alpha C_\beta$ and gate complexity $\mathcal{O} \left( \mathrm{poly}(pd) + \mathrm{polylog} \left( \frac{1}{\varepsilon}\right)\right)$. Although both Theorem 6 and Theorem 5 (from~\cite{PDOpaper}) are applicable to the SLAC derivative operators $\widetilde{\Delta}^{(1;2)}$, Theorem 6 is used here because it provides a tighter normalization and avoids the exponential factor $2^{pd/2}$ appearing in the generic bound from Theorem 5.

Based on the discussion in this appendix, we make the following comparison between the block-encodings constructed in this work and those obtained from the framework of~\cite{PDOpaper} when applied to SLAC operators:
\newtheorem*{remark}{Remark}
\begin{remark}[Comparison]
Although both SLAC derivative operators fall within the PDO block-encoding framework, the explicit constructions provided in Sections~\ref{subsec:analyticallapbe} and \ref{SLAC1LCU}:
\begin{itemize}
    \item Avoid oracle-based assumptions in Theorems 5 and 6 of~\cite{PDOpaper}.
    \item Exploit circulant structure to realize efficient $\mathrm{LCU}$ circuits using modular quantum adder circuits,
    \item Achieve $\mathcal{O}(\mathrm{polylog}(N/\varepsilon))$ scaling using inequality-based amplitude preparation and modular arithmetic.
    \item Achieve optimality as the normalisation factor in B.E. is $\mathcal{O}(1)$ for the SLAC Laplacian.
    \item The block-encoding of first order slac derivative Eq.~\eqref{beslac1} satisfies the conditions of being a good B.E. with a scaling factor of $\mathcal{O}(n)$.
\end{itemize}
\end{remark}
Hence, the constructions in Proposition~\ref{corl1} and~\ref{corl2} offer fully explicit, implementable realizations with optimal and good asymptotic performance, respectively.

\end{document}